\def\eqref#1{equation~\ref{#1}}
\def\1{\bm{1}}
\def\eps{{\epsilon}}
\DeclareMathAlphabet{\mathsfit}{\encodingdefault}{\sfdefault}{m}{sl}
\SetMathAlphabet{\mathsfit}{bold}{\encodingdefault}{\sfdefault}{bx}{n}
\definecolor{lightblue}{RGB}{173,216,230}
\definecolor{midblue}{RGB}{0,0,139}
\definecolor{deepblue}{RGB}{25,25,112}
\definecolor{myRed}{rgb}{0.9,0.1,0.1}
\definecolor{myBlue}{rgb}{0.4,0.7,0.9}
\definecolor{myPurple}{rgb}{0.7,0.4,0.9}
\definecolor{myOrange}{rgb}{0.9,0.7,0.4}
\definecolor{myGreen}{rgb}{0.4,0.9,0.4}
\definecolor{myYellow}{rgb}{1,0.8,0}
\newcommand\DoToC{%

  \startcontents
  \printcontents{}{1}{\hrule\textbf{\begin{center}
       Appendix Table of Contents
  \end{center}}\vskip3pt\hrule\vskip5pt}
  \vskip3pt\hrule\vskip5pt
}
\newtheorem{theorem}{Theorem}
\newtheorem*{theorem*}{Theorem}
\newtheorem{corollary}{Corollary}
\newtheorem{claim}{Claim}
\newtheorem{remark}{Remark}
\title{Enhancing the Scalability and Applicability of Kohn-Sham Hamiltonians for  Molecular Systems}
\author{%
  \textbf{Yunyang Li}$^{\spadesuit}$\thanks{Equal contribution.} \quad
  \textbf{Zaishuo Xia}$^{\heartsuit}$\footnotemark[1] \quad
  \textbf{Lin Huang}$^{\clubsuit}$\textsuperscript{\Letter}\footnotemark[1] \quad
  Xinran Wei$^{\clubsuit}$ \quad 
  \textbf{Han Yang}$^{\clubsuit}$ \quad
  \textbf{Sam Harshe}$^{\spadesuit}$ \quad \\
  \textbf{Zun Wang}$^{\clubsuit}$ \quad
  \textbf{Chang Liu}$^{\clubsuit}$ \quad 
  \textbf{Jia Zhang}$^{\clubsuit}$\textsuperscript{\Letter} \quad
  \textbf{Bin Shao}$^{\clubsuit}$ \quad
  \textbf{Mark B. Gerstein}$^{\spadesuit}$\textsuperscript{\Letter} \\
  \textsuperscript{\Letter}\texttt{\{huang.lin, jia.zhang\}@microsoft.com} \quad \texttt{mark@gersteinlab.org} \\
  $^{\spadesuit}$Yale University \quad
  $^{\heartsuit}$UC Davis \quad
  $^{\clubsuit}$MSR AI4Science
}
\begin{document}

\maketitle

\begin{abstract}

Density Functional Theory (DFT) is a pivotal method within quantum chemistry and materials science, with its core involving the construction and solution of the Kohn-Sham Hamiltonian. Despite its importance, the application of DFT is frequently limited by the substantial computational resources required to construct the Kohn-Sham Hamiltonian. In response to these limitations, current research has employed deep-learning models to efficiently predict molecular and solid Hamiltonians, with roto-translational symmetries encoded in their neural networks. However, the scalability of prior models may be problematic when applied to large molecules, resulting in non-physical predictions of ground-state properties. In this study, we generate a substantially larger training set (PubChemQH) than used previously and use it to create a scalable model for DFT calculations with physical accuracy. For our model, we introduce a loss function derived from physical principles, which we call \underline{W}avefunction \underline{A}lignment \underline{Loss} (WALoss). WALoss involves performing a basis change on the predicted Hamiltonian to align it with the observed one; thus, the resulting differences can serve as a surrogate for orbital energy differences, allowing models to make better predictions for molecular orbitals and total energies than previously possible. WALoss also substantially accelerates self-consistent-field (SCF) DFT calculations. Here, we show it achieves a reduction in total energy prediction error by a factor of 1347 and an SCF calculation speed-up by a factor of 18\%. These substantial improvements set new benchmarks for achieving accurate and applicable predictions in larger molecular systems.

\end{abstract}

\section{Introduction}

Density functional theory (DFT)~\citep{kohn1965self,hohenberg1964inhomogeneous,martin2020electronic} has been widely used in physics~\citep{argaman2000density,RevModPhys.87.897, van2014density}, chemistry~\citep{levine2009quantum, van2014density}, and materials science~\citep{march1999electron, neugebauer2013density} to study the electronic properties of  molecules and solids. This methodology is particularly valued for its balanced blend of computational efficiency and accuracy, rendering it a versatile choice for investigating electronic structure~\citep{kohn1996density,parr1995density}, spectroscopy~\citep{neese2009prediction,orio2009density}, lattice dynamics~\citep{dal1997density,wang2021dfttk}, transport properties~\citep{bhamu2018revealing}, and more. The most critical step in applying DFT to a molecule is constructing the Kohn-Sham Hamiltonian, which consists of the kinetic operator, the external potential, the Coulomb potential (also known as the Hartree potential), and the exchange-correlation potential~\citep{kohn1965self,hohenberg1964inhomogeneous,martin2020electronic}.


The Hamiltonian matrix
contains crucial information about molecular systems and their quantum states~\citep{kohn1965self,hohenberg1964inhomogeneous,yu2023efficient, zhang2024selfconsistency}. This matrix facilitates the extraction of various properties, including the highest occupied molecular orbital (HOMO) and lowest unoccupied molecular orbital (LUMO) energies, the HOMO-LUMO gap, total energy, and spectral characteristics ~\citep{eisberg1985quantum,zhang2024selfconsistency}. These properties are vital for analyzing conformational energies~\citep{st1995calculation}, reaction pathways~\citep{farberow2014density}, and vibrational frequencies~\citep{watson2002density}. However, the efficiency of DFT is constrained by the self-consistent field (SCF) iterations required to solve the Kohn-Sham equations and achieve consistent charge density. These iterations scale as $O(N^3T) \sim O(N^4T)$ \citep{tirado2008performance,yu2024qh9}, where $N$ represents the number of electrons and $T$ denotes the number of SCF cycles, making them particularly resource-intensive for large systems.

Consequently, this computational intensity underscores the critical need to develop more efficient methodologies to determining the Hamiltonian without relying on SCF iterations. This challenge has catalyzed interest in leveraging deep learning to predict Hamiltonians directly from atomic configurations while adhering to the inherent symmetries of molecular systems~\citep{unke2021se3,kochkov2021learning,yu2023efficient,li2022marriage,schutt2019machine,gastegger2020emergence,hermann2020deep,yin2024harmonizing,zhang2024selfconsistency,luo2025efficient}. Notably, the Hamiltonian matrix is subject to unitary transformations under molecular rotations due to its spherical harmonics component. To tackle these transformations, researchers have developed SE(3)-equivariant neural networks such as PhisNet \citep{unke2021seequivariant} and QHNet \citep{yu2023efficient}. These networks employ high-order spherical tensors and the Clebsch-Gordon tensor product to construct the predicted Hamiltonians.

However, current implementations of SE(3)-equivariant neural networks face considerable scalability challenges when applied to large molecular structures. Notably, prevailing techniques for predicting the Hamiltonian matrix predominantly utilize mean absolute error (MAE) or mean square error (MSE) as the loss function \citep{unke2021seequivariant,yu2023efficient}. We contend that these \textit{elementwise losses alone are insufficient} for accurately learning Hamiltonians in large systems. To underscore this point, we introduce the PubChemQH dataset, which comprises molecular Hamiltonians for structures with atom counts ranging from 40 to 100. In contrast, the previously curated dataset \citep{yu2024qh9} is limited to molecules with no more than 31 atoms. Figure \ref{fig:SAD} demonstrates a phenomenon enabled by the introduction of the PubChemQH dataset, which we term \textit{\underline{S}caling-Induced MAE-\underline{A}pplicability \underline{D}ivergence} (SAD).  The applicability of the Hamiltonian is assessed by evaluating the system energy error derived from \texttt{pyscf} \citep{sun2020recent,sun2015libcint,sun2018pyscf}. While small molecules yield system energies close to the ground truth with relative Hamiltonian MAEs up to 200\%, larger molecules exhibit profound inaccuracies. Energy errors escalate to as much as 1,000,000 kcal/mol at only 0.01\% relative MAE, rendering the Hamiltonians inapplicable. This striking disparity emphasizes the inadequacy of MAE for large systems and highlights the pressing need for new methodologies to enhance the scalability\footnote{We define the scalability as the model's accuracy under large molecules.} and applicability of predicted Hamiltonians.

To address these limitations, this work aims to enhance Hamiltonian learning for large systems by utilizing wavefunctions and their corresponding energies as surrogates to improve Hamiltonian applicability. The wavefunction is crucial for Hamiltonian prediction as it encapsulates the quantum state of a system, enabling the validation of Hamiltonians based on their ability to accurately reflect the system's energy and dynamics. However, learning the electron wavefunction is non-trivial; 
inaccuracies in the machine learning-based Hamiltonian may lead to significant error in both electronic structure and wavefunctions.
This challenge motivated us to introduce the \textit{\underline{W}avefunction \underline{A}lignment \underline{Loss}} function (WALoss), which aligns the eigenspaces of predicted and ground-truth Hamiltonians without explicit backpropagation through eigensolvers. Additionally, to improve the scalability of Hamiltonian learning, we present \textit{WANet}, a modernized architecture for Hamiltonian prediction that leverages eSCN~\citep{passaro2023reducing} convolution and a sparse mixture of pair experts. Our pipeline and main contributions are summarized in Figure \ref{fig:pipeline}.


\begin{figure}[t]
\vspace{-14mm}
\begin{center}
\includegraphics[width=0.95\columnwidth]{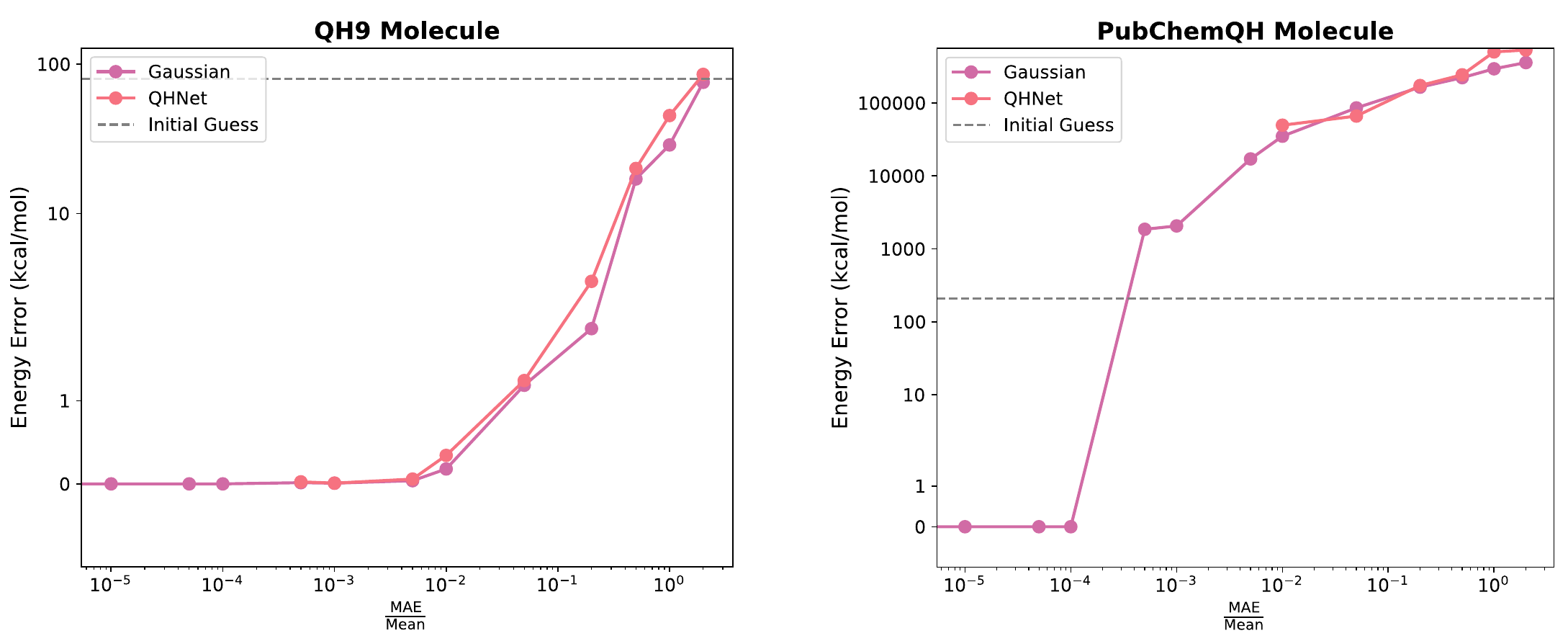}

\caption{
\small{Visualization of the SAD phenomenon.
The y-axis (in symmetrical log scale~\citep{webber2012bi}) represents the system energy error derived from the perturbed Hamiltonian, while the x-axis shows the relative MAE, defined as $\frac{\mathrm{MAE}(\hat{\mathbf{H}}, \mathbf{H}^\star)}{\mathrm{Mean}(\mathbf{H}^\star)}$, where $\mathbf{H}^\star$ represents the ground-truth Hamiltonian and $\hat{\mathbf{H}}$ denotes the predicted or perturbed Hamiltonian. The relative MAE is induced by model learning errors or Gaussian perturbation. The Gaussian perturbation ensures that the perturbed matrix remains Hermitian.  The initial guess line is derived from \texttt{minao}~\citep{almlof1982principles,van2006starting}. Current state-of-the-art models, such as QHNet, achieve a relative MAE of up to $10^{-2}$ on PubChemQH molecules. For small molecules, a $10^{-2}$ relative MAE is sufficient for accurate system energy predictions (left 
panel), but this accuracy does not extend to large systems (right panel). }}
\label{fig:SAD}
\end{center}
\vspace{-10mm}
\end{figure} 

\section{Background}

In the framework of Density Functional Theory (DFT), a molecular system is defined by its nuclear configuration \( \mathcal{M} := \left\{ \mathbf{Z}, \mathbf{R} \right\}\), where \( \mathbf{Z} \) represents the atomic numbers of the nuclei and \( \mathbf{R} \) their positions within the molecule. DFT focuses on determining the ground state of a system consisting of \( N \) electrons by minimizing the total electronic energy with respect to the electron density \( \rho(\mathbf{r}) \). Here, \( \rho(\mathbf{r}) \) is a functional of the set of \( N \) one-electron orbitals \( \{\psi_i(\mathbf{r})\}_{i=1}^N \), where \( \mathbf{r} \in \mathbb{R}^3 \) specifies the spatial coordinates of an electron.

For computational efficiency, these orbitals are represented using a basis set \( \{\phi_\alpha(\mathbf{r})\}_{\alpha=1}^B \) depending on the molecular geometry, \(B\) denotes the number of basis and varies for different basis sets. The expansion coefficients of the orbitals are organized in a matrix \( \mathbf{C} \in \mathbb{R}^{B \times N} \), allowing each orbital \( \psi_i(\mathbf{r}) \) to be expressed as:
\(
\psi_i(\mathbf{r}) = \sum_{\alpha=1}^B C_{\alpha i}\phi_\alpha(\mathbf{r}).
\)
DFT seeks the minimal electronic energy by solving for the optimal coefficient matrix \( \mathbf{C} \) via the Kohn-Sham equations, represented as
\(
\mathbf{H}(\mathbf{C})\mathbf{C} = \mathbf{S}\mathbf{C}\boldsymbol{\epsilon},
\)
where \( \mathbf{H}(\mathbf{C})\in \mathbb{R}^{B \times B} \) denotes the Hamiltonian matrix. Notice that $\mathbf{H}$ is a function of $\mathbf{C}$, and can be computed using \textit{density fitting} with a complexity of $O(B^3)$. \( \mathbf{S} \in \mathbb{R}^{B\times B} \) is the overlap matrix with elements \( S_{\alpha\beta} := \int  \phi^\dagger_\alpha(\mathbf{r}) \phi_\beta(\mathbf{r}) \mathrm{d}\mathbf{r} \), where $\phi^\dagger$ denotes the complex conjugate of \(\phi\). \( \boldsymbol{\epsilon} \) represents a diagonal matrix containing the orbital energies.
This forms a generalized eigenvalue problem, where \( \mathbf{C} \) comprises the eigenvectors and the diagonal elements of \( \boldsymbol{\epsilon} \) are the eigenvalues. However, the solution to this problem is complicated by the interdependence between \( \mathbf{H}(\mathbf{C}) \) and \( \mathbf{C} \).
To resolve this, traditional DFT employs the self-consistent field (SCF) method, an iterative process that refines the coefficients \( \mathbf{C}^{(k)} \) through successive approximations of the Hamiltonian matrix \( \mathbf{H}^{(k)} \). Each iteration commences by  computing \(\mathbf{H}^{(k)}\) leveraging \(\mathbf{C}^{(k-1)}\) and solves a new eigenvalue problem:
\[
\mathbf{H}^{(k)}\left(\mathbf{C}^{(k-1)}\right)\mathbf{C}^{(k)} = \mathbf{S}\mathbf{C}^{(k)}\boldsymbol{\epsilon}^{(k)},
\]
converging to a stable Hamiltonian \( \mathbf{H}^\star \) and its corresponding eigenvectors \( \mathbf{C}^\star \),  which define the electron density and other molecular properties derived from the Kohn-Sham equations.

\textbf{Problem Formulation} \quad The objective of Hamiltonian prediction is to obviate the need for self-consistent field (SCF) iteration by directly estimating the target Hamiltonian \( \mathbf{H}^\star \) from a given molecular structure \( \mathcal{M} \). To achieve this, one could parameterize a machine learning model \( \hat{\mathbf{H}}_\theta (\mathcal{M}) \).  The learning process is guided by an optimization process defined as:

\begin{equation}
 \theta^\star = \arg\min_{\theta} \frac{1}{|\mathcal{D}|} \sum_{(\mathcal{M},\mathbf{H}^\star_\mathcal{M}) \in \mathcal{D}}  \mathrm{dist}\left(\hat{\mathbf{H}}_\theta (\mathcal{M}) , \mathbf{H}^\star_\mathcal{M} \right),
\end{equation}

where \( |\mathcal{D}| \) denotes the cardinality of dataset \( \mathcal{D} \), and \(\mathrm{dist}(\cdot, \cdot)\) is a predefined distance metric.

\textbf{SE(3) Equivariant Networks} \quad
SE(3)-equivariant neural networks incorporate strong prior knowledge through equivariance. These networks utilize equivariant irreducible representation (irreps) features built from vector spaces of irreducible representations to achieve 3D rotation equivariance. The vector spaces are \((2\ell+1)\)-dimensional, where degree \(\ell \in \mathbb{N}\) represents the angular frequency of the vectors. Higher \(\ell\) values are critical for tasks sensitive to angular information, such as predicting the Hamiltonian matrix. Vectors of degree \(\ell\), referred to as type-\(\ell\) vectors, are rotated using Wigner-D matrices \(D^{(\ell)} \in \mathbb{R}^{(2\ell+1) \times (2\ell+1)}\) when rotating coordinate systems. Eigenfunctions  of rotation in \(\mathbb{R}^3\) can be projected into type-\(\ell\) vectors using the real spherical harmonics \(Y^{(\ell)} : \mathbb{S}^2 \to \mathbb{R}^{2\ell+1}\), where \(\mathbb{S}^2 = \{\hat{\mathbf{r}} \in \mathbb{R}^3 : \Vert\hat{\mathbf{r}}\Vert = 1\}\) 
is the unit sphere. Equivariant GNNs update irreps features through message passing of transformed irreps features between nodes, using tensor products:
\(
(u^{\ell_1} \otimes v^{\ell_2})^{\ell_3}_{m_3} = \sum_{m_1=-\ell_1}^{\ell_1} \sum_{m_2=-\ell_2}^{\ell_2} C^{(\ell_3,m_3)}_{(\ell_1,m_1),(\ell_2,m_2)} u^{\ell_1}_{m_1} v^{\ell_2}_{m_2},
\) where \(C\) denotes the Clebsch-Gordan coefficient, \(\ell_3\) satisfies \(|\ell_1 - \ell_2| \leq \ell_3 \leq \ell_1 + \ell_2\). Note that \(m\) denotes the \(m\)-th element in the irreducible representation with \(-\ell \leq m \leq \ell\) and \(m \in \mathbb{N}\). In equivariant graph neural networks, the message function \(m_{ts}\) from source node \(s\) to target node \(t\) is calculated using $\mathrm{SO}(3)$ convolution. The \(\ell_0\)-th degree of \(m_{ts}\) can be expressed as \(
m^{(\ell_o)}_{ts} = \sum_{\ell_i, \ell_f} W_{\ell_i, \ell_f, \ell_o} \left( x^{(\ell_i)}_s \otimes Y^{(\ell_f)}(\hat{\mathbf{r}}_{ts}) \right)^{\ell_o},
\)
where \(W_{\ell_i, \ell_f, \ell_o} \) are the weight vectors, $x_s$ represents the irreducible representation of the source node $s$ and $\hat{\mathbf{r}}_{ts} := \frac{\mathbf{r}_{ts}}{\|\mathbf{r}_{ts}\|_2}$.

\begin{figure}[t]
\vspace{-15mm}
\begin{center}
\includegraphics[width=\columnwidth]{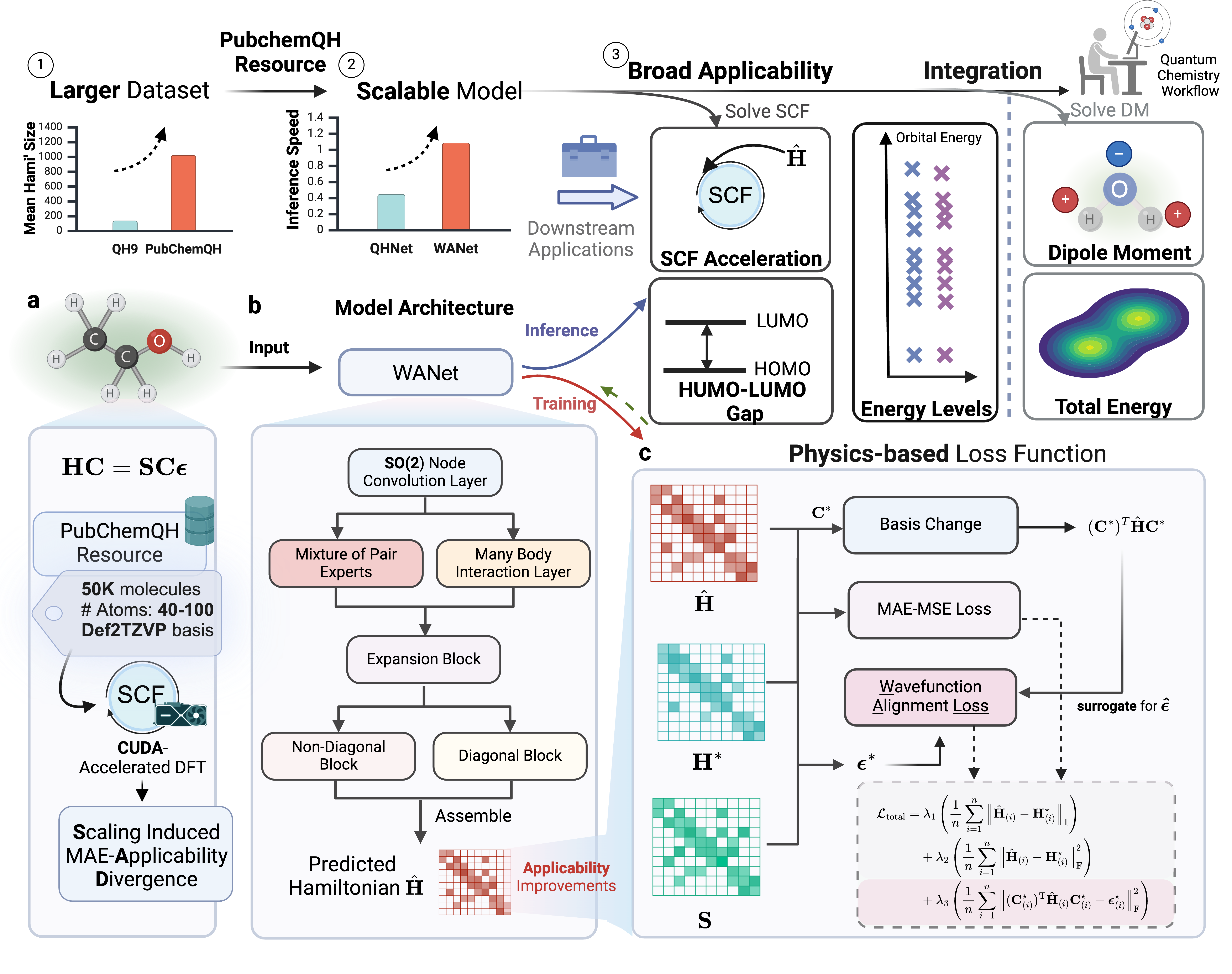}

\caption{ \small{
(a) We introduce PubChemQH, a new resource for Hamiltonian learning that facilitates the exploration of the scalability challenge known as SAD.
(b) We present WANet, a modern architecture designed for accurate Hamiltonian prediction. WANet incorporates $\mathrm{SO}(2)$ convolution, a mixture of pair experts, and a many-body interaction layer. The mixture of pair experts constructs the non-diagonal block, while the many-body interaction layer constructs the diagonal block.
(c) Our loss module, WALoss, performs a basis transformation of the predicted Hamiltonian using the ground-truth Hamiltonian and the overlap matrix. This enhancement aims to improve the applicability of the predicted Hamiltonian in real-world scenarios. Our final loss function combines MAE-MSE loss with WALoss.}
}
\label{fig:pipeline}
\end{center}
\vspace{-10mm}
\end{figure} 

\section{Wavefunction-Alignment Loss}

The applicability of the predicted Hamiltonian \(\hat{\mathbf{H}}\) depends significantly on the accuracy of its eigenvalues (orbital energies) and eigenvectors (basis coefficients). To assess the alignment between \(\hat{\mathbf{H}}\) and the actual Hamiltonian \(\mathbf{H}^\star\) with respect to their eigenspaces, we present the following theorem:

\begin{theorem}
    Let \(\mathbf{H}\) and \(\hat{\mathbf{H}}\) represent Hamiltonian matrices, and \(\mathbf{S}\) the overlap matrix. Define the perturbation matrix as \(\Delta \mathbf{H} := \hat{\mathbf{H}} - \mathbf{H}\). Let \(\lambda_i(\mathbf{H, S})\) and \(\lambda_i(\hat{\mathbf{H}}, \mathbf{S})\) be the \(i^{th}\) generalized eigenvalues of \(\mathbf{H}\) and \(\hat{\mathbf{H}}\), respectively. Assume a spectral gap \(\delta\) separates the generalized eigenvalues of \(\mathbf{H}, \hat{\mathbf{H}}\). $\kappa(\cdot)$ denotes the conditional number of a given matrix, $\|\cdot\|_2$ represents the spectral norm. \(
\|\Delta \mathbf{H}\|_{1,1} = \sum_{i,j} |\Delta \mathbf{H}_{ij}|
\). Then, the difference in eigenvalues and the angle \(\theta\) between the eigenspace of \(\mathbf{H}\) and \(\hat{\mathbf{H}}\) satisfy:

\[ | \lambda_i(\hat{\mathbf{H}}, \mathbf{S}) - \lambda_i(\mathbf{H, S}) |\leq \frac{\kappa(\mathbf{S})}{\|\mathbf{S}\|_2} \cdot \|\Delta \mathbf{H}\|_{1, 1}, \quad \sin \theta \leq \frac{\kappa(\mathbf{S})}{\|\mathbf{S}\|_2} \cdot \frac{\|\Delta \mathbf{H}\|_{1,1}}{\delta}. \]
\end{theorem}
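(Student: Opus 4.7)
The plan is to reduce the generalized eigenvalue problem $\mathbf{H}\mathbf{C} = \mathbf{S}\mathbf{C}\boldsymbol{\epsilon}$ to a standard symmetric one and then invoke classical matrix perturbation inequalities (Weyl for eigenvalues, Davis--Kahan for eigenspaces). Since $\mathbf{S}$ is the overlap matrix of a basis and therefore Hermitian positive definite, $\mathbf{S}^{1/2}$ and $\mathbf{S}^{-1/2}$ are well defined. Setting $\mathbf{A} := \mathbf{S}^{-1/2}\mathbf{H}\mathbf{S}^{-1/2}$ and $\hat{\mathbf{A}} := \mathbf{S}^{-1/2}\hat{\mathbf{H}}\mathbf{S}^{-1/2}$, a direct substitution shows that the generalized eigenvalues of $(\mathbf{H},\mathbf{S})$ and $(\hat{\mathbf{H}},\mathbf{S})$ coincide with the ordinary eigenvalues of $\mathbf{A}$ and $\hat{\mathbf{A}}$, while the generalized eigenvectors $\mathbf{c}$ are related to the ordinary ones $\mathbf{u}$ by $\mathbf{c} = \mathbf{S}^{-1/2}\mathbf{u}$. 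The perturbation in the transformed problem is $\Delta \mathbf{A} = \mathbf{S}^{-1/2}\,\Delta \mathbf{H}\,\mathbf{S}^{-1/2}$.

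Next, I would bound $\|\Delta \mathbf{A}\|_2$ in terms of $\|\Delta \mathbf{H}\|_{1,1}$. The submultiplicative property of the spectral norm gives
\[
\|\Delta \mathbf{A}\|_2 \;\leq\; \|\mathbf{S}^{-1/2}\|_2^{\,2}\;\|\Delta \mathbf{H}\|_2 \;=\; \|\mathbf{S}^{-1}\|_2\;\|\Delta \mathbf{H}\|_2.
\]
Since $\mathbf{S}$ is positive definite, $\|\mathbf{S}^{-1}\|_2 = 1/\lambda_{\min}(\mathbf{S}) = \kappa(\mathbf{S})/\|\mathbf{S}\|_2$. For the data-side factor I would use $\|\Delta \mathbf{H}\|_2 \leq \|\Delta \mathbf{H}\|_F \leq \|\Delta \mathbf{H}\|_{1,1}$, where the first inequality is standard and the second follows from $\bigl(\sum_{i,j}|x_{ij}|^2\bigr)^{1/2} \leq \sum_{i,j}|x_{ij}|$. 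Chaining these estimates gives
\[
\|\Delta \mathbf{A}\|_2 \;\leq\; \frac{\kappa(\mathbf{S})}{\|\mathbf{S}\|_2}\,\|\Delta \mathbf{H}\|_{1,1}.
\]

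With this bound in hand, the two conclusions follow from textbook perturbation theorems for symmetric matrices applied to $\mathbf{A}$ and $\hat{\mathbf{A}}$. Weyl's inequality immediately yields
\[
|\lambda_i(\hat{\mathbf{A}}) - \lambda_i(\mathbf{A})| \;\leq\; \|\Delta \mathbf{A}\|_2,
\]
and combining with the above estimate gives the stated eigenvalue bound once we re-identify $\lambda_i(\mathbf{A})=\lambda_i(\mathbf{H},\mathbf{S})$. For the eigenspace bound, the Davis--Kahan $\sin\theta$ theorem states that under a spectral gap $\delta$ between the eigenvalue (or block of eigenvalues) of interest and the remaining spectrum, the canonical angle between the corresponding invariant subspaces satisfies $\sin\theta \leq \|\Delta \mathbf{A}\|_2 / \delta$, which yields the second inequality.

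The routine steps are the norm bookkeeping and the substitution $\mathbf{A}=\mathbf{S}^{-1/2}\mathbf{H}\mathbf{S}^{-1/2}$. The main subtlety I would take care with is the use of Davis--Kahan in the generalized setting: the canonical angle $\theta$ must be interpreted as the angle between the invariant subspaces of $\mathbf{A}$ and $\hat{\mathbf{A}}$ (equivalently, the subspaces of generalized eigenvectors after the $\mathbf{S}^{1/2}$-transform), and the spectral gap $\delta$ must be the gap in the generalized eigenvalues so that applying Davis--Kahan to $\mathbf{A}$ is legitimate. A secondary point worth noting is that the $\|\cdot\|_2 \leq \|\cdot\|_F \leq \|\cdot\|_{1,1}$ chain is loose but is what produces the stated entrywise norm on the right-hand side; if a tighter constant were desired one could replace $\|\Delta\mathbf{H}\|_{1,1}$ with $\|\Delta\mathbf{H}\|_F$ throughout at no cost to the argument.
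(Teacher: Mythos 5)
Your proof is correct and follows essentially the same three-step template as the paper's (reduce to a standard eigenvalue problem, invoke Weyl and Davis--Kahan, and chain $\|\cdot\|_2 \le \|\cdot\|_{\mathrm F} \le \|\cdot\|_{1,1}$), but the way you perform the reduction is genuinely different and in fact more rigorous. The paper sets $\mathbf{A} = \mathbf{S}^{-1}\mathbf{H}$ and $\hat{\mathbf{A}} = \mathbf{S}^{-1}\hat{\mathbf{H}}$ and then applies Weyl's inequality and Davis--Kahan directly to $\|\hat{\mathbf{A}} - \mathbf{A}\|_2 = \|\mathbf{S}^{-1}\Delta\mathbf{H}\|_2$. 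But $\mathbf{S}^{-1}\mathbf{H}$ is not symmetric in general (it is only $\mathbf{S}$-self-adjoint), and both theorems, as stated in the paper itself, require symmetric matrices; so the paper's invocation of them is a gap. You instead use the symmetric similarity transform $\mathbf{A} = \mathbf{S}^{-1/2}\mathbf{H}\mathbf{S}^{-1/2}$, which has the same spectrum and is genuinely symmetric, so Weyl and Davis--Kahan apply without qualification. The norm bookkeeping is then identical: $\|\mathbf{S}^{-1/2}\Delta\mathbf{H}\,\mathbf{S}^{-1/2}\|_2 \le \|\mathbf{S}^{-1/2}\|_2^2\,\|\Delta\mathbf{H}\|_2 = \|\mathbf{S}^{-1}\|_2\,\|\Delta\mathbf{H}\|_2$, giving the same $\kappa(\mathbf{S})/\|\mathbf{S}\|_2$ prefactor. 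In short, your reduction patches a technical flaw in the paper's derivation while arriving at the same constants and the same inequalities, and your remark about interpreting the canonical angle and the gap $\delta$ in the $\mathbf{S}^{1/2}$-transformed coordinates is exactly the right caveat.
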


\begin{proof}
    The proof is delegated to Appendix~\ref{app:weyl-davis}.
\end{proof}

\begin{corollary}[Perturbation Sensitivity Scaling]
Assume $\lambda_{\text{min}}(S)$ of the overlap matrix $S$ scales as:
$
\lambda_{\text{min}}(S) = c + \frac{A}{1 + \left( \dfrac{B}{N_0} \right)^\alpha},
$
where $c > 0$, $A > 0$, and $\alpha > 0$ describe the saturation behavior. Then, for a perturbation matrix $\Delta H$, under the first-order perturbation approximation, the eigenvalue perturbation is bounded by:

\[
\left| \lambda_i(\hat{\mathbf{H}}, \mathbf{S}) - \lambda_i(\mathbf{H}, \mathbf{S}) \right| \leq O \left( \sigma B^{1/2} + B |\mu| \right),
\]
\end{corollary}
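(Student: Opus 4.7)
The plan is to reduce the corollary to a spectral-norm bound on the random perturbation matrix $\Delta\mathbf{H}$, controlled by the fact that $\lambda_{\min}(\mathbf{S})$ is uniformly bounded away from zero. First I would observe that the assumed scaling $\lambda_{\min}(\mathbf{S}) = c + A/(1+(B/N_0)^\alpha)$ satisfies $\lambda_{\min}(\mathbf{S}) \geq c > 0$ for every basis size $B$, so the conditioning of the generalized eigenvalue problem is independent of $B$. Then I would apply standard first-order perturbation theory for $\mathbf{H}\mathbf{v}_i = \lambda_i \mathbf{S}\mathbf{v}_i$: with the unperturbed eigenvectors $\mathbf{S}$-normalized by $\mathbf{v}_i^\dagger \mathbf{S}\mathbf{v}_i = 1$,
\[
\lambda_i(\hat{\mathbf{H}},\mathbf{S}) - \lambda_i(\mathbf{H},\mathbf{S}) = \mathbf{v}_i^\dagger \Delta\mathbf{H}\,\mathbf{v}_i + O(\|\Delta\mathbf{H}\|_2^2),
\]
and the leading term obeys $|\mathbf{v}_i^\dagger \Delta\mathbf{H}\, \mathbf{v}_i| \leq \|\Delta\mathbf{H}\|_2 \|\mathbf{v}_i\|_2^2 \leq \|\Delta\mathbf{H}\|_2 / \lambda_{\min}(\mathbf{S}) \leq \|\Delta\mathbf{H}\|_2 / c$, where the middle inequality follows from $1 = \mathbf{v}_i^\dagger \mathbf{S} \mathbf{v}_i \geq \lambda_{\min}(\mathbf{S})\|\mathbf{v}_i\|_2^2$.

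The core of the argument then reduces to bounding $\|\Delta\mathbf{H}\|_2$ when $\Delta\mathbf{H}$ is a Hermitian Gaussian perturbation whose entries have mean $\mu$ and standard deviation $\sigma$. I would decompose $\Delta\mathbf{H} = \mu\,\mathbf{1}\mathbf{1}^{\top} + \mathbf{G}$, where $\mathbf{G}$ is a centered symmetric Gaussian matrix. The rank-one mean part has spectral norm exactly $B|\mu|$. For the centered part I would invoke a non-asymptotic Wigner-type estimate (for example, the Bandeira--van Handel bound, or the semicircle law together with standard tail estimates) to obtain $\mathbb{E}\|\mathbf{G}\|_2 = O(\sigma\sqrt{B})$. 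The triangle inequality then gives $\|\Delta\mathbf{H}\|_2 = O(\sigma B^{1/2} + B|\mu|)$, and combining with the perturbation inequality above yields the claimed bound, with the constant $1/c$ absorbed into the $O(\cdot)$.

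The main thing to be careful about is cosmetic rather than conceptual: one must pick a single mode of statement (expectation or a high-probability bound via Talagrand-style concentration) for $\|\mathbf{G}\|_2$, and keep the normalization of the diagonal versus off-diagonal entries of the Hermitized Gaussian consistent so that the effective variance remains $\Theta(\sigma^2)$. Neither issue affects the order of growth. A remark worth including is that routing through Theorem~1 directly would be lossy here, since $\|\Delta\mathbf{H}\|_{1,1}$ scales like $B^2\sigma$ for Gaussian entries, whereas the quadratic form $\mathbf{v}_i^\dagger \Delta\mathbf{H}\, \mathbf{v}_i$ exploits cancellations in $\mathbf{G}$ and grows only like $\sigma\sqrt{B}$; this explains why the corollary explicitly invokes the first-order perturbation approximation rather than the Weyl--Davis bound of the preceding theorem.
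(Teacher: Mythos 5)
Your proof takes essentially the same route as the paper's appendix proof: first-order perturbation with $\mathbf{S}$-normalized eigenvectors reduces the problem to bounding $\|\mathbf{v}_i\|_2^2\,\|\Delta\mathbf{H}\|_2 \le \|\Delta\mathbf{H}\|_2/\lambda_{\min}(\mathbf{S})$, and then the decomposition of $\Delta\mathbf{H}$ into a rank-one mean part of spectral norm $B|\mu|$ plus a centered random part of spectral norm $O(\sigma\sqrt{B})$ (the paper cites a Bai--Yin-style estimate where you cite Bandeira--van Handel or the semicircle law, which are interchangeable here), with the uniformly bounded $1/\lambda_{\min}(\mathbf{S}) \le 1/c$ absorbed into the implied constant. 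Your closing observation---that routing through the Weyl--Davis theorem's $\|\Delta\mathbf{H}\|_{1,1}$ bound would lose a factor of order $B^{3/2}$ in the $\sigma$-term because the quadratic form $\mathbf{v}_i^\dagger\Delta\mathbf{H}\,\mathbf{v}_i$ exploits cancellations in the centered part---is correct and is a worthwhile point the paper does not spell out.
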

\begin{proof}
    The proof is delegated to Appendix~\ref{app:Eigen-sensitivity}.
\end{proof}

\textbf{Remark} \quad The theorem highlights that the difference between the predicted and actual Hamiltonian matrices, when only considering the element-wise norm, can lead to unbounded differences in eigenvalues/eigenvectors due to a significant \(\frac{\kappa(\mathbf{S})}{\|\mathbf{S}\|}\) ratio.  The corollary further elucidates the sensitivity of eigenvalue perturbations to the size of the basis \( B \).This phenomenon underscores the catastrophic scaling associated with increasing $B$, which is a manifestation of the aforementioned SAD phenomenon. We provide a thorough discussion on $\alpha$ in Appendix \ref{app:scale-eigen}. To validate our theoretical analysis, we empirically evaluated the distribution of  \(\frac{\kappa(\mathbf{S})}{\|\mathbf{S}\|_2}\) across both the QH9 and PubChemQH datasets (Figure~\ref{fig:dist}). The results demonstrate that molecules in PubChemQH exhibit substantially higher ratios compared to QH9, indicating increased perturbation sensitivity in larger molecular systems. This provides strong empirical evidence for the SAD phenomenon and aligns with our theoretical predictions. Consequently, these findings suggest that when designing an effective supervisory signal for learning or optimization tasks involving Hamiltonian matrices, it is crucial to take into account the interaction of the overlap matrix and the corresponding Hamiltonian to mitigate the potential instability caused by perturbations.

In light of this, we propose a new loss function: the Wavefunction Alignment Loss. It is designed to preserve the integrity of the eigenstructure related to molecular orbitals. Let \(\hat{\boldsymbol{\epsilon}}\) and \(\boldsymbol{\epsilon}^\star\) represent the eigenvalues (orbital energies) of \(\hat{\mathbf{H}}\) and \(\mathbf{H}^\star\), respectively, and \(\hat{\mathbf{C}}\) and \(\mathbf{C}^\star\) denote the corresponding eigenvectors (basis coefficients). We define a primary form of the loss function by directly applying the Frobenius norm to the eigenvalues \(
\mathcal{L}_{\mathrm{align}} = \frac{1}{n} \sum_{i=1}^n \| \hat{\boldsymbol{\epsilon}}_{_{(i)}} - \boldsymbol{\epsilon}_{_{(i)}}^\star \|^2_\mathrm{F},
\)
where \(\hat{\boldsymbol{\epsilon}}_{_{(i)}}\) and \(\boldsymbol{\epsilon}_{_{(i)}}^\star\) are derived from solving the generalized eigenvalue problems: \(\hat{\mathbf{H}} \hat{\mathbf{C}} = \mathbf{S} \hat{\mathbf{C}} \hat{\boldsymbol{\epsilon}}\) and \(\mathbf{H}^\star \mathbf{C}^\star = \mathbf{S} \mathbf{C}^\star \boldsymbol{\epsilon}^\star\), respectively.

However, this formulation has notable limitations: (1) Generalized eigenvalue problems are susceptible to numerical instabilities due to ill-conditioned matrices, leading to erroneous gradients during backpropagation through iterative eigensolvers, complicating optimization. (2) The loss function assigns uniform weights to all orbital energies, which is not practical as some orbital energies hold more significance.
To address these issues, we begin by applying the following algorithm~\citep{ghojogh2019eigenvalue,golub2013matrix} to perform a simultaneous reduction of the matrix pair \((\mathbf{H}^\star, \mathbf{S})\):
\vspace{-3mm}

\begin{algorithm}

\caption{Simultaneous reduction of a matrix pair $(\mathbf{H}^\star, \mathbf{S})$}
\label{alg:diagonalization}
\begin{algorithmic}[1]
    \Require Groud-truth Hamiltonian matrix $\mathbf{H}^\star$ and overlap matrix $\mathbf{S}$
    \Ensure Diagonal matrix $\boldsymbol{\epsilon}^\star$ and matrix $\mathbf{C}^\star$ such that $(\mathbf{C^\star})^{\top} \mathbf{SC^\star} = \mathbf{I}$ and $(\mathbf{C}^\star)^{\top} \mathbf{H^\star C^\star} = \boldsymbol{\epsilon}^\star$
    \State Compute the Cholesky decomposition $\mathbf{S} = \mathbf{G}\mathbf{G}^{\top}$.
    \State Define $\mathbf{M}^\star = \mathbf{G}^{-1}\mathbf{H}^\star\mathbf{G}^{-\top}$.
    \State Apply the symmetric QR algorithm to find the Schur form $(\mathbf{Q}^\star)^{\top}\mathbf{M}^\star\mathbf{Q}^\star = \boldsymbol{\epsilon}^\star$.
    \State Compute $\mathbf{C}^\star = \mathbf{G}^{-\top}\mathbf{Q}^\star$.
\end{algorithmic}

\end{algorithm}
\vspace{-3mm}

When the overlap matrix \(\mathbf{S}\) is ill-conditioned, the eigenvalues \(\boldsymbol{\epsilon}^\star\) computed by Algorithm \ref{alg:diagonalization} can suffer from significant roundoff errors. To mitigate this, we modify the algorithm by replacing the Cholesky decomposition of \(\mathbf{S}\) with its eigen (Schur) decomposition \(\mathbf{V}^{\top} \mathbf{S} \mathbf{V} = \mathbf{\Sigma}\), where \(\mathbf{V}\) is the matrix of eigenvectors and \(\mathbf{\Sigma}\) is the diagonal matrix of eigenvalues. We then substitute \(\mathbf{G}\) with \(\mathbf{V\Sigma}^{-1/2}\). This modification effectively reorders the entries of \(\mathbf{M}^\star\), placing larger values towards the upper left-hand corner, thereby enhancing the precision in computing smaller eigenvalues \citep{golub2013matrix, wilkinson1988algebraic}. 

\begin{claim}
    Under optimal convergence condition, the ground truth eigenvector $\mathbf{C}^\star$ should diagonalize the predicted transformed matrix $\hat{\mathbf{H}}$, thereby satisfying the relation $(\mathbf{C}^\star)^{\top} \hat{\mathbf{H}} \mathbf{C}^\star = \boldsymbol{\epsilon}^\star$.
\end{claim}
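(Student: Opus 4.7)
My plan is to reduce the claim to a one-line manipulation built on the output guarantees of Algorithm~\ref{alg:diagonalization}, after first unpacking what ``optimal convergence condition'' should mean operationally in this context.

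First I would pin down the interpretation of ``optimal convergence'': the predicted Hamiltonian $\hat{\mathbf{H}}$ has been driven to share the ground-truth generalized eigenpairs, i.e.\ $\hat{\mathbf{H}}\mathbf{C}^\star = \mathbf{S}\mathbf{C}^\star \boldsymbol{\epsilon}^\star$. In the strongest reading this amounts to $\hat{\mathbf{H}} = \mathbf{H}^\star$; a weaker reading only requires that $\hat{\mathbf{H}}$ leave the $\mathbf{S}$-orthonormal basis $\mathbf{C}^\star$ invariant with eigenvalues $\boldsymbol{\epsilon}^\star$. Either reading is the natural one here, since WALoss is tailored precisely to align those eigenpairs.

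Next I would invoke Algorithm~\ref{alg:diagonalization}, which by construction produces $\mathbf{C}^\star$ satisfying $(\mathbf{C}^\star)^{\top}\mathbf{S}\mathbf{C}^\star = \mathbf{I}$ and $(\mathbf{C}^\star)^{\top}\mathbf{H}^\star\mathbf{C}^\star = \boldsymbol{\epsilon}^\star$. Left-multiplying the optimal-convergence identity by $(\mathbf{C}^\star)^{\top}$ gives
\[
(\mathbf{C}^\star)^{\top}\hat{\mathbf{H}}\mathbf{C}^\star \;=\; (\mathbf{C}^\star)^{\top}\mathbf{S}\mathbf{C}^\star\,\boldsymbol{\epsilon}^\star \;=\; \boldsymbol{\epsilon}^\star,
\]
which is exactly the statement to be shown.

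The only genuine obstacle is definitional rather than analytic: one must justify that the chosen notion of optimal convergence implies the shared-eigenpair relation above. For the strong reading this is literally a substitution into the output of line~4 of Algorithm~\ref{alg:diagonalization}. For the weak reading I would note that when the spectrum $\boldsymbol{\epsilon}^\star$ is simple, the generalized eigendecomposition uniquely determines a symmetric matrix, so the two readings coincide; in the degenerate case one commits to the particular $\mathbf{S}$-orthonormal basis of each eigenspace that Algorithm~\ref{alg:diagonalization} produces, and the identity still holds by direct computation. I do not anticipate any computational difficulty beyond making this choice explicit.
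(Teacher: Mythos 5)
Your argument is correct and is precisely the one-line justification the paper relies on (the paper states the Claim without a written proof, treating it as a direct consequence of the output guarantee $(\mathbf{C}^\star)^{\top}\mathbf{H}^\star\mathbf{C}^\star=\boldsymbol{\epsilon}^\star$ and $(\mathbf{C}^\star)^{\top}\mathbf{S}\mathbf{C}^\star=\mathbf{I}$ from Algorithm~\ref{alg:diagonalization}, with ``optimal convergence'' meaning $\hat{\mathbf{H}}=\mathbf{H}^\star$). Your extra discussion of the weak reading and degeneracy is sound but unnecessary: since $\mathbf{C}^\star=\mathbf{G}^{-\top}\mathbf{Q}^\star$ is invertible, the relation $\hat{\mathbf{H}}\mathbf{C}^\star=\mathbf{S}\mathbf{C}^\star\boldsymbol{\epsilon}^\star$ already forces $\hat{\mathbf{H}}=\mathbf{H}^\star$ regardless of spectral multiplicity.
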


Accordingly, we propose a refined loss function:
\begin{equation}
\mathcal{L}_{\mathrm{WA}} = \frac{1}{n} \sum_{i=1}^n \| (\mathbf{C}_{(i)}^\star)^\top \hat{\mathbf{H}}_{(i)} \mathbf{C}_{(i)}^\star - \boldsymbol{\epsilon}^\star_{(i)} \|^2_{_\mathrm{F}},
\end{equation}
where the subscript \((\cdot)_{_{(i)}}\) denotes the \(i\)-th sample. This modified loss function explicitly penalizes deviations from the expected eigenstructure.  It is important to note that while \((\mathbf{C}^\star)^{\top} \hat{\mathbf{H}} \mathbf{C}^\star\) may not yield a diagonal matrix, the diagonal nature of \(\boldsymbol{\epsilon}^\star\) implicitly enforces the predicted eigenstructure through the non-diagonal elements of the loss function. Here, we provide another derivation of the WALoss by drawing a parallel to first-order perturbation theory. In perturbation theory, for a Hamiltonian \( \mathbf{\hat{H}} = \mathbf{H^{\star}} + \Delta \mathbf{H} \), where \(\Delta \mathbf{H}\) is the perturbation.  The first-order energy shift is \( \delta\boldsymbol{\epsilon} = \int \psi^{\star\dagger} \Delta{\mathbf{H}} \psi^{\star} d\mathbf{r} \), using the unperturbed eigenvectors \( \psi^{\star} \) of \( \mathbf{H}^{\star} \), with the assumption that these eigenvectors remain fixed while adjusting the eigenvalues. Similarly, WALoss employs the ground-truth eigenvectors \( \mathbf{C}^\star \) from \( \mathbf{H}^\star \mathbf{C}^\star = \mathbf{S} \mathbf{C}^\star \boldsymbol{\epsilon}^\star \) to evaluate the predicted Hamiltonian \( \mathbf{H}^\star + \Delta \mathbf{H} \). The loss could be rewritten as \( \mathcal{L}_{\mathrm{WA}} = \frac{1}{n} \sum_{i=1}^n \| (\mathbf{C}_{(i)}^\star)^\top (\mathbf{H}^\star + \Delta \mathbf{H})_{(i)} \mathbf{C}_{(i)}^\star - \boldsymbol{\epsilon}^\star_{(i)} \|^2_{_\mathrm{F}} \), assessing how well \( \mathbf{H}^\star + \Delta \mathbf{H} \) reproduces \( \boldsymbol{\epsilon}^\star \) in the \( \mathbf{C}^\star \) basis. Here, \( (\mathbf{C}^\star)^\top (\mathbf{H}^\star + \Delta \mathbf{H}) \mathbf{C}^\star = \boldsymbol{\epsilon}^\star + (\mathbf{C}^\star)^\top \Delta \mathbf{H} \mathbf{C}^\star \), where \( (\mathbf{C}^\star)^\top \Delta \mathbf{H} \mathbf{C}^\star \) mirrors the perturbation term \( \int \psi^{\star\dagger} \Delta \mathbf{H} \psi^\star d\mathbf{r} \).

It is widely known that energies associated with occupied molecular orbitals (and LUMO) make up the most energy of the molecule. Thus, to capture the most relevant part of the eigenspectrum, we calculate the loss function with increased weight on the \(k+1\) lowest eigenvalues. Here, \(k\) corresponds to the number of occupied orbitals \footnote{\(k=\lfloor \frac{N}{2} \rfloor\) for paired orbitals.}, and an additional eigenvalue accounts for the Lowest Unoccupied Molecular Orbital (LUMO). Let \(\mathcal{I}\) represent the set of indices corresponding to the \(k+1\) lowest eigenvalues, and  \(((\cdot)_{_{(i)}})_j\) indexes the \(j\)-th eigenvalue for the \(i\)-th sample. The loss function is defined as follows:

\begin{equation}
\mathcal{L}_{_{\mathrm{WA}}} = \frac{1}{n}\sum_{i=1}^{n} \left(\rho  \sum_{j \in \mathcal{I}} \| (\mathbf{C}^\star_{_{(i)}})_{j}^{\top} \hat{\mathbf{H}}_{_{(i)}} (\mathbf{C}^\star_{(i)})_{j} - (\boldsymbol{\epsilon}^\star_{_{(i)}})_{j} \|^{2}_{_\mathrm{F}} + \xi  \sum_{j \notin \mathcal{I}} \| (\mathbf{C}^\star_{_{(i)}})_{j}^{\top} \hat{\mathbf{H}}_{_{(i)}} (\mathbf{C}^\star_{_{(i)}})_{j} - (\boldsymbol{\epsilon}^\star_{_{(i)}})_{j} \|^{2}_{_{\mathrm{F}}}\right),
\end{equation}

where \(n\) denotes the number of samples, and \(\rho, \xi\) are hyperparameters where \(\rho \gg \xi \). This adaptation ensures that the loss function places greater emphasis on the eigenvectors and eigenvalues corresponding to the occupied orbitals and LUMO. This improves the model's focus on the critical part of the eigenspace, which is crucial for practical applications.


\section{WANet}
Here, we present WANet, a modernized architecture for Hamiltonian prediction. First, unlike previous approaches, we propose a streamlined design for Hamiltonian prediction that consists of two essential components: the Node Convolution Layer and the Hamiltonian Head. The Node Convolution Layer operates on a localized radius graph, performing graph convolution to capture intricate atomic interactions. This block serves a dual purpose: first, it generates an irreducible node representation, providing a powerful input for the subsequent Hamiltonian Head. Second, it can be initialized with a pretrained EGNN or reprogrammed for other downstream tasks, constituting a unified framework for molecular modeling.  The Hamiltonian Head constructs both pairwise and many-body irreducible representations using the Clebsch-Gordon tensor product. These representations are then utilized to assemble both the non-diagonal and diagonal components of the Hamiltonian matrix. The model architecture and ablation studies are detailed in Figure~\ref{fig:wanet} and Table~\ref{tab:performance}, respectively.
\subsection{Node Convolution Layer}

For the Node Convolution Layer, we replace the traditional \(\mathrm{SO}(3)\) convolutions with Equivariant Spherical Channel Network (eSCN) \citep{passaro2023reducing,equiformer_v2}. The eSCN framework primarily utilizes \(\mathrm{SO}(2)\) linear operations, optimizing the computation of tensor products involved in the convolution process. Traditionally, \(\mathrm{SO}(3)\) convolutions operate on input irreducible representation (irrep) features \(u^{\ell_i}_{m_i}\) and spherical harmonic projections \(Y^{\ell_f}_{m_f}(\hat{\mathbf{r}}_{ts})\). By applying a rotation matrix \(D_{ts}\) to \(\hat{\mathbf{r}}_{ts}\), aligning it with the canonical axis where \(\ell = 0\) and \(m = 0\), the spherical harmonic projection \(Y^{\ell_f}_{m_f}(D_{ts}\hat{\mathbf{r}}_{ts})\) becomes non-zero exclusively for \(m_f = 0\). This condition simplifies the tensor product to \(C^{(\ell_o,m_o)}_{(\ell_i,m_i),(\ell_f,0)}\), which remains non-zero only when \(m_i = \pm m_o\). Subsequently, eSCN has demonstrated that this reformulation could be reparameterized using \(\mathrm{SO}(2)\) operations on the rotated tensors, and simplify the computation from $O(L^6)$ to $O(L^3)$, $L$ is the degree of the representaion. A detailed mathematical framework of this method is elaborated in the Appendix \ref{app:escn}.

\subsection{Hamiltonian Head}

\textbf{Sparse Mixture of Long-Short-Range Experts}\quad We introduce a variant of the Gated Mixture-of-Experts (GMoE) \citep{shazeer2017outrageously,clark2022unified,riquelme2021scaling,zoph2022st,jiang2024mixtral} model by incorporating a Mixture-of-Experts (MoE) layer tailored for pairwise molecular interactions. This enhancement draws inspiration from the Long-Short-Range Message Passing framework \citep{li2023longshortrange}, which differentiates between handling proximal and distal interactions through specialized layers. Our approach differentiates interaction dynamics based on distance, with closer pairs experiencing distinct interaction profiles compared to more distant pairs. This differentiation is achieved through a novel layer that substitutes the conventional pair interaction layer with a sparse assembly of expert modules, each functioning autonomously as a Pair Construction Layer. We define the Pair Construction layer with the function \( F_{\mathrm{pair}}^n \) for the \( n \)-th expert, and delineate the output of the MoE layer with \( N \) experts as:
\(
F_{\mathrm{MoE}}(x_t, x_s) = \sum_{n=1}^N p_n(x_t, x_s) \cdot F_{\mathrm{pair}}^n(x_t, x_s),
\)
where \( p_n(x_t, x_s) \) are the gating probabilities computed by the gating network, and \( \cdot \) denotes scalar multiplication.
The gating probabilities are obtained by applying the \emph{Softmax} function over the gating scores of all experts:
\(
p_n(x_t, x_s) = \frac{\exp\left( G_n\left( z \right) \right)}{\sum_{m=1}^N \exp\left( G_m\left( z \right) \right)},
\)
where \( z = \mathrm{rbf}(\Vert\mathbf{r}_{ts}\Vert_2) \) applies a radial basis function to the Euclidean distance \( \Vert\mathbf{r}_{ts}\Vert_2 \) with a distance cutoff, and \( G_n(z) \) represents the gating score for the \( n \)-th expert, computed as:
\(
G_n(z) = z \cdot W_{g_n} + \epsilon_n.
\)
Here, \( W_{g_n} \) are learned gating weights for expert \( n \), and \( \epsilon_n \) is injected noise to encourage exploration and promote load balancing among experts. Specifically, we use \emph{Gumbel} noise \citep{jang2016categorical}:
\(
\epsilon_n = -\log\left( -\log\left( U_n \right) \right), \quad U_n \sim \mathrm{Uniform}(0, 1).
\)
This noise enables a differentiable approximation of the top-\( K \) selection, allowing for sparse expert utilization while maintaining gradient flow during training. To further promote load balancing among the experts, we introduce an auxiliary load balancing loss \citep{shazeer2017outrageously}:
\(
\mathcal{L}_{\mathrm{load\_balancing}} = N \sum_{n=1}^N \left( \frac{\sum_{(t,s)} p_n(x_t, x_s)}{\sum_{(t,s)} 1} \right)^2,
\)
which encourages the gating network to allocate routing probabilities evenly across experts, preventing underutilization of any single expert.
The Pair Construction layer for each expert is defined as:
\(
\left( F_{\mathrm{pair}}^n(x_t, x_s) \right)^{l_o} = \sum_{l_i, l_j} W^n_{l_i, l_j, l_o} \left( x_s^{l_i} \otimes x_t^{l_j} \right)^{l_o},
\)
where \( x_s^{l_i} \) and \( x_t^{l_j} \) are the \( l_i \)-th and \( l_j \)-th irreducible representations of source node \( s \) and target node \( t \), respectively; \( W^n_{l_i, l_j, l_o} \) are the learned weights that couple these representations into the output representation \( l_o \); and \( \otimes \) denotes the tensor product.

\textbf{Many-Body Interaction Layer} \quad
Considering many-body interactions for the diagonal components of the Hamiltonian in molecular systems captures essential electron correlation effects~\citep{szabo2012modern,jensen2017introduction}. These interactions provide a more accurate representation of the collective behavior of atoms, beyond pairwise approximations. This leads to a precise description of key quantum phenomena like electron delocalization and exchange interactions~\citep{szabo2012modern}. For this purpose, we employ the methodologies of the MACE framework~\citep{batatia2022mace,kovacs2023evaluation,batatia2023foundation}. Central to MACE is the adept conversion of first-order features into higher-order features using the so-called \textit{density trick} \citep{duval2023hitchhiker}. This procedure initiates with the formation of generalized Clebsch-Gordon tensor products from the first-order features:

\[
B_{M, \nu}^{L} = \sum_{lm} C_{lm, \nu}^{LM} \prod_{\xi=1}^{\nu} w_{l_{\xi}} f_{m}^{l}; \quad \text{where } lm = (l_{1}m_{1}, \ldots, l_{\nu}m_{\nu}),
\]

where \( f_{m}^{l} \) represents the input tensor, and \( B_{M,\nu}^{L} \) the resultant tensor for the \(\nu\)-body. The coefficients \( C_{lm,\nu}^{LM} \) are the generalized Clebsch-Gordan coefficients, ensuring the \( L \)-equivariance of the output tensor \( B_{M,\nu}^{L} \). Moreover, \( C_{lm,\nu}^{LM} \) is notably sparse and can be pre-computed efficiently.

\section{Experiments}
In this section, we evaluate WANet with WALoss on the QH9 and PubChemQH datasets. Our evaluation metrics include MAE for the Hamiltonian, $\boldsymbol{\epsilon}_\mathrm{HOMO}$, $\boldsymbol{\epsilon}_\mathrm{LUMO}$, $\boldsymbol{\epsilon}_\Delta$, $\boldsymbol{\epsilon}_\mathrm{occ}$, cosine similarity for the eigenvectors $\mathbf{C}$, and \textit{relative} SCF iterations compared to the initial guess, which are commonly used in previous works~\citep{unke2021seequivariant,yu2023efficient,yu2024qh9}. Additionally, we introduce two new physics-related metrics—MAE for $\boldsymbol{\epsilon}_\mathrm{orb}$ and System Energy—to provide a more comprehensive evaluation. Detailed descriptions of these metrics are provided in Appendix~\ref{app:eval-metrics}.

\subsection{Results on the PubChemQH dataset}
\textbf{Dataset Generation Process} \quad
In our study, we investigated the scalability of Hamiltonian learning by utilizing a \textit{CUDA-accelerated SCF implementation}~\citep{cudft} to perform computational quantum chemistry calculations, thereby generating the PubChemQH dataset. We began with geometries from the PubChemQC dataset  by~\citep{nakata2023pubchemqc}, selecting only molecules with a molecular weight above 400. This filtration process resulted in a dataset comprising molecules with 40 to 100 atoms, totaling over 50,000 samples.
We chose the B3LYP exchange-correlation functional~\citep{lee1988development,beeke1993density,vosko1980accurate,stephens1994ab} and the Def2TZV basis set~\citep{weigend2005balanced,weigend2006accurate} to approximate electronic wavefunctions. Generating this comprehensive dataset represents a substantial computational effort, requiring approximately \textit{one month of continuous processing using 128 NVIDIA-V100 GPUs}. We provide a comparison between the PubChemQH dataset and the QH9 dataset in Appendix~\ref{app:dataset}.

\begin{table}[h]
    \centering
    \vspace{-3mm}
    \caption{\small{PubChemQH experimental results. The energy units are presented in kcal/mol. N/A indicates that the metric is not applicable to a specific model. The best-performing models are highlighted in bold. Detailed training setups are provided in Table~\ref{tab:experimental_settings} and Table~\ref{tab:unimol_settings}. }} 
    \resizebox{\textwidth}{!}{
        \begin{tabular}{lccccccccccc}
            \toprule
            Model & Hamiltonian MAE $\downarrow$ & $\epsilon_\mathrm{HOMO} \text{ MAE} \downarrow$ &  $\epsilon_\mathrm{LUMO} \text{ MAE}\downarrow$ &  $\epsilon_\Delta \text{ MAE}\downarrow$ & $\epsilon_\mathrm{occ} \text{ MAE}\downarrow$ & $\epsilon_\mathrm{orb} \text{MAE}\downarrow$ & $\mathbf{C}$ \text{ Similarity} $\uparrow$ & System Energy MAE $\downarrow$ & relative SCF Iterations$\downarrow$  \\
            \midrule
            \rowcolor{gray!10}QHNet            & 0.7765   & 71.250 & 83.890  & 5.790 & 2087.45 & 1532.672 & 2.32\% & 65721.028 & 371\% \\
            \rowcolor{gray!10}WANet            & 0.6274   & 60.140 & 62.35   & 4.723 & 734.258 & 502.43   & 3.13\% & 63579.233 & 334\% \\
            \rowcolor{cyan!10} QHNet w/ WALoss  & 0.5207   & 13.945 & 14.087 & 4.3982 & 21.805 & 10.930   & 46.66\% & 75.625 & 90\% \\
            \rowcolor{cyan!10} PhisNet w/ WALoss & 0.5166  & 11.872 & 12.075  & 4.054 & 20.045     &  8.917     & 46.78\%   & 60.325      & 90\%   \\
            \rowcolor{cyan!10} WANet w/ WALoss   & \textbf{0.4744}   & \textbf{0.7122} & \textbf{0.730} & \textbf{1.327} & \textbf{18.835} & \textbf{7.330} & \textbf{48.03\%} & \textbf{47.193} & \textbf{82\%} \\
            \midrule
            \rowcolor{gray!10}init guess (\texttt{minao}) & 0.5512   & 29.430  & 28.521  & 4.955 & 42.740 & 35.183 & 0.3293 & 374.313 & 100\% \\
            \midrule
            \rowcolor{magenta!10}Equiformer V2 Regression & N/A & 6.955 & 6.562 & 3.222 & N/A & N/A & N/A & N/A & N/A \\
            \rowcolor{magenta!10}UniMol+ Regression & N/A & 12.250 & 9.472 & 13.001 & N/A & N/A & N/A & N/A & N/A \\
            \rowcolor{magenta!10}UniMol 2 Regression  & N/A & 9.573 & 7.414 & 10.638 & N/A & N/A & N/A & N/A & N/A \\
            \bottomrule
        \end{tabular}
    }
    \label{tab:pubchemqh_result}
    \vspace{-3mm}
\end{table}

Table \ref{tab:pubchemqh_result} presents a comparative analysis of various models' performance on the PubChemQH dataset. 
Despite a higher Hamiltonian MAE, WANet with WALoss significantly outperforms the other models in practical utility, as evidenced by the System Energy MAE and the required SCF iterations. Specifically, the System Energy MAE for WANet with WALoss is dramatically reduced from 63579.233 kcal/mol to 47.193 kcal/mol. Additionally, the relative SCF iterations required for WANet with WALoss is only 82\%, compared to 371\% for QHNet and 334\% for WANet without WALoss. This substantial reduction in SCF iterations demonstrates the effectiveness of WALoss in accelerating the convergence process. Furthermore, it is worth noting that the initial guess matrix, although not achieving as low an MAE as QHNet or WANet without WALoss, shows improved utility with a System Energy MAE of 374.313 kcal/mol. This finding reinforces the idea that \textit{elementwise losses are insufficient}. Additional evidence supporting this idea is provided in Appendix~\ref{app:sad}.

\subsection{Results on the QH9 Dataset}
\label{app:qh9}
The QH9 dataset is a comprehensive quantum chemistry resource designed to support the development and evaluation of machine learning models for predicting quantum Hamiltonian matrices. Built upon the QM9 dataset, QH9 contains Hamiltonian matrices for 130,831 stable molecular geometries, encompassing molecules with up to nine heavy atoms of elements C, N, O, and F. These Hamiltonian matrices were generated using \texttt{pyscf} with the B3LYP functional~\citep{lee1988development,beeke1993density,vosko1980accurate,stephens1994ab} and the def2SVP basis set. 
Table \ref{tab:qh9_result} presents a comparative analysis of the performance of our model, WANet, against the baseline model, QHNet, on the QH9 dataset in both stable and dynamic settings. In the QH9-stable experiments, WANet demonstrates superior performance, achieving higher accuracy compared to QHNet. Specifically, WANet significantly reduces both the Hamiltonian and occupied energy MAE while improving the cosine similarity of $\mathbf{C}$. For the QH9-dynamic dataset, WANet consistently outperforms QHNet, further enhancing prediction accuracy. These results underscore the robustness and effectiveness of WANet in both stable and dynamic scenarios.


\begin{table}[t]
    \centering
        \vspace{-12mm}\caption{\small{Experimental results on the QH9 dataset. The energy units are presented in kcal/mol.
    }}
    \resizebox{0.65\textwidth}{!}{
        \begin{tabular}{llcccccccc}
            \toprule
           & Model & Hamiltonian MAE $\downarrow$  & $\epsilon_\mathrm{occ} \text{ MAE}\downarrow$  & $\mathrm{\mathbf{C}}$ similarity $\uparrow$   \\
            \midrule
\multirow{3}{*}{QH9 stable} &
QHNet & 0.0513 & 0.5366 & 95.85\% \\
&QHNet w/WAloss & 0.0780 & 0.4901 & 96.35\% \\
&WANet & \textbf{0.0502}& 0.5231 & 96.86\% \\
 &WANet w/WAloss & 0.0914 & \textbf{0.4587} & \textbf{96.95\%}  \\
\midrule
\multirow{3}{*}{QH9 dynamic} &
QHNet  & 0.0471 & 0.2744 & 97.13\% \\
&QHNet w/WAloss & 0.0495 & 0.2658 & 98.54\% \\
&WAnet & \textbf{0.0469} & 0.2614 & 99.68\% \\
 &WAnet w/WAloss & 0.0512 & \textbf{0.2500} & \textbf{99.81\%} \\
            \bottomrule
        \end{tabular}
    }
    \label{tab:qh9_result}
    \vspace{-8mm}
\end{table}

\subsection{Comparison with a Property Regression Model}

Conventional machine learning approaches typically employ property regression, mapping molecular features directly to the desired property value~\citep{blum,1367-2630-15-9-095003}. A common question arises: why use Hamiltonians instead of a property regression model? We argue that property regression methods often fail to incorporate underlying quantum mechanical principles, limiting their generalization capability. 
To illustrate this, we compared the performance of WANet with WALoss to a model utilizing Equiformer V2~\citep{liao2023equiformer}, UniMol+~\citep{lu2023highly}, and UniMol2~\citep{ji2024uni} with invariant regression heads, using identical training and test sets. As shown in Table~\ref{tab:pubchemqh_result}, WANet with WALoss demonstrates significantly lower MAE values in predicting key quantum chemical properties. Specifically, WANet with WALoss achieves an 88.88\% improvement in $\epsilon_\textrm{LUMO}$ MAE and a 58.81\% improvement in $\epsilon_\Delta$ MAE. Moreover, the Hamiltonian predicted by WANet with WALoss is not limited to specific properties. It enables the accurate calculation of various critical properties, such as electronic densities, dipole moments, and excited-state energies, all from a single model. Additionally, it can be applied to SCF acceleration. In contrast, the Equiformer V2 regression model is constrained to predicting a narrow set of specific properties, necessitating the training of a new model for each new property.


\vspace{-3mm}
\subsection{Efficiency Evaluation of WANet}

\begin{wrapfigure}{r}{0.33\textwidth}
\vspace*{-0.4in}
    \centering
    \captionsetup[subfigure]{labelformat=empty}
    \begin{minipage}{\linewidth}
        \centering
        \includegraphics[width=\linewidth]{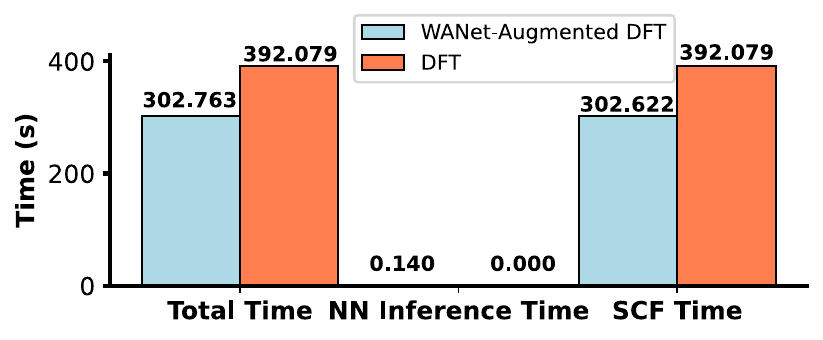}
        \subcaption{\small Figure \thefigure (\thesubfigure): \small Wall-clock comparison of WANet-Augmented DFT with traditional SCF iterations.}
        \label{fig:R3}\par\vfill
        \includegraphics[width=\linewidth]{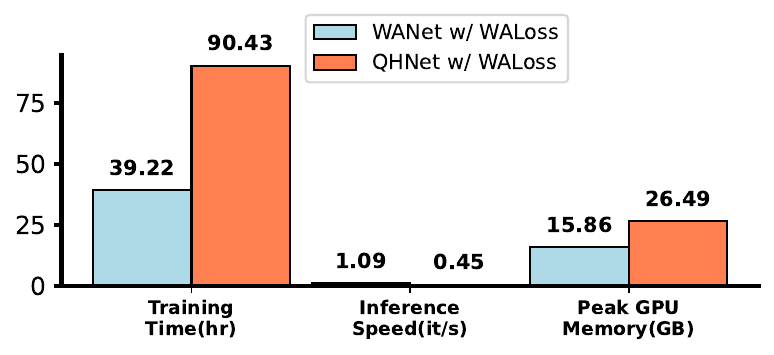}
        \subcaption{\small  (\thesubfigure): Comparison of training and inference efficiency and resource usage between QHNet and WANet on the PubChemQH dataset.}
        \label{fig:R4}
    \end{minipage}
    \label{fig:topk}
    \vspace*{-0.3in}
\end{wrapfigure} 
WANet exhibits efficiency advantages in two aspects: (1) its application to SCF relative to traditional DFT calculations, and (2) its efficiency and resource usage compared to existing state-of-the-art neural networks. Specifically, WANet can predict Hamiltonians for large molecular systems significantly faster than traditional DFT methods. Figure~\ref{fig:R3} presents a wall-clock time comparison between WANet-augmented SCF and traditional DFT calculations. Although the neural network evaluation introduces a small overhead, WANet substantially reduces the number of required SCF iterations, resulting in a faster overall computation time. This notable speed-up makes WANet particularly advantageous for applications requiring rapid predictions for large molecular systems, such as high-throughput virtual screening. Furthermore, WANet outperforms QHNet in training and inference efficiency on the PubChemQH dataset, offering faster training times, improved inference speeds, and lower peak GPU memory usage (Figure~\ref{fig:R4}).

\vspace{-3mm}
\subsection{Molecular Properties Beyond Energy Predictions}

\begin{wraptable}{r}{0.6\textwidth} 
    \vspace{-5mm}
    \centering
    \caption{\small Additional property predictions. We evaluated WANet in deriving dipole moment and electronic spatial extent.}
    \resizebox{0.6\textwidth}{!}{ 
    \begin{tabular}{lcc}
        \hline
        \textbf{Model} & \textbf{Dipole Moment MAE (D)} & \textbf{Electronic Spatial Extent MAE (a.u.)} \\
        \hline
        Equiformer V2 Regression & \textbf{3.3221} & 0.1098 \\
        EGNN Regression & 4.3412 & 0.0789 \\
        ViSNet Regression & 4.5211 & 0.0982 \\
        Initial Guess & 4.0170 & 0.0161 \\
        WANet w/ WALoss & 3.3928 & \textbf{0.0076} \\
        \hline
    \end{tabular}
    } 
    \label{tab:additional_properties}
\vspace*{-0.2in}
\end{wraptable}
To validate the versatility of the Hamiltonian predicted by WANet with WALoss, we extended our experiments to include predictions beyond energy properties, specifically dipole moment and electronic spatial extent. As shown in Table~\ref{tab:additional_properties}, WANet with WALoss performs competitively in dipole moment prediction and excels in predicting electronic spatial extent. This demonstrates the model’s ability to generalize across multiple molecular properties, highlighting its potential for broader quantum mechanical calculations beyond energy-based properties. Further details on the derivation of these properties from the Hamiltonian are provided in Appendix~\ref{app:dipole}.

\vspace{-2mm}
\subsection{Scalability in Elongated Carbon Chain}

To evaluate the scalability of our model trained on PubChemQH, we conducted inference using elongated alkanes (C\textsubscript{x}H\textsubscript{2x+2})\footnote{Elongated alkanes (C\textsubscript{x}H\textsubscript{2x+2}) are only present in the test set.}, a series of saturated hydrocarbons. We compared three models: our model with WALoss, our model without WALoss, and an initial guess algorithm.  
\begin{wrapfigure}{r}{0.7\textwidth}
\vspace{-5mm}
\centering
\includegraphics[width=\linewidth]{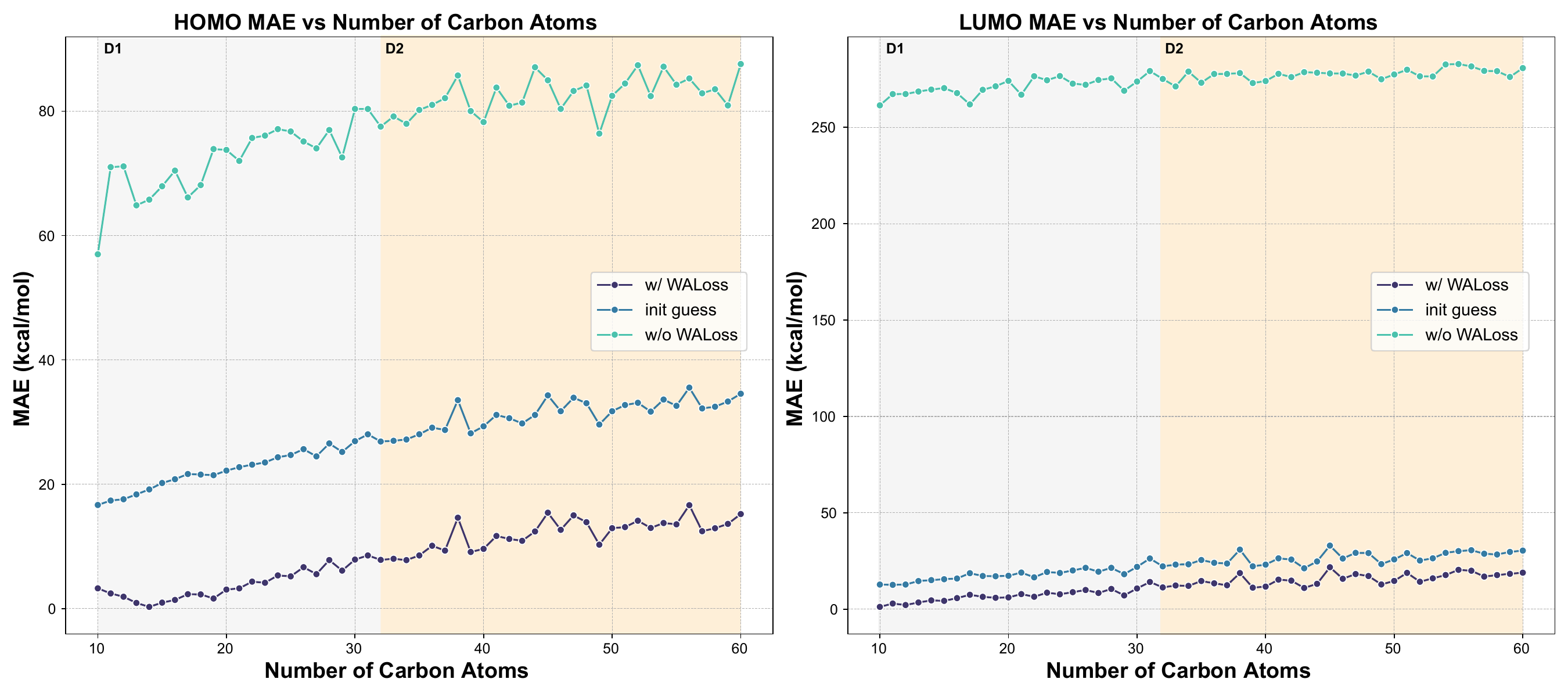}
\caption{
   \small Model performance in predicting HOMO and LUMO energies for elongated alkanes. The left panel shows the MAE for HOMO predictions, while the right panel shows the MAE for LUMO predictions. ``D1''  indicates that the atom count is within the range of the PubChemQH dataset, whereas ``D2'' indicates that the atom count exceeds this range. The models compared include WANet with WALoss (w/ WALoss), our model without WALoss (w/o WALoss), and the initial guess (init guess). Notably, our model with WALoss demonstrates superior performance in LUMO predictions and matches the best HOMO performance, particularly in the ``D2'' region.}
\label{fig:carbonchain}
\vspace{-4mm}
\end{wrapfigure} 
The results, shown in Figure~\ref{fig:carbonchain}, demonstrate that our model with WALoss achieves enhanced performance in predicting LUMO and HOMO energies, particularly in the ``D2'' region, where the atom count exceeds the range of the PubChemQH dataset. 
Notably, our model with WALoss also performs well on elongated alkanes with up to 182 atoms—three times the average atom count of the PubChemQH training set (60). These findings highlight the effectiveness of WALoss in enhancing the scalability and applicability of our model for predicting electronic properties in scalable homogeneous series, demonstrating its potential for application to larger and more complex molecular systems. Additional analysis of the scaling performance is provided in Appendix ~\ref{app:scaling}.

\subsection{Ablation Study on WALoss}
To evaluate the effectiveness of our proposed WALoss, we conducted an ablation study with three variations: full WALoss, naive WALoss, and WALoss without reweighting. The naive WALoss applies the Frobenius norm to the eigenvalues leveraging backpropagation through eigensolvers, defined as \(\mathcal{L}_{\text{naive}} = \frac{1}{n} \sum_{i=1}^{n} \| \hat{\boldsymbol{\epsilon}}_{_{(i)}} - \boldsymbol{\epsilon}_{_{(i)}}^\star \|^2_{\mathrm{F}}\), where \(\hat{\boldsymbol{\epsilon}}_{_{(i)}}\) and \(\boldsymbol{\epsilon}_{_{(i)}}^\star\) are derived from solving generalized eigenvalue problems. The WALoss without reweighting calculates the loss uniformly across all eigenvalues. As shown in Table \ref{tab:ablation-waloss}, the full WALoss achieves the lowest MAEs across all metrics. The naive WALoss performs poorly, highlighting several challenges associated with optimizing the naive loss function. Removing reweighting also degrades performance, though not as drastically. Overall, these results validate the design choices in formulating WALoss to improve Hamiltonian prediction.
\begin{table}[h]
\vspace{-3mm}
    \centering
    \caption{\small{Ablation study of WALoss on the PubChemQH dataset. The best-performing models are highlighted in bold.}}
    \resizebox{\textwidth}{!}{
        \begin{tabular}{lcccccccccc}
            \toprule
            Model & Hamiltonian MAE $\downarrow$ & $\epsilon_\mathrm{HOMO} \text{ MAE} \downarrow$ &  $\epsilon_\mathrm{LUMO} \text{ MAE}\downarrow$ &  $\epsilon_\Delta$ & $\epsilon_\mathrm{occ} \text{ MAE}\downarrow$ & $\epsilon_\mathrm{orb} \text{MAE}\downarrow$ & C$\uparrow$ & System Energy MAE $\downarrow$ & SCF Iteration $\downarrow$  \\
            \midrule
            Naive Loss           & 0.4912  & 50.174 & 55.630 & 3.634 & 632.220  & 486.322 & 5.36\% & 13562.7 & 306\% \\
            WALoss without Reweighting  &0.4973&8.241&7.993&3.988& 41.230 & 21.614 & 28.28\% & 55.492 &  88\%       \\
           WALoss Complete & \textbf{0.4744} & \textbf{0.7122}  & \textbf{0.730}  & \textbf{1.327} & \textbf{18.835}   & \textbf{7.330}  & \textbf{48.03\%} & \textbf{47.193} & \textbf{82\%} \\ 
            \bottomrule
        \end{tabular}

  }
    \label{tab:ablation-waloss}
\vspace{-7mm}
\end{table}


\section{Conclusion and Limitations}
\vspace{-1.5mm}
In this work, we introduced WALoss, a loss function designed to improve the accuracy of predicted Hamiltonians. Our experiments demonstrate that incorporating WALoss achieves state-of-the-art performance by reducing prediction errors and accelerating SCF convergence. Additionally, we introduced a new dataset, \textit{PubChemQH}, and an efficient model, \textit{WANet}. However, limitations remain, such as the high computational cost of generating large training sets. Despite these challenges, deep learning approaches incorporating WALoss show great promise in advancing computational chemistry and materials science.

\section*{Acknowledgment}
Tha authors would like to thank Gaoyuan Wang from Yale University for the discussion on the scaling of the eigenvalues. This work is supported by ALW professorship funds to Mark Gerstein.

\bibliography{Styles/neurips2024}
\bibliographystyle{iclr2025_conference}

\appendix

\newpage

\DoToC
\newpage

\section{Notation Summary}
\begin{table}[h]
\centering
\caption{Summary of main notation used in the paper.}
\resizebox{\textwidth}{!}{%
\begin{tabular}{ll}
\toprule
Notation & Description \\
\midrule
$\mathcal{M} := \left\{ \mathbf{Z}, \mathbf{R} \right\}$ & Molecular system defined by nuclear charges $\mathbf{Z}$ and positions $\mathbf{R}$ \\
$\mathbf{r} \in \mathbb{R}^3$ & Spatial coordinate of an electron \\
$\psi_{i}(\mathbf{r})$ & Single-electron orbital/wavefunction $i$ as a function of electron spatial coordinate $\mathbf{r}$ \\
$\rho(\mathbf{r})$ & Electron density as a function of spatial coordinate $\mathbf{r}$ \\
$B$ & Number of basis functions in the basis set used to represent orbitals \\
${\phi_{\alpha}(\mathbf{r})}_{\alpha=1}^{B}$ & Basis set consisting of $B$ basis functions $\phi_{\alpha}(\mathbf{r})$ \\
$\mathbf{C} \in \mathbb{R}^{B \times N}$ & Coefficient matrix where each column contains the basis function coefficients for an orbital \\
$\mathbf{H} \in \mathbb{R}^{B \times B}$ & Hamiltonian matrix in the chosen basis representation \\
$\mathbf{S} \in \mathbb{R}^{B \times B}$ & Overlap matrix with elements $S_{\alpha\beta} := \int \phi^\dagger_{\alpha}(\mathbf{r}) \phi_{\beta}(\mathbf{r}) \mathrm{d}\mathbf{r}$ \\
$\boldsymbol{\epsilon} \in \mathbb{R}^{N \times N}$ & Diagonal matrix containing orbital energies (eigenvalues of the Hamiltonian) \\
$\hat{\mathbf{H}}_{\theta}(\mathcal{M})$ & Machine learning model parameterized by $\theta$ for predicting the Hamiltonian $\mathbf{H}$ from a molecular structure $\mathcal{M}$ \\
$\mathcal{L}_{\text{WA}}$ & Wavefunction Alignment Loss function \\
$\hat{\boldsymbol{\epsilon}}, \boldsymbol{\epsilon}$ & Predicted and ground truth eigenvalues (orbital energies) of $\hat{\mathbf{H}}$ and $\mathbf{H}$ \\
$\hat{\mathbf{C}}, \mathbf{C}$ & Predicted and ground truth eigenvectors (basis coefficients) of $\hat{\mathbf{H}}$ and $\mathbf{H}$ \\
$\mathbf{H}^\star$ & Ground truth Hamiltonian matrix \\
$\hat{\mathbf{M}}$ & Predicted transformed Hamiltonian matrix in an orthogonal basis \\
$\mathbf{G}$ & Matrix obtained from the eigen decomposition of the overlap matrix $\mathbf{S}$ \\
$k$ & Number of occupied orbitals \\
$\mathcal{I}$ & Set of indices corresponding to the $k+1$ lowest eigenvalues \\
$\rho$ & Hyperparameter controlling the weight of occupied and unoccupied orbitals in the loss function \\
$n$ & Number of samples in the dataset / the $n$-th experts in MoE nodel \\
$\mathcal{D}$ & Dataset used for training the machine learning model \\
$N$ & Number of electrons in the system / total number of experts in MoE model \\
$\mathbf{H}^{(0)}$ & Initial guess for the Hamiltonian matrix \\
$\mathbf{H}^{(k)}$ & Hamiltonian matrix at the $k$-th SCF iteration \\
$\mathbf{C}^{(k)}$ & Coefficient matrix at the $k$-th SCF iteration \\
$\boldsymbol{\epsilon}^{(k)}$ & Diagonal matrix of orbital energies at the $k$-th SCF iteration \\
$\delta$ & Convergence threshold for the SCF procedure \\
$\ell$ & Degree of the irreducible representation (irrep) in $\mathrm{SO}(3)$ equivariant networks \\
$D^{(\ell)}(g)$ & Wigner D-matrix representation of group element $g \in \mathrm{SO}(3)$ of degree $\ell$ \\
$Y^{(\ell)}(\hat{\mathbf{r}})$ & Real spherical harmonics of degree $\ell$ evaluated at unit vector $\hat{\mathbf{r}}$ \\
$C^{(\ell_3,m_3)}_{(\ell_1,m_1),(\ell_2,m_2)}$ & Clebsch-Gordan coefficients coupling irreps of degree $\ell_1$ and $\ell_2$ into irrep of degree $\ell_3$ \\
$\mathbf{r}_{ts}$ & Vector pointing from node $s$ to node $t$ \\
$\hat{\mathbf{r}}_{ts}$ & Unit vector pointing from node $s$ to node $t$ \\
$\mathcal{L}_{\text{total}}$ & Total loss function combining $\mathcal{L}_{\text{align}}$ and mean squared error (MSE) loss \\
$\lambda_1, \lambda_2, \lambda_3$ & Hyperparameters controlling the weights of different loss terms in $\mathcal{L}_{\text{total}}$ \\
\bottomrule
\end{tabular}%
}

\label{tab:notation}
\end{table}

\newpage
\section{Additional Related Work}
\label{app:related-work}
\textbf{Predicting Kohn-Sham Hamiltonians} \quad  Early work on  predicting Kohn-Sham Hamiltonians used kernel ridge regression~\citep{hegde2017large}, while newer approaches use neural networks (NNs). Some NNs predict the wavefunction itself~\citep{schutt2019machine,gastegger2020emergence,hermann2020deep}, while others use equivariant~\citep{unke2021se3,kochkov2021learning,yu2023efficient,li2022marriage,zhong2023transferable} or hybrid architectures~\citep{yin2024harmonizing} to predict the molecular Hamiltonian. A novel training method has been proposed to address the scarcity of labeled data~\citep{zhang2024selfconsistency}, and two benchmark datasets aim to standardize evaluation of molecular Hamiltonian prediction~\citep{khrabrov2022nabla,yu2024qh9,khrabrov2024nabla2dftuniversalquantumchemistry}.

\textbf{Equivariant Graph Neural Networks (EGNNs)} \quad It is often desired that machine learning
(ML) models exhibit equivariance to rotations,
translations, or reflections, which guarantee that they
respect certain physical symmetries. Foundational work
introduced group equivariant convolutional neural
networks (ECNNs)~\citep{cohen2016group}, whose
importance was underscored by a proof that
equivariance and convolutional structure are
equivalent given certain ordinary
constraints~\citep{kondor2018generalization}. One
appealing way to implement convolution is with
geometric graph neural networks (geometric GNNs),
which apply naturally to atomic systems by
encoding them as graphs embedded in
$\mathbb{R}^3$. Two important families of
geometric GNNs are invariant GNNs and Cartesian
equivariant GNNs. Over the past several years,
invariant GNNs have achieved state-of-the-art
results in predicting properties of
molecules, crystals, and other materials ~\citep{schutt2018schnet}~\citep{sanyal2018mt}~\citep{chen2019graph}~\citep{gasteiger2020directional}~\citep{liu2022spherical}~\citep{gasteiger2021gemnet}~\citep{wang2022comenet},
as well as in predicting the folding structure of
proteins~\citep{jumper2021highly}. Cartesian
equivariant GNNs have seen success in similar
areas, benefiting from the greater flexibility of their
representations~\citep{jing2020learning}~\citep{satorras2021n}~\citep{du2022se}~\citep{simeon2024tensornet}.
Cartesian equivariant GNNs have also seen recent innovation
in Cartesian equivariant transformer layers~\citep{frank2022so3krates}. 
A third significant family of geometric GNNs
is spherical equivariant GNNs, which use spherical
tensors rather than Cartesian tensors. As as a
result, spherical equivariant GNNs behave more
naturally under rotations and avail themselves of
many results of the representation theory of
$\mathrm{SO}(3)$. They have shown dexterity in tasks in
geometry, physics, and
chemistry~\citep{thomas2018tensor}; modeled dynamic
molecular systems~\citep{anderson2019cormorant};
enabled $\mathrm{SO}(3)$- and $\mathrm{SE}(3)$-equivariant transformer
layers~\citep{fuchs2020se}~\citep{liao2022equiformer};
accuarately and efficiently calculated interatomic
potentials~\citep{batzner2022e3}~\citep{batatia2022mace}~\citep{musaelian2023learning};
enabled $E(3)$-equivariant fluid
mechanical modeling ~\citep{toshev20233}; and
improved efficiency by reducing certain
convolution computations in $\mathrm{SO}(3)$ to equivalent
ones in $\mathrm{SO}(2)$~\citep{passaro2023reducing}. ~\citet{li2025e2former} propose to use $\mathrm{Wiger}$-$6j$ Convolution to shift computation from edges to nodes.

\newpage
\section{Additional Experiments}

\subsection{Evaluation Metrics}
\label{app:eval-metrics}
In this section, we provide a detailed description of the metrics  in the main-text:

\textbf{MAE for Hamiltonian} \quad The MAE for Hamiltonian assesses the accuracy of the predicted Hamiltonian matrices. This metric is crucial for evaluating the quality of the predicted electronic structure and its components, which are foundational in quantum chemistry calculations.

\textbf{MAE for $\boldsymbol{\epsilon}_\mathrm{HOMO}$} \quad The MAE for the Highest Occupied Molecular Orbital (HOMO) energy ($\boldsymbol{\epsilon}_\mathrm{HOMO}$) evaluates the precision of the predicted HOMO levels. Accurate HOMO predictions are essential as they influence a molecule's chemical reactivity and stability.

\textbf{MAE for $\boldsymbol{\epsilon}_\mathrm{LUMO}$} \quad The MAE for the Lowest Unoccupied Molecular Orbital (LUMO) energy ($\boldsymbol{\epsilon}_\mathrm{LUMO}$) measures the accuracy of LUMO level predictions. Accurate LUMO predictions are critical for understanding a molecule's electron affinity and chemical behavior.

\textbf{MAE for $\boldsymbol{\epsilon}_\Delta$} \quad The MAE for the energy gap between HOMO and LUMO ($\boldsymbol{\epsilon}_\Delta$) assesses the precision of this important property, which determines the electronic properties and conductivity of materials.

\textbf{MAE for $\boldsymbol{\epsilon}_\mathrm{occ}$} \quad The MAE for occupied orbital energies ($\boldsymbol{\epsilon}_\mathrm{occ}$) assesses the accuracy of predicted energies for all occupied molecular orbitals, providing a comprehensive measure of how well the model captures the electronic structure.

\textbf{MAE for $\boldsymbol{\epsilon}_\mathrm{orb}$} \quad The MAE for all orbital energies ($\boldsymbol{\epsilon}_\mathrm{orb}$) measures the discrepancies in predicted energies for both occupied and unoccupied orbitals. This metric evaluates the overall accuracy of the model in predicting the entire spectrum of orbital energies.

\textbf{MAE on Total Energy} \quad The MAE on total energy assesses the accuracy of the predicted total energies derived from Hamiltonian matrices using \texttt{pyscf}. This metric is crucial for validating the model's accuracy in predicting the overall energy of the system, which is fundamental for understanding molecular stability and reactions.

\textbf{Cosine Similarity for Wavefunction/Eigenvectors (C)} \quad The cosine similarity for wavefunction/eigenvectors (C) measures the similarity between the predicted and actual wavefunctions or eigenvectors of the system. High cosine similarity indicates that the predicted wavefunction distribution closely matches the actual distribution, which is important for accurately modeling electronic properties.

\textbf{SCF Iteration} \quad The SCF (Self-Consistent Field) iteration count evaluates the number of iterations required to achieve convergence in DFT (Density Functional Theory) calculations using the predicted Hamiltonian matrices when comparing with the inital guess. Mathmatically, it is defined as:

\[
\sigma = \frac{\text{Predicted Hamiltonian Iteractions}}{\text{Inital Hamiltonian Iteractions}}.
\]

This metric assesses the efficiency of the predicted matrices in expediting the DFT calculations.



\subsection{SAD Phenomenon}
\label{app:sad}
We present an additional graphical illustration of the SAD phenomenon discussed in the main text, depicting the learning curve of the QHNet model using only elementwise loss on a \textit{subset} of the \textit{PubChemQH} dataset \footnote{This differs from the dataset used in the maintexts.}. Figure~\ref{fig:learning-curve} shows that as the MAE decreases, the system energy exhibits significant fluctuations. Notably, when the Hamiltonian's MAE is around 0.3, the system energy MAE reaches 30,000 kcal/mol. This highlights the non-monotonic relationship between the Hamiltonian MAE and the resulting system energy. This extreme instability in derived properties, despite a seemingly small Hamiltonian MAE, is a key characteristic of the SAD phenomenon, hindering the applicability of the predicted Hamiltonian.

\begin{figure}[t]
\vspace{-2mm}
\begin{center}
\includegraphics[width=\columnwidth]{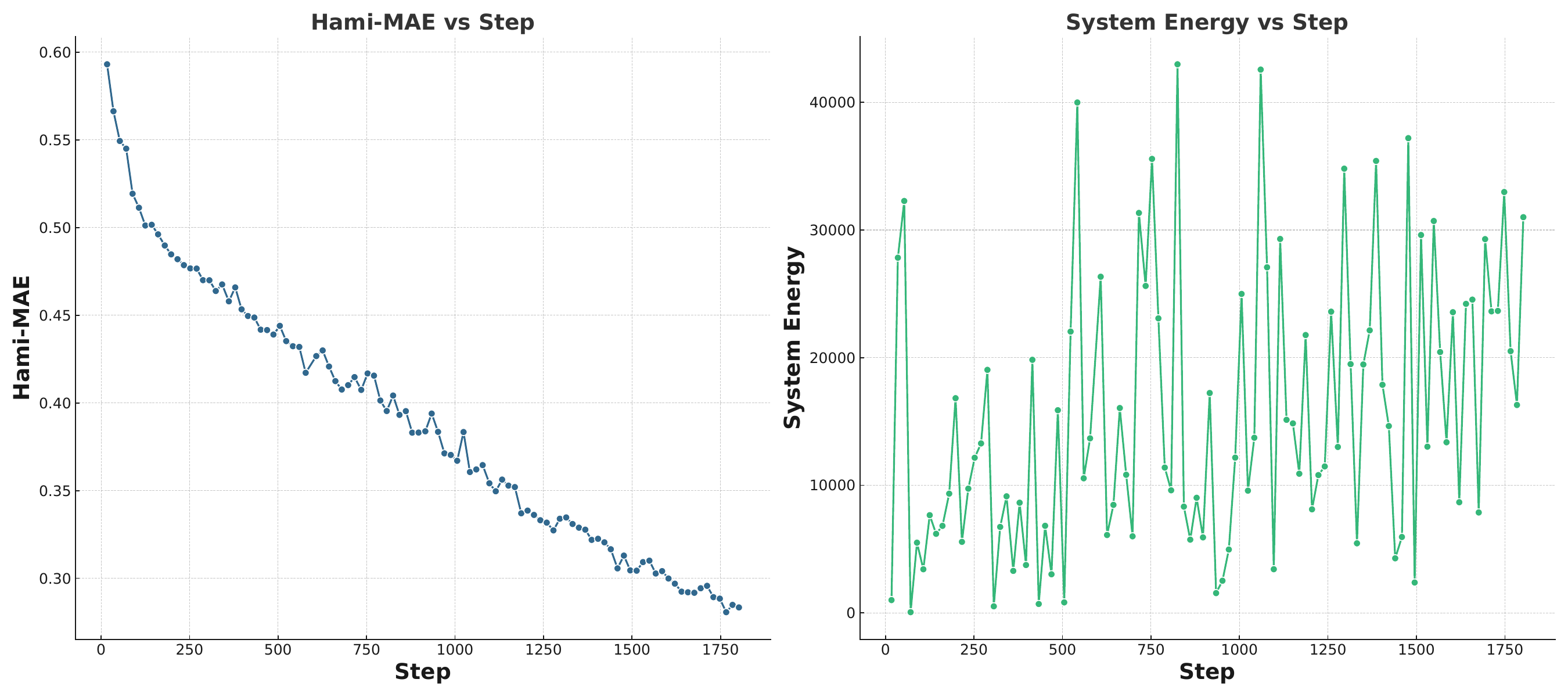}

\caption{
An additional graphical illustration of the SAD phenomenon discussed in the main text, depicting the learning curve of the QHNet model using only elementwise loss. The y-axis represents the metric, and the x-axis represents the steps. The figure shows that as the MAE decreases, the system energy exhibits significant fluctuations. }
\label{fig:learning-curve}
\end{center}
\vspace{-7mm}
\end{figure}

\subsection{Error Scaling Analysis}
\label{app:scaling}
In Figure~\ref{fig:carbonchain} and Table ~\ref{tab:scaling}, we analyzed the error trend scaling for "intensive" properties such as $\eps_{\text{HOMO}}$ and $\eps_{\text{LUMO}}$ with an increasing number of electrons. This challenge, widely recognized in the field~\citep{yu2024qh9, zhang2024selfconsistency}, highlights a difficult out-of-distribution (OOD) generalization scenario: models trained on smaller molecules often struggle to generalize to larger ones, limiting their applicability. Previous state-of-the-art models have exhibited significant scaling errors when extrapolating to larger systems~\citep{yu2024qh9, zhang2024selfconsistency}.

We conducted an error scaling analysis for saturated carbon chains to assess the scalability of our model. The error scaling coefficients for the HOMO and LUMO energies, as well as the energy gap, are presented in Table~\ref{tab:scaling}. Our results show significantly lower error scaling compared to baseline models for saturated carbon chains. Our model substantially outperforms the one without WALoss, as reflected by the scaling coefficients in Table R2, marking a considerable improvement in extrapolation capabilities within the field.

We further analyze the smallest eigenvalue distribution which is now included in Figure~\ref{fig:dist}.

\begin{figure}
    \centering
    \includegraphics[width=0.7\linewidth]{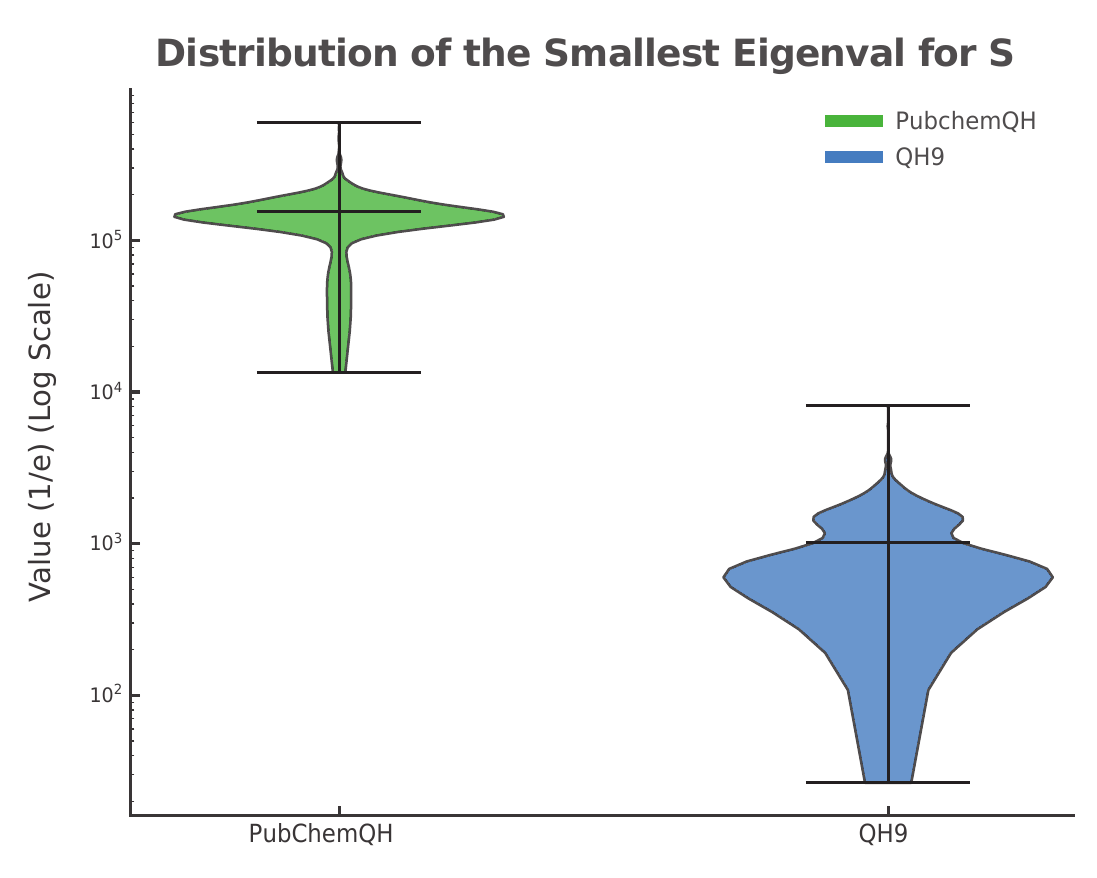}
    \caption{\small  the distribution of $\frac{\kappa(\mathbf{S})}{\|\mathbf{S}\|}$ on QH9 and PubChemQH dataset}
    \label{fig:dist}
\end{figure}

\begin{table}[h!]
\centering
\caption{Scaling Coefficients of HOMO, LUMO, and Energy Gap with Carbon Atom Count}
\resizebox{\textwidth}{!}{%
\begin{tabular}{lccc}
\hline
Model       & HOMO Scaling Coefficients & LUMO Scaling Coefficients & Gap Scaling Coefficients \\
\hline
WANet (with WALoss)  & \textbf{0.4211}                   & \textbf{0.4003}                   & \textbf{0.0206}                   \\
Initial Guess        & 0.4221                            & 0.4006                            & 0.0214                            \\
Without WALoss       & 1.3639                             & 2.7646                            & 1.4007                            \\
\hline
\end{tabular}
}
\label{tab:scaling}
\end{table}

\subsection{Overfitting Risk Evaluation}

We evaluated the risk of overfitting to the PubChemQH dataset. As shown in Table~\ref{tab:overfitting}, the evaluation metrics on unseen data are comparable to those on the training set, indicating that our model generalizes well and does not overfit.

\begin{table}[h]
\centering
\caption{Overfitting Risk Evaluation}
\resizebox{\textwidth}{!}{%
\begin{tabular}{lcccccccc}
\toprule
PubChem      & Hamiltonian MAE $\downarrow$ & $\epsilon_\mathrm{HOMO} \text{ MAE} \downarrow$ &  $\epsilon_\mathrm{LUMO} \text{ MAE}\downarrow$ &  $\epsilon_\Delta$ & $\epsilon_\mathrm{occ} \text{ MAE}\downarrow$ & $\epsilon_\mathrm{orb} \text{MAE}\downarrow$ & C$\uparrow$ & System Energy MAE $\downarrow$  \\
\midrule
Training MAE & 0.4736   & 6.073 & 4.414 & 3.753 & 17.850 & 6.159 & 48.01\% & 46.036        \\
Test MAE     & 0.4744   & 0.7122  & 0.730  & 1.327 & 18.835   & 7.330  & 48.03\% & 47.193  \\
\bottomrule
\end{tabular}
}
\label{tab:overfitting}
\end{table}

\subsection{Energy Predictions for Elongated Alkanes}

We have incorporated additional plots (Figure~\ref{fig:R}) that illustrate the predicted and actual values for both LUMO and HOMO energies, as well as the HOMO-LUMO gap. As the number of carbon atoms increases, both HOMO and LUMO energies exhibit an initial rise followed by a plateau with slight increases. Our model, utilizing the WALoss method, effectively captures this trend, outperforming baseline methods. Notably, Equiformer V2's HOMO predictions fail to generalize to saturated carbon systems, as these systems were not included in the training set. This observation underscores the advantage of Hamiltonian-based models in capturing physical principles and achieving better generalization compared to property regression models.

For isolated molecular systems, such as those studied here, computational chemists are primarily concerned with per-electron energy levels. To demonstrate this, we plot the energy levels within a reasonable window centered around the HOMO level and compare these with ground truth values obtained from DFT calculations. As shown in Figure~\ref{fig:R}
\textbf{A}, our model closely replicates the ground truth energy levels, outperforming baseline models. These results highlight the effectiveness of our approach in accurately predicting the electronic structure of molecules.

\begin{figure}[t]
\begin{center}
\includegraphics[width=\columnwidth]{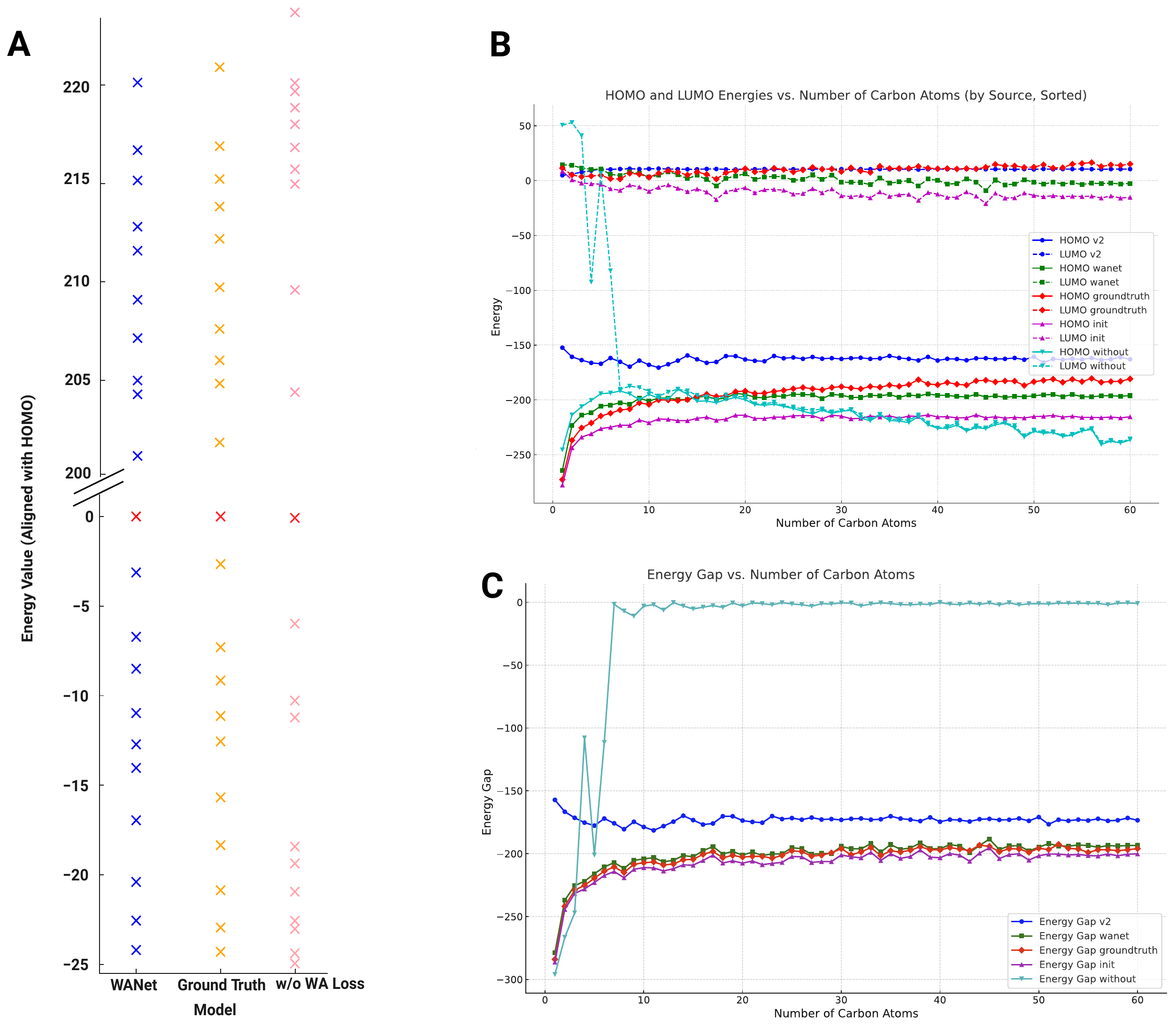}

\caption{
(\textbf{A}) The energy level for a saturated hydrocarbon system ($C_{20}H_{42}$). The energy is centered at 
HOMO for comparison. (\textbf{B}) The predicted energy for differnet models on the satured hydrocarbon system.
The y-axis indicates the energy in kcal/mol. The x-axis denotes the number of the carbon atoms in the elongated carbon-chains. v2 indicates Equiformer V2. `wanet' indicates WANet with WALoss. `groundtruth' indicates DFT calculations. `init' indicates a Fock matrix initialization algorithm using minao. `without' indicates WANet without WALoss. (\textbf{C}) The HOMO-LUMO gap prediction for different models.
}
\label{fig:R}
\vspace{-7mm}
\end{center}
\end{figure} 
\subsection{Additional Ablation Study}

In contrast to prior research, our study addresses the substantial challenges presented by the scale of the PubChemQC t-zvp dataset, particularly in providing the model with a strong numeric starting point. To overcome this difficulty, we shifted our focus to a more tractable objective: making predictions based on an easily obtainable initial guess. Our ablation study shows that while predictions based on the initial guess significantly improve performance in Hamiltonian prediction, they struggle when applied to the prediction of physical properties.

\begin{table}[h]
    \centering
    \caption{Performance Comparison of Baseline and Initial Guess Models on PubChemQH Dataset. The best models are bolded.}
    \resizebox{\textwidth}{!}{
        \begin{tabular}{lccccccccccc}
            \toprule
            Model & Hamiltonian MAE $\downarrow$ & $\epsilon_\mathrm{HOMO} \text{ MAE} \downarrow$ &  $\epsilon_\mathrm{LUMO} \text{ MAE}\downarrow$ &  $\epsilon_\Delta \text{ MAE}\downarrow$ & $\epsilon_\mathrm{occ} \text{ MAE}\downarrow$ & $\epsilon_\mathrm{orb} \text{MAE}\downarrow$ & $\mathbf{C}$ \text{ Similarity} $\uparrow$ & System Energy MAE $\downarrow$ & relative SCF Iterations$\downarrow$  \\
            \midrule
            WANet            & 0.6274 & 60.14 & 62.35 & 4.723 & 734.258  & 502.43 & 3.13\% & 63579.233 &  334\%\\
        WANet w/ Initial Guess           & \textbf{0.0379 } & 54.107 & 56.628 &3.618 & 695.418  & 481.513 & 4.42\% & 60078.633 &  306\% \\         
          WANet w/ Initial Guess \& WALoss & 0.4744   & \textbf{0.7122}  & \textbf{0.730}  & \textbf{1.327} & \textbf{18.835}   & \textbf{7.330}  & \textbf{48.03\%} & \textbf{47.193} & \textbf{82\%} \\
        \bottomrule
        \end{tabular}
    }
\end{table}

To evaluate the contributions of each component in the WANet model, we conducted ablation studies. The results are summarized in the Table~\ref{tab:performance}. The ablation studies confirm the importance of each architectural component in the WANet model.

\begin{table}[ht]
\centering
\caption{Ablation study results for the WANet model. The table shows the impact of different architectural components.}
\resizebox{\textwidth}{!}{
\begin{tabular}{ccccccccc}
\toprule
w/ SO(2) & w/ LSR-MoE & w/ Many Body & Inference Speed (it/s) & GPU Memory & Hamiltonian MAE & $\epsilon_\mathrm{HOMO}\text{ MAE}\downarrow$  & $\epsilon_\mathrm{LUMO}\text{ MAE}\downarrow$  & $\epsilon_\Delta \text{ MAE}\downarrow$ \\
\midrule
$\checkmark$ & $\checkmark$ &      & 1.34 & 13.87 & 0.5895 & 15.39 & 27.50 & 4.651\\
$\checkmark$ &     & $\checkmark$   & 1.13 & 12.80 & 0.4883 & 2.27  & 4.01  & 2.906\\
  & $\checkmark$ & $\checkmark$   & 0.51 & 24.47 & 0.4792 & 0.75  & 0.73  & 1.594\\
$\checkmark$ & $\checkmark$ & $\checkmark$   & 1.09 & 15.86 & 0.4744 & 0.71  & 0.73 & 1.327  \\
\bottomrule
\end{tabular}
}

\label{tab:performance}
\end{table}

\newpage

\section{eSCN convolution}
\label{app:escn}
\par{The equivariant Spherical Channel Network
(eSCN) is a graph neural network whose approach to
$\mathrm{SO}(3)$-equivariant convolutions significantly
reduces their computational burden. To see this,
let’s brush up on the notation used in the formula
for the $\ell_o$-th degree of the message $m_{ts}$
from source node $s$ to target node $t$ in an
$\mathrm{SO}(3)$ convolution: $W_{\ell_i, \ell_f, \ell_o}$
is a learnable weight; $x_s^{(\ell_i)}$ is the
$\ell_i$-th degree of the irrep feature of node
$s$, the source node; $\otimes$ is the tensor
product; $Y^{(\ell_f)}(\hat{\mathbf{r}}_{ts})$ is the
$\ell_f$-th degree spherical harmonic projection
of $\hat{\mathbf{r}}_{ts}$; and $\hat{\mathbf{r}}_{ts}$ is the
(normalized) relative position vector. Then, the
message is given by
\[
    m_{ts}^{(\ell_o)} = \sum_{\ell_i, \ell_f}W_{\ell_i, \ell_f, \ell_o}\Big(x_s^{(\ell_i)}\otimes Y^{(\ell_f)}(\hat{\mathbf{r}}_{ts})\Big)^{(\ell_o)}.
\] Traditionally, this involves an ordinary $\mathrm{SO}(3)$
tensor product, which is decomposed into its
irreps using the Clebsch-Gordon coefficients,
$C^{(\ell_o,m_o)}_{(\ell_i,m_i),(\ell_f,m_f)}$:
\[
    m_{ts}^{(\ell_o)} = \sum_{\ell_i, \ell_f}W_{\ell_i, \ell_f, \ell_o}\bigoplus_{m_o}
    \sum_{m_i, m_f}\big(x_s^{(\ell_i)}\big)_{m_i}C_{(\ell_i,m_i),(\ell_f,m_f)}^{(\ell_o,m_o)} \big(Y^{(\ell_f)}(\hat{\mathbf{r}}_{ts})\big)_{m_f}.
\] The tensor product, $\otimes$, has been exchanged 
for the proper Clebsch-Gordon coefficient. The
direct sum, $\bigoplus$, is included because the
space of tensors of degree $\ell_o$ has a basis
indexed by $m_o$.}

\par{However, the full $\mathrm{SO}(3)$ tensor product is
compute-intensive, $O(L^6)$, where $L$ is the
tensor degree. In practice, this means that only
tensors up to degree 2 or 3 are used, which is
especially unfortunate because higher-order
tensors allow more precise representation of
angular information. To lessen this computational
burden, eSCN rotates the irrep features,
$x_{s}^{(\ell_i)}$, of a node $x_s$, along with the
relative position vector, $\hat{\mathbf{r}}_{ts}$, by a
rotation chosen so that the relative position vector
aligns with the $y$-axis. Thus, it
may be interpreted that eSCN changes the irreps
into a more convenient basis,
computes the convolution, and changes back into
the original basis. This requires multiplication by a
change-of-basis Wigner D-matrix before and its
inverse after, but it is worthwhile because it
reduces the convolution to $O(L^3)$ by sparsifying
the Clebsch-Gordon coefficients. In particular,
eSCN guarantees that the output coefficient is
nonzero only if both $m_i = \pm m_o$ and $m_f = 0$
hold. The output tensors of some order, therefore,
are linear combinations of input tensors of that
order. It is not necessary to perform tensor
multiplication and decomposition. The non-zero
Clebsch-Gordon coefficients can then be denoted
$C_{(\ell_i,m),(\ell_f,0)}^{(\ell_o,m)}$, with
eSCN further guaranteeing that
$C_{(\ell_i,m),(\ell_f,0)}^{(\ell_o,m)} =
C_{(\ell_i,-m),(\ell_f,0)}^{(\ell_o,-m)}$, and
$C_{(\ell_i,m),(\ell_f,0)}^{(\ell_o,-m)} =
-C_{(\ell_i,-m),(\ell_f,0)}^{(\ell_o,m)}$. Thus,
eSCN guarantees that all but a few Clebsch-Gordon
coefficients will be zero, and provides simple
formulas involving those that remain.}

\par{More formally, let $R$ be a rotation matrix
chosen so that $R \cdot \hat{\mathbf{r}}_{ts} = (0,1,0)$,
and let $D^{(\ell)}$ be a Wigner D-matrix
representation of $R$ of degree $\ell$ (which
would more properly read $D^{(\ell)}(R)$, but
which is truncated for readability). Then, the
message can be written
\[
    m_{ts}^{(\ell_o)} = \big(D^{(\ell_o)}\big) ^{-1} \sum_{\ell_i, \ell_f} W_{\ell_i, \ell_f, \ell_o}\Big(D^{(\ell_i)} x_s^{(\ell_i)}\otimes Y^{(\ell_f)}(R \cdot \hat{\mathbf{r}}_{ts})\Big)^{(\ell_o)}.
\] Rewriting the tensor products using the Clebsch-Gordon coefficients yields
\[
    m_{ts}^{(\ell_o)} = \big(D^{(\ell_o)}\big)^{-1} \sum_{\ell_i, \ell_f}W_{\ell_i, \ell_f, \ell_o}\bigoplus_{m_o}
    \sum_{m_i, m_f}\big(D^{(\ell_i)} x_s^{(\ell_i)}\big)_{m_i}C_{(\ell_i,m_i),(\ell_f,m_f)}^{(\ell_o,m_o)} \big(Y^{(\ell_f)}(R \cdot \hat{\mathbf{r}}_{ts})\big)_{m_f}.
\] Since the Clebsch-Gordon coefficients are non-zero only when $m_f = 0$, there is no need to sum over $m_f$:
\[
    m_{ts}^{(\ell_o)} = \big(D^{(\ell_o)}\big)^{-1} \sum_{\ell_i, \ell_f}W_{\ell_i, \ell_f, \ell_o}\bigoplus_{m_o}
    \sum_{m_i}\big(D^{(\ell_i)} x_s^{(\ell_i)}\big)_{m_i}C_{(\ell_i,m_i),(\ell_f,0)}^{(\ell_o,m_o)} \big(Y^{(\ell_f)}(R \cdot \hat{\mathbf{r}}_{ts})\big)_0.
\] The rotation $R$ has been chosen so that $R \cdot \hat{\mathbf{r}}_{ts}$ yields a simple result, so the spherical harmonic term drops out:
\[
    m_{ts}^{(\ell_o)} = \big(D^{(\ell_o)}\big)^{-1} \sum_{\ell_i, \ell_f}W_{\ell_i, \ell_f, \ell_o}\bigoplus_{m_o}
    \sum_{m_i}\big(D^{(\ell_i)} x_s^{(\ell_i)}\big)_{m_i}C_{(\ell_i,m_i),(\ell_f,0)}^{(\ell_o,m_o)}.
\] Now, since one of the rotations on the righthand side has already dropped out, to make things simpler, define $\tilde{x}_s^{(\ell_i)} = D^{(\ell_i)} x_s^{(\ell_i)}$, yielding
\[
    m_{ts}^{(\ell_o)} = \big(D^{(\ell_o)}\big)^{-1} \sum_{\ell_i, \ell_f}W_{\ell_i, \ell_f, \ell_o}\bigoplus_{m_o}
    \sum_{m_i}\big(\tilde{x}_s^{(\ell_i)}\big)_{m_i}C_{(\ell_i,m_i),(\ell_f,0)}^{(\ell_o,m_o)}.
\] Since the Clebsch-Gordon coefficients are non-zero only when $m_i = \pm m_o$, the second summation can be omitted:
\[
    m_{ts}^{(\ell_o)} = \big(D^{(\ell_o)}\big)^{-1} \sum_{\ell_i, \ell_f}W_{\ell_i, \ell_f, \ell_o}\bigoplus_{m_o}
    \Big(\big(\tilde{x}_s^{(\ell_i)}\big)_{m_o}C_{(\ell_i,m_o),(\ell_f,0)}^{(\ell_o,m_o)} + \big(\tilde{x}_s^{(\ell_i)}\big)_{-m_o}C_{(\ell_i,-m_o),(\ell_f,0)}^{(\ell_o,m_o)}\Big).
\] Now, to avoid the need to sum over $\ell_f$, rather than learning parameters $W_{\ell_i, \ell_f, \ell_o}$, eSCN learns parameters $\tilde{W}_{m}^{(\ell_i, \ell_o)}$, defined for $m \geq 0$ as
\[
    \tilde{W}_{m}^{(\ell_i, \ell_o)} = \sum_{\ell_f}W_{\ell_i, \ell_f, \ell_o}C_{(\ell_i,m),(\ell_f,0)}^{(\ell_o,m)} = \sum_{\ell_f}W_{\ell_i, \ell_f, \ell_o}C_{(\ell_i,-m),(\ell_f,0)}^{(\ell_o,-m)},
\] and for $m < 0$ as
\[
    \tilde{W}_{m}^{(\ell_i, \ell_o)} = \sum_{\ell_f}W_{\ell_i, \ell_f, \ell_o}C_{(\ell_i,-m),(\ell_f,0)}^{(\ell_o,m)} = -\sum_{\ell_f}W_{\ell_i, \ell_f, \ell_o}C_{(\ell_i,m),(\ell_f,0)}^{(\ell_o,-m)}.
\] There exists a linear bijection between $W$ and $\tilde{W}$, so this parameterization loses no information. Finally, defining
\[\begin{aligned}
    \big(y_{ts}^{\ell_i,\ell_o}\big)_{m_o} & =
    \tilde{W}^{(\ell_i,\ell_o)}_{m_o}\big(\tilde{x}^{\ell_i}_s\big)_{m_o}
    -
    \tilde{W}^{(\ell_i,\ell_o)}_{-m_o}\big(\tilde{x}^{\ell_i}_s\big)_{-m_o},
    \: & m &> 0; \\
    \big(y_{ts}^{(\ell_i,\ell_o)}\big)_{m_o} & =
    \tilde{W}^{(\ell_i,\ell_o)}_{m_o}\big(\tilde{x}^{\ell_i}_s\big)_{-m_o}
    +
    \tilde{W}^{(\ell_i,\ell_o)}_{-m_o}\big(\tilde{x}^{\ell_i}_s\big)_{m_o},
    \: & m &< 0; \\
    \big(y_{ts}^{(\ell_i,\ell_o)}\big)_{m_o} & =
    \tilde{W}^{(\ell_i,\ell_o)}_{m_o}\big(\tilde{x}^{\ell_i}_s\big)_{m_o},
    \: & m &= 0,
\end{aligned}\] the message equation can very
    concisely be written as
\[
    m_{ts}^{\ell_o} = \big(D^{\ell_o}\big)^{-1}\sum_{\ell_i}\bigoplus_{m_o}\big(y_{ts}^{\ell_i,\ell_o}\big)_{m_o}.
\]}

\par{There is another way to interpret this that
is perhaps more intuitive. Fixing the direction of
the relative position vector, $\hat{\mathbf{r}}_{st}$
leaves a single rotational degree of freedom: the
roll rotation about this axis. Thus, eSCN reduces
$\mathrm{SO}(3)$ convolution to $\mathrm{SO}(2)$ convolution. More
formally, define the colatitude angle $\theta \in
[0, \pi]$ and longitudinal angle $\phi \in [0,
2\pi]$. Now, using Legendre polynomials
$P^{(\ell)}_m(\theta)$, which depend only on the
colatitude angle, $\theta$, the real spherical
harmonic basis functions can be written 
\[
    Y_m^{(\ell)}(\theta, \phi) = P_m^{(\ell)}(\theta)e^{im\phi}.
\] Aligning the relative 
position vector with the $y$-axis fixes $\theta$,
leaving behind basis functions of the form
$e^{im\phi}$, which are the circular harmonic
basis functions, used in $\mathrm{SO}(2)$ convolution. A
convolution about $\phi$ can take advantage of the
Convolution Theorem, reducing convolution to
point-wise multiplication, further harvesting
efficiency gains.}

\newpage
\section{Additional Background}
\par{Quantum mechanics is most often approached as
the study of the Schr\"odinger equation, a linear
partial differential equation whose solutions are
called wavefunctions. In this paper, as is
typical in molecular modeling, the
time-independent Schr\"odinger equation will be
used:
\[
    H\psi = E \psi,
\] where $H$, the Hamiltonian operator, corresponds to the total energy of the system, $E$ is an eigenvalue of $H$ corresponding to the energy of $\psi$, and $\psi$ is an eigenfunction of $H$, also called a wavefunction, or a solution to the Schr\"odinger equation.}

\par{As alluded to, the Hamiltonian operator
contains all the information regarding the kinetic
and potential energies for all particles of a
system. Thus,
\[\begin{aligned}
    H = -\frac{1}{2}\sum_{i=1}^{N}\nabla_i^2
    & - \frac{1}{2}\sum_{A=1}^{M}
    \frac{1}{M_A}\nabla_A^2 -
    \sum_{i=1}^{N}\sum_{A=1}^{M}
    \frac{Z_a}{r_{iA}} \\
    & + \sum_{i=1}^{N}\sum_{j>1} \frac{1}{r_{ij}}
    + \sum_{A=1}^{M}\sum_{B>A} \frac{Z_A
    Z_B}{R_{AB}},
\end{aligned}\] 
where the first summation is
due to the kinetic energy of the electrons,
the second summation is due to the kinetic energy of the
nuclei, the first double summation is due to the
potential of the attraction between
electrons and nuclei, the second double
summation is due to the potential of the repulsion between
electrons and electrons, and the third double
summation is due to the potential of the
repulsion between nuclei.}

\par{This formula is unwieldy, however, and it can
be simplified significantly without discarding
much information that would be useful in chemical
prediction. The mass difference between electrons
and protons, which is the minimum mass difference
between electrons and nuclei, is more than 3
orders of magnitude. Thus, given the same kinetic
energy, electrons will be traveling several times
faster than nuclei. From the perspective of the
electrons, the nuclei are nearly fixed, and from
the perspective of the nuclei, the electrons
change their position instantaneously. Therefore,
it suffices to consider the positions of the
nuclei fixed, setting 
\[
    -\frac{1}{2}\sum_{A=1}^{M} \frac{1}{M_A}\nabla_A^2,
\]
nuclear kinetic energy, to 0, and
\[
    \sum_{A=1}^{M}\sum_{B>A} \frac{Z_A Z_B}{R_{AB}},
\] 
nuclear repulsive potential, to a constant. This
is called the Born-Oppenheimer Approximation, and
it leaves behind the so-called electronic
Hamiltonian,
\[
    H_{\mathrm{elec}} = -\frac{1}{2}\sum_{i=1}^{N}\nabla_i^2 -
    \sum_{i=1}^{N}\sum_{A=1}^{M}
    \frac{Z_a}{r_{iA}} 
    + \sum_{i=1}^{N}\sum_{j>1} \frac{1}{r_{ij}},
\] which can more succinctly be written
\[
    H_{\mathrm{elec}} = T + V_{\mathrm{Ne}} + V_{\mathrm{ee}}.
\] The electronic Hamiltonian is still solved for its eigenfunctions, giving
\[ 
    H_{\mathrm{elec}} \Psi_{\mathrm{elec}}(\mathbf{r}_1,\hdots,\mathbf{r}_N) =  E_{\mathrm{elec}} \Psi_{\mathrm{elec}}(\mathbf{r}_1,\hdots,\mathbf{r}_N),
\] ignoring for simplicity’s sake electron spin to write $\Psi_{\mathrm{elec}}$ as a function of the positions of the electrons only. For readability, the subscripts of $H_{\mathrm{elec}}$, $E_{\mathrm{elec}}$, and $\Psi_{\mathrm{elec}}$ are henceforth omitted.}

\par{For a system of $N$ electrons, $\Psi$ is a
function of $3N$ arguments, one for each spatial
dimension of each electron. The Hartree-Fock
Approximation, also called the Hartree Product,
further simplifies this equation by decomposing
the wavefunction into the product of $N$ wave
functions, each of three arguments, corresponding
to each electron individually:
\[
    \Psi(\mathbf{r}_1,\hdots,\mathbf{r}_n) \approx
    \psi_1(\mathbf{r}_1) \cdots \psi_N(\mathbf{r}_N).
\] It is impossible to observe the wavefunction itself, but the wavefunction can be used to derive the probability of observing the system’s electrons anywhere in space. In particular, the probability of observing the electrons at positions $\mathbf{r}_1$ through $\mathbf{r}_N$ is the square of the amplitude of the wavefunction with $\mathbf{r}_1$ through $\mathbf{r}_N$ as input:
\[
    \vert \Psi(\mathbf{r}_1, \hdots, \mathbf{r}_N) \vert^2 = \Psi^{\dagger}(\mathbf{r}_1, \hdots, \mathbf{r}_N)\Psi(\mathbf{r}_1, \hdots, \mathbf{r}_N).
\] Given the Hartree-Fock Approximation, this equation can be converted into an electron density function that sums over the wavefunction of each electron individually,
\[
    n(\mathbf{r}) = 2 \sum_{i=1}^{N} \psi_{i}^{\dagger}(\mathbf{r}) \psi_{i}(\mathbf{r}),
\] multiplying by 2 to account for both spin up and spin down, which were neglected previously. This means that $n(\mathbf{r})$ gives the density of electrons at a point in space, $\mathbf{r}$. This is a function of only 3 inputs, but it contains much of the information that is observable from the full wavefunction, which, recall, is a function of $3N$ inputs.}

\par{Density Functional Theory (DFT), which allows
the electron density function, $n(\mathbf{r})$, to be
exploited, rests on two fundamental theorems.
First, the ground-state energy, $E_0$, which is the
smallest eigenvalue of the Hamiltonian and
corresponds to the energy of the lowest-energy
wavefunction, is a unique functional of the
electron density function. Second, the electron
density that minimizes the energy of this
functional is the true electron density, which
means that it is the electron density that
corresponds to the full solution of the
Schr\"odinger equation. Therefore, after the
problem has been reduced from one of $3N$ inputs to one of $3$
inputs, significant information about the original
problem can still be found.}

\par{It is worth examining the first statement in
more detail. Recall that a functional is a
function that maps functions to scalars. For
example, $F$, defined by 
\[
    F[f(x)] = \int_{-1}^{1}f(x)dx,
\] is a functional that takes in an
arbitrary real-valued function and gives out its
integral from $-1$ to $1$. Thus, for example, if
$f(x) = x^2 + 1$, then $F[f(x)] = \frac{8}{3}$.
The first statement, then, holds that there exists
a functional that uniquely determines the ground
state energy of a particular electron density
function. It promises that no more information is
needed. Written as an equation, $E_0 =
F(n(\mathbf{r}))$.}

\par{However, this set-up still relies on some
means of finding the individual-electron wave
function. This is provided by the Kohn-Sham
Equation:
\[
    \Big[\frac{1}{2}\nabla^2 + V(\mathbf{r}) + V_\mathrm{H}(\mathbf{r}) + V_{\mathrm{XC}}(\mathbf{r})\Big] \psi_i(\mathbf{r}) = \epsilon_i \psi_i(\mathbf{r}).
\] The meaning of each term is as follows. The value $\frac{1}{2}\nabla^2$ is kinetic energy. The value $V(\mathbf{r})$ is the potential due to the interaction between the electron and the nuclei. The value $V_\mathrm{H}(\mathbf{r})$, called the Hartree potential, is defined
\[
    V_\mathrm{H}(\mathbf{r}) = \int \frac{n(\mathbf{r} ^\prime)}{\vert \mathbf{r} - \mathbf{r} ^\prime \vert} d^3\mathbf{r}^\prime,
\] and it is due to the Coulomb repulsion between the electron and the electron density function, defined by all electrons in the system. This means that the Hartree potential includes the Coulomb repulsion between the electron and itself, since the electron itself is included in the electron density function, $n(\mathbf{r})$. This is an unphysical result, and it is one of several effects accounted for in the exchange-correlation potential, $V_{\mathrm{XC}}(\mathbf{r})$, which is defined
\[
    V_{\mathrm{XC}}(\mathbf{r}) = \frac{\delta E_{\mathrm{XC}}(\mathbf{r})}{\delta n(\mathbf{r})}.
\] This is the functional derivative of the exchange-correlation energy with respect to electron density. Exchange-correlation potential is due to quantum-chemical effects, and its form is not known because the exact form of $E_{\mathrm{XC}}(\mathbf{r})$ is not known. The Kohn-Sham Equation can be summarized as
\[
    H_{\mathrm{KS}}\psi_i(\mathbf{r}) = \epsilon_i \psi_i(\mathbf{r}).
\]
}

\par{It is evident from the definitions of the
terms in the Kohn-Sham Equation that the present
approach is circular. The Kohn-Sham Equation
count on the Hartree potential, $V_H(\mathbf{r})$.
The Hartree potential counts on the electron
density function, $n(\mathbf{r})$. The electron
density function counts on the single-electron
wavefunctions, $\psi_i(\mathbf{r})$. And the
single-electron wavefunctions count on the
Kohn-Sham Equation.}

\par{This is no problem. The following process
leverages this circularity to check the soundness
of a particular $n(\mathbf{r})$:
\begin{enumerate}
    \item Define an initial trial electron density
    function, $n_{\mathrm{trial}}(\mathbf{r})$.
    \item Solve the Kohn-Sham Equation using the
    trial electron density function,
    $n_{\mathrm{trial}}(\mathbf{r})$, to find the
    single-electron wavefunctions,
    $\psi_i(\mathbf{r})$.
    \item Calculate the electron density function,
    $n_{\mathrm{KS}}(\mathbf{r})$, implied by these
    single-electron wavefunctions, by
    $n_{\mathrm{KS}}(\mathbf{r}) = 2
    \sum_{i=1}^{N}\psi_i^{\dagger}(\mathbf{r})\psi_i(\mathbf{r})$.
    \item Compare the calculated electron density,
    $n_{\mathrm{KS}}(\mathbf{r})$, with the trial
    electron density,
    $n_{\mathrm{trial}}(\mathbf{r})$. If the two
    densities are the same, or nearly so, then
    this electron density is accepted as correct,
    and it can be used to find the ground-state
    energy, $E_0$. If not, the trial electron
    density is updated somehow, and the process
    repeats. \end{enumerate}}

\par{To make this process simpler and more
efficient, it is common to represent the
single-electron wavefunctions as linear
combinations of some predefined basis set
$\{\phi_\alpha(\mathbf{r})\}_{\alpha=1}^{B}$, where
$B$ is defined as the cardinality of the basis.
Often, the basis set is composed of atomic
orbitals, especially Gassian-type orbitals, which
are particularly convenient in calculation. The
expansion coefficients of these wavefunctions can
be organized in a matrix $\mathbf{C} \in \mathbb{R}^{B
\times N}$, where each column $i$ contains the
coefficients of wavefunction $i$. Each wave
function can then be recovered as
\[
    \psi_i(\mathbf{r}) = \sum_{\alpha=1}^{B}\mathbf{C}_{\alpha i} \phi_\alpha(\mathbf{r}).
\]}

\par{This permits the Kohn-Sham Equation as a
whole to be written in matrix form. Recall the
form $H_{\mathrm{KS}}\psi_i(\mathbf{r}) =
\epsilon_i \psi_i(\mathbf{r})$. Now, it is possible
to rewrite
$\psi_i(\mathbf{r})$ according to its decomposition
in the basis:
\[
    H_{\mathrm{KS}} \sum_{\alpha=1}^{B}\mathbf{C}_{\alpha i} \phi_\alpha(\mathbf{r})=
    \epsilon_i \sum_{\alpha=1}^{B}\mathbf{C}_{\alpha i} \phi_\alpha(\mathbf{r}).
\] Now, for any $\phi_{\beta}(\mathbf{r})$ in the basis, left-multiplying both sides by $\phi^\dagger_{\beta}(\mathbf{r})$ yields
\[
    \phi^\dagger_{\beta}(\mathbf{r}) H_{\mathrm{KS}} \sum_{\alpha=1}^{B}\mathbf{C}_{\alpha i} \phi_\alpha(\mathbf{r}) =
    \phi^\dagger_{\beta}(\mathbf{r}) \epsilon_i \sum_{\alpha=1}^{B}\mathbf{C}_{\alpha i} \phi_\alpha(\mathbf{r}),
\] and integrating with respect to $\mathbf{r}$ yields
\[
    \int \phi^\dagger_{\beta}(\mathbf{r}) H_{\mathrm{KS}} \sum_{\alpha=1}^{B}\mathbf{C}_{\alpha i} \phi_\alpha(\mathbf{r}) d\mathbf{r} =
    \int \phi^\dagger_{\beta}(\mathbf{r}) \epsilon_i \sum_{\alpha=1}^{B}\mathbf{C}_{\alpha i} \phi_\alpha(\mathbf{r}) d\mathbf{r}.
\] More succinctly, defining ${(H_{\mathrm{KS}})}_{\beta \alpha} = \int \phi^\dagger_\beta H_{\mathrm{KS}} \phi_\alpha = \langle \phi_\beta \vert H_{\mathrm{KS}} \vert \phi_\alpha \rangle$ and $S_{\beta \alpha} = \int \phi^\dagger_\beta \phi_\alpha = \langle \phi_\beta \vert \phi_\alpha \rangle$, this is
\[
    \sum_{\alpha=1}^{B}{(H_{\mathrm{KS}})}_{\beta \alpha} \mathbf{C}_{\alpha i} = \epsilon_i \sum_{\alpha = 1}^{B}S_{\beta \alpha} \mathbf{C}_{\alpha i}.
\] This allows the construction of the matrix $\mathbf{H}$, with elements 
$\mathbf{H}_{\beta \alpha}$ defined by $\mathbf{H}_{\beta \alpha} =
{(H_{\mathrm{KS}})}_{\beta \alpha} = \langle
\phi_\beta \vert H_{\mathrm{KS}} \vert
\phi_\alpha \rangle$, and the matrix $\mathbf{S}$, with
elements $\mathbf{S}_{\beta \alpha} = \langle \phi_\beta
\vert \phi_\alpha \rangle$. The matrix $\mathbf{H}$
represents the Hamiltonian matrix, which depends
on the coefficient matrix $\mathbf{C}$ and is calculated
using a method called Density-Fitting, whose time
complexity is $O(B^3)$. The matrix $\mathbf{S}$ is the
overlap matrix, which depends on the basis set
$\{\phi_\alpha(\mathbf{r})\}_{\alpha=1}^{B}$ and
accounts for its non-orthogonality. This means
that if the basis set is orthonormal, $\mathbf{S}$ is the
identity. The Kohn-Sham Equation in matrix form is
therefore
\[
    \mathbf{H}\mathbf{C} = \mathbf{S}\mathbf{C}\mathbf{\epsilon},
\] or 
\[
    \mathbf{H}(\mathbf{C}) \mathbf{C} = \mathbf{S}\mathbf{C}\mathbf{\epsilon},
\] making clearer that $\mathbf{H}$ depends on $\mathbf{C}$, where $\mathbf{\epsilon}$ is a diagonal matrix containing the orbital energies. This forms a generalized eigenvalue problem, the principal difficulty of which is the dependence of $\mathbf{H}$ on $\mathbf{C}$.}

\par{As a result, traditional DFT uses the
Self-Consistent Field method (SCF), which
iteratively refines the coefficient matrix by
approximating the Hamiltonian. Using superscripts
to denote the iteration to which each matrix
belongs, at each iteration, the Kohn-Sham Equation is
\[
    \mathbf{H}^{(k)}\big(\mathbf{C}^{(k-1)}\big)\mathbf{C}^{(k)} = \mathbf{S}\mathbf{C}^{(k)}\mathbf{\epsilon}^{(k)}.
\] Therefore, $\mathbf{H}^{(k)}$ is computed using $\mathbf{C}^{k-1}$, and the generalized eigenvalue problem is solved for $\mathbf{C}^{(k)}$ and $\mathbf{\epsilon}^{(k)}$. This process continues until convergence, which is formalized as $\Vert \mathbf{H}^{(k+1)} - \mathbf{H}^{(k)} \Vert \leq \delta$.}

\par{The objective of Hamiltonian prediction is to
eliminate the need for this computationally
expensive SCF iteration by directly estimating the
target Hamiltonian, $\mathbf{H}^\star$, for a given molecular
structure, $\mathcal{M}$. To predict the
Hamiltonian, a machine learning model
$\hat{\mathbf{H}}_\theta(\mathcal{M})$ is parameterized by
$\theta$, guided by an optimization process
defined as
\[
    \theta^\star = \arg \min_\theta \frac{1}{\vert \mathcal{D} \vert} \sum_{(\mathcal{M}, \mathbf{H}^\star_\mathcal{M}) \in \mathcal{D}} \mathrm{dist}(\hat{\mathbf{H}}_\theta(\mathcal{M}), \mathbf{H}^\star_\mathcal{M}), 
\] where $\mathcal{D}$ is the dataset, $\vert \mathcal{D} \vert$ is its cardinality, and $\mathrm{dist}(\cdot, \cdot)$ is a predefined metric. Machine learning models hold the potential to drastically improve the efficiency of DFT calculations without sacrificing accuracy.}

\subsection{Example}

In the example section, we explore the molecular structure and basis set details of water (\(\text{H}_2\text{O}\)), which consists of one oxygen and two hydrogen atoms. We expand each electron's wavefunctions using a basis set, which, in practice, means each atom is represented using a predefined basis set that collectively describes a set of electrons. Specifically, we use the STO-3G minimal basis set, where oxygen is described by five basis functions (\(1s, 2s, 2p_x, 2p_y, 2p_z\)), and each hydrogen atom has two basis functions (1s for each). Altogether, this totals seven basis functions.

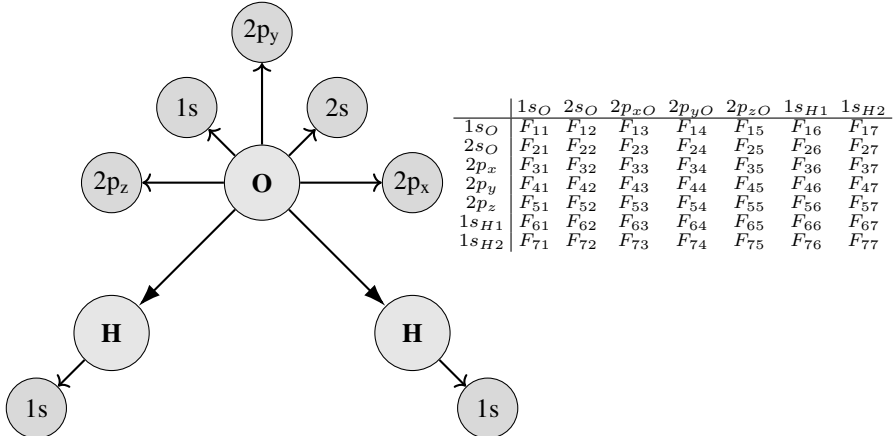
\begin{figure}[h]
    \centering
    \begin{tikzpicture}[
        basis/.style={circle, draw=black, fill=gray!30, minimum size=0.8cm, inner sep=0},
        atom/.style={circle, draw=black, fill=gray!20, minimum size=1.0cm, inner sep=0},
        legend/.style={rectangle, draw=none, fill=none, inner sep=0, font=\small}
    ]
        \node[atom] (O) at (0,0) {\textbf{O}};

        \node[atom] (H1) at (-2,-2) {\textbf{H}};
        \node[atom] (H2) at (2,-2) {\textbf{H}};

        \node[basis] (O1s) at (-1,1) {1s};
        \node[basis] (O2s) at (1,1) {2s};
        \node[basis] (O2px) at (2,0) {2p\textsubscript{x}};
        \node[basis] (O2py) at (0,2) {2p\textsubscript{y}};
        \node[basis] (O2pz) at (-2,0) {2p\textsubscript{z}};

        \node[basis] (H1s1) at (-3,-3) {1s};
        \node[basis] (H1s2) at (3,-3) {1s};

        \draw[-{Latex[length=3mm,width=2mm]}, thick] (O) -- (H1);
        \draw[-{Latex[length=3mm,width=2mm]}, thick] (O) -- (H2);

        \draw[->, thick] (O) -- (O1s);
        \draw[->, thick] (O) -- (O2s);
        \draw[->, thick] (O) -- (O2px);
        \draw[->, thick] (O) -- (O2py);
        \draw[->, thick] (O) -- (O2pz);
        \draw[->, thick] (H1) -- (H1s1);
        \draw[->, thick] (H2) -- (H1s2);

        \node[legend] (matrix) at (5.4,0.1) {
            \scriptsize

                \setlength{\arraycolsep}{2pt}
                \renewcommand{\arraystretch}{0.9}
                            
                $\begin{array}{c|ccccccc}
                    & 1s_O & 2s_O & 2p_{xO} & 2p_{yO} & 2p_{zO} & 1s_{H1} & 1s_{H2} \\
                    \hline
                    1s_O & F_{11} & F_{12} & F_{13} & F_{14} & F_{15} & F_{16} & F_{17} \\
                    2s_O & F_{21} & F_{22} & F_{23} & F_{24} & F_{25} & F_{26} & F_{27} \\
                    2p_x & F_{31} & F_{32} & F_{33} & F_{34} & F_{35} & F_{36} & F_{37} \\
                    2p_y & F_{41} & F_{42} & F_{43} & F_{44} & F_{45} & F_{46} & F_{47} \\
                    2p_z & F_{51} & F_{52} & F_{53} & F_{54} & F_{55} & F_{56} & F_{57} \\
                    1s_{H1} & F_{61} & F_{62} & F_{63} & F_{64} & F_{65} & F_{66} & F_{67} \\
                    1s_{H2} & F_{71} & F_{72} & F_{73} & F_{74} & F_{75} & F_{76} & F_{77}
                \end{array} $
        };
    \end{tikzpicture}
    \caption{Molecular structure and basis functions for water (\(\text{H}_2\text{O}\)).}
    \label{fig:water_structure}
\end{figure}

When considering the effect of rotations on the Hamiltonian matrix, it is important to understand how the matrix elements transform under such operations. 

For instance, if we consider a rotation that affects only the \(2p_x\), \(2p_y\), and \(2p_z\) orbitals of the oxygen atom, the corresponding submatrix of the Hamiltonian \( \mathbf{H}_{p} \) within the \(2p\) orbital space would transform as:

\[
\mathbf{H}'_p = \mathbf{R}_{p} \mathbf{H}_{p} \mathbf{R}_{p}^{\top}
\]

where \( \mathbf{H}_{p} \) is the submatrix containing the interactions between the \(2p_x\), \(2p_y\), and \(2p_z\) orbitals. The rotation matrix \( \mathbf{R}_{p} \) is specific to the rotation in the \(2p\) orbital space.

\begin{center}
    \[
        \setlength{\arraycolsep}{2pt}
        \renewcommand{\arraystretch}{0.9}
        \begin{array}{c|ccccccc}
            & 1s_O & 2s_O & 2p_x & 2p_y & 2p_z & 1s_{H1} & 1s_{H2} \\
            \hline
            1s_O & F_{11} & F_{12} & F_{13} & F_{14} & F_{15} & F_{16} & F_{17} \\
            2s_O & F_{21} & F_{22} & F_{23} & F_{24} & F_{25} & F_{26} & F_{27} \\
            2p_x & F_{31} & F_{32} & \color{red}{F_{33}} & \color{red}{F_{34}} & \color{red}{F_{35}} & F_{36} & F_{37} \\
            2p_y & F_{41} & F_{42} & \color{red}{F_{43}} & \color{red}{F_{44}} & \color{red}{F_{45}} & F_{46} & F_{47} \\
            2p_z & F_{51} & F_{52} & \color{red}{F_{53}} & \color{red}{F_{54}} & \color{red}{F_{55}} & F_{56} & F_{57} \\
            1s_{H1} & F_{61} & F_{62} & F_{63} & F_{64} & F_{65} & F_{66} & F_{67} \\
            1s_{H2} & F_{71} & F_{72} & F_{73} & F_{74} & F_{75} & F_{76} & F_{77}
        \end{array}
    \]
    \[
            \setlength{\arraycolsep}{2pt}
            \renewcommand{\arraystretch}{0.9}
            \begin{array}{c|ccc}
                & 2p_x & 2p_y & 2p_z \\
                \hline
                2p_x & \color{red}{F'_{33}} & \color{red}{F'_{34}} & \color{red}{F'_{35}} \\
                2p_y & \color{red}{F'_{43}} & \color{red}{F'_{44}} & \color{red}{F'_{45}} \\
                2p_z & \color{red}{F'_{53}} & \color{red}{F'_{54}} & \color{red}{F'_{55}}
            \end{array}
        \]

\end{center}

This transformation ensures that the physical properties described by the Hamiltonian remain consistent under rotational operations, a fundamental requirement for accurately modeling molecular systems.

\section{Dipole Moment and Electronic Spatial Extent}
\label{app:dipole}
 In this section, we describe how to compute the dipole moment and the electronic spatial extent using the Kohn-Sham orbitals derived from the Kohn-Sham Hamiltonian.

The Kohn-Sham equations are defined as

\begin{equation}
\left[ -\frac{\hbar^2}{2m} \nabla^2 + V_{\text{eff}}(\mathbf{r}) \right] \psi_i(\mathbf{r}) = \epsilon_i \psi_i(\mathbf{r}),
\end{equation}

where $\psi_i(\mathbf{r})$ denotes the Kohn-Sham orbitals, $\epsilon_i$ are the orbital energies, and $V_{\text{eff}}(\mathbf{r})$ is the effective potential. The electron density is then expressed as

\begin{equation}
\rho(\mathbf{r}) = \sum_{i}^{\text{occ}} |\psi_i(\mathbf{r})|^2.
\end{equation}

The dipole moment $\mathbf{d}$ is computed as the sum of electronic and nuclear contributions. The electronic contribution is given by

\begin{equation}
\mathbf{d}^{\text{elec}} = -e \int \mathbf{r} \rho(\mathbf{r}) \, d\mathbf{r},
\end{equation}

and the nuclear contribution is

\begin{equation}
\mathbf{d}^{\text{nuc}} = -e \sum_{A} Z_A \mathbf{R}_A,
\end{equation}

where $e$ represents the elementary charge, $Z_A$ denotes the atomic number of nucleus $A$, and $\mathbf{R}_A$ is the position vector of nucleus $A$. The total dipole moment is therefore the sum of these two components:

\begin{equation}
\mathbf{d} = \mathbf{d}^{\text{elec}} + \mathbf{d}^{\text{nuc}}.
\end{equation}

Next, we define the electronic spatial extent, denoted by $\langle r^2 \rangle$, which provides a measure of the spatial distribution of the electron density. It is calculated as

\begin{equation}
\langle r^2 \rangle = \int r^2 \rho(\mathbf{r}) \, d\mathbf{r}.
\end{equation}

To compute these quantities in practice, one often works in a basis set representation. The electron density matrix $P_{\mu\nu}$ is defined as

\begin{equation}
P_{\mu\nu} = 2 \sum_{i}^{\text{occ}} C_{i\mu} C_{i\nu},
\end{equation}

where $C_{i\mu}$ are the coefficients of the molecular orbitals in terms of the basis functions $\phi_\mu$. The dipole integrals $\mu_{\alpha}^{\mu\nu}$ for a Cartesian direction $\alpha$ are expressed as

\begin{equation}
\mu_{\alpha}^{\mu\nu} = \int \phi_\mu(\mathbf{r}) \, r_\alpha \, \phi_\nu(\mathbf{r}) \, d\mathbf{r}.
\end{equation}

The electronic contribution to the dipole moment in the basis set representation is then given by

\begin{equation}
d_{\alpha}^{\text{elec}} = -e \sum_{\mu\nu} P_{\mu\nu} \mu_{\alpha}^{\mu\nu}.
\end{equation}

Similarly, the electronic spatial extent in the basis set representation is computed using the integral

\begin{equation}
\langle r^2 \rangle^{\mu\nu} = \int \phi_\mu(\mathbf{r}) \, r^2 \, \phi_\nu(\mathbf{r}) \, d\mathbf{r},
\end{equation}

leading to the final expression

\begin{equation}
\langle r^2 \rangle = \sum_{\mu\nu} P_{\mu\nu} \langle r^2 \rangle^{\mu\nu}.
\end{equation}

\section{Group Theory}
\par{If a function is equivariant to the action of
a group, it does not matter whether the group acts
on the function’s input or output. More formally,
for vector spaces $V$ and $W$ equipped with
arbitrary group representations $D_V(g)$
and $D_W(g)$, a function $f: V \to W$ is
equivariant to $G$ if 
\[
    f(D_V(g)v) = D_W(g)f(v)
\] for all $g \in G$ and for all $v \in V$. In the
case of $\mathrm{SE}(3)$, a function $f$ is equivariant if
the output is the same whether the input is slid
and rotated, then put through $f$, or whether the
input is put through $f$, then slid and rotated in
the same way. A function $f$ is invariant to a
group $G$ if
\[
    D_W(g) = e,
\]the identity element in $W$, for all $g \in G$. In the case of $\mathrm{SE}(3)$, a function $f$ is
equivariant if it gives the same output no matter
how its input is slid and rotated. Equivariance is
a fundamental notion in the modeling of physical
systems. In the context of this paper, it is
desirable that the function that predicts the
Hamiltonian matrix be $\mathrm{SE}(3)$-equivariant,
reflecting that the molecule’s energetic
properties are equivalent if the molecule is
rotated or translated.}

\par{Group representations are an instance of the
more general notion of group homomorphisms. Given
a group $G$ with group operation $\circ$ and
another group $H$ with group operation $*$, a
group homomorphism is a map $\rho: G \to H$ such
that 
\[\rho(g_1 \circ g_2) = \rho(g_1) *
\rho(g_2).\] A group homomorphism, then, must
preserve the structure of the group, which means
that it does not matter whether the group
operation is performed in $G$ or $H$. Note,
however, that a homomorphism might respect the
group structure only trivially. For example,
$\rho: G \to H$, defined by $\rho(g) = e$, is a
trivial group homomorphism. A group representation
is simply a group homomorphism where $H = V$ is
some vector space. In plainer language, this means
that a group representation is a group written as
a set of matrices, whose group operation is matrix
multiplication. It is always the case that $V
\subseteq GL$, since any matrix that is
degenerate or non-square lacks an inverse and
therefore fails to satisfy the inverse axiom.}

\par{Two group representations $D$ and $D^\prime$
are equivalent if there is a fixed matrix $P$ such
that $D(g) = P^{-1}D^\prime(g)P$ for all $g \in
G$. In this case, $D$ and $D^\prime$ can be
interpreted as the same representation defined
with respect to different bases. A representation
$D$ is reducible if it acts on independent
subspaces of $V$; otherwise, it is irreducible.
An irreducible representation is called an
irrep. More formally, $D$ is reducible if
\[
    D(g) = P^{-1}
    \begin{bmatrix}
        D^{(\ell_0)}(g) & & \\
        & D^{(\ell_1)}(g) & \\
        & & \ddots
    \end{bmatrix}
    P = P^{-1}\Big(\bigoplus_iD^{(\ell_i)}(g)\Big) P,
\] which means that it $D$ is block-diagonal with respect to some basis.
It is convenient to decompose group
representations into irreducible representations
because this reduces the group operation
calculation to several smaller independent
calculations. Irreps are the atoms of group
representations in the sense that arbitrary
representations can be composed with the direct
sum of irreps. The irreps of $\mathrm{SO}(3)$ are called
Wigner D-matrices, with $D^{(\ell)}(g)$ denoting a
Wigner D-matrix representation of $g$ of degree
$\ell$. Wigner D-matrices are of size $(2\ell + 1)
\times (2\ell + 1)$, with higher-degree
representations allowing for more precise handling
of angular information.}

\section{Exensive Details on PubChemQH}
\label{app:dataset}
Here, we present a detailed comparison between the PubChemQH dataset and the curated QH9 dataset. This comparison aims to highlight the key differences and similarities. Additionally, we provide a comprehensive analysis of the atom number distribution within the PubChemQH dataset, supported by a Figure \ref{fig:node-dis}.
\begin{table}[ht]
    \centering
    \caption{Comparison of PubChemQH and QH9}
    \resizebox{0.8\textwidth}{!}{%
    \begin{tabular}{lcc}
        \toprule
        Feature & PubChemQH & QH9 \\
        \midrule
        Source & PubChem Database & QM9 \\
        Number of Molecules & 50,321 & 130,831 (QH9-stable) \\
        Functional & B3LYP & B3LYP \\
        Basis Set & Def2TZVP & Def2SVP \\
        SCF Convergence Tolerance & $10^{-8}$ & $10^{-13}$ \\
        SCF Gradient threshold: & $10^{-4}$ & $3.16 \times 10^{-4}$ \\
        Grid Density Level & 3 & 3\\
        Mean of Node Number & 61.85 & 18 \\
        Mean of Hamiltonian Size & 1025 & 141 \\
        \bottomrule
    \end{tabular}
    }

    \label{tab:comparison}
\end{table}

\begin{figure}
    \centering
    \includegraphics[width=0.7\linewidth]{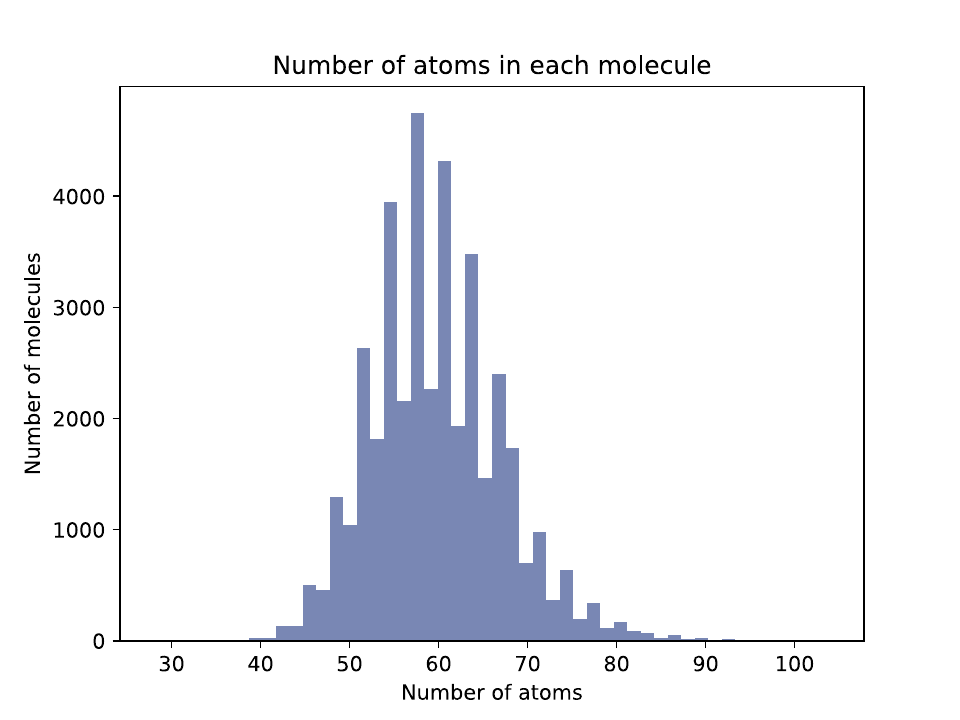}
    \caption{Node atom distribution for PubChemQH Dataset}
    \label{fig:node-dis}
\end{figure}

\section{Extensive Details on WANet}
\begin{figure}
    \centering
    \includegraphics[width=1\linewidth]{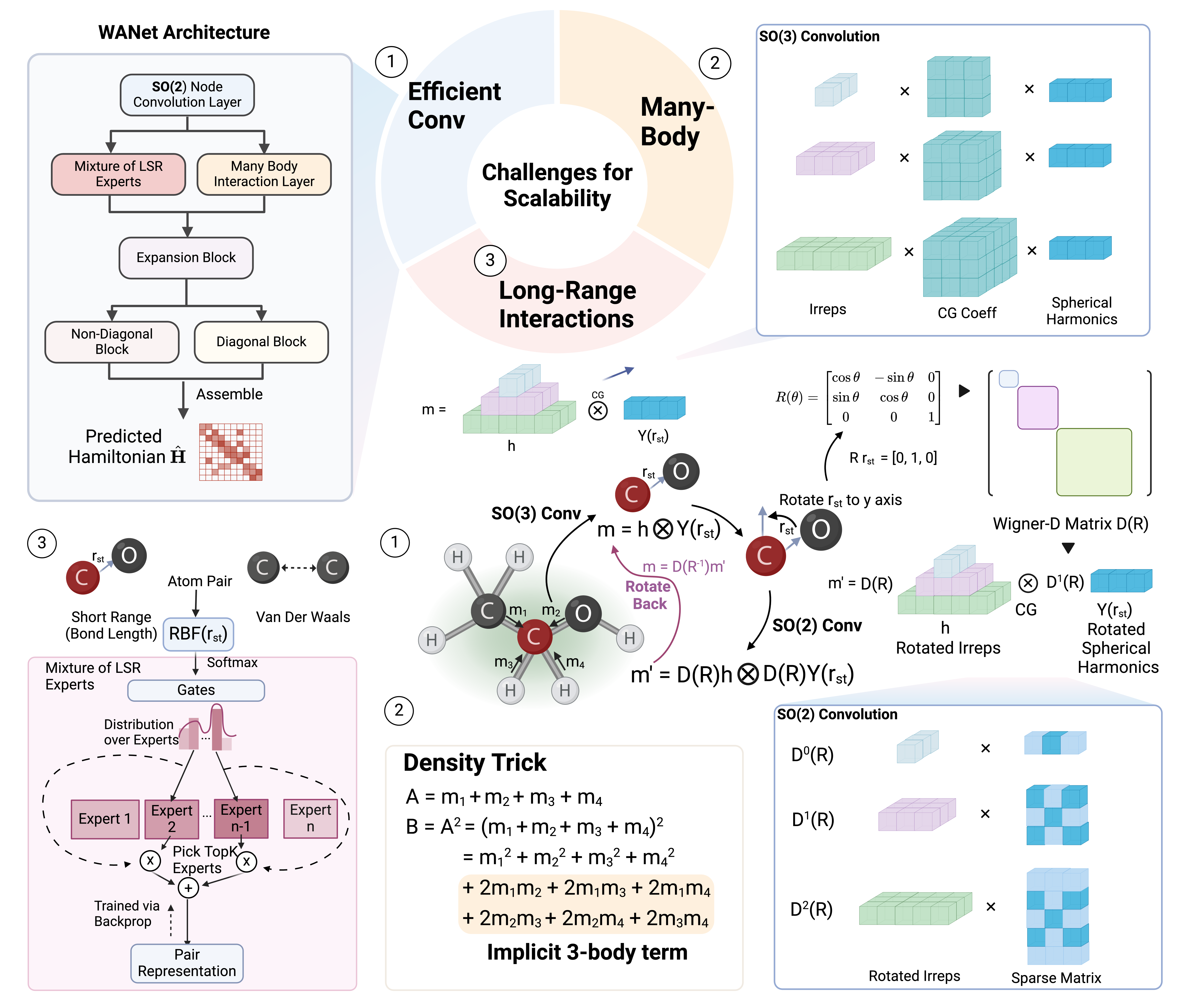}
    \caption{ \textbf{Scalable architecture for predicting the Hamiltonian matrix (\(\mathbf{\hat{H}}\))}.  
The framework addresses three key scalability challenges in quantum many-body systems: (1) efficient convolutions, (2) many-body interactions, and (3) long-range interactions. (1) Efficient convolutions are achieved through an SO(3)-equivariant convolution reduced to SO(2) for computational efficiency. (2) Many-body interactions are captured using a density trick, which implicitly includes three-body terms by combining pairwise interactions quadratically. (3) Long-range interactions are modeled with a mixture of Local-Structure Representation (LSR) experts, where a softmax-based gating mechanism allocates pairwise inputs to top-performing experts. Finally, the diagonal and non-diagonal components are assembled into the predicted Hamiltonian matrix, providing a scalable and accurate representation of the system dynamics.}
    \label{fig:wanet}
\end{figure}
We here provide extensive details on the motivations of the WANet architecture.The WANet architecture builds upon well-established equivariant neural network designs, which have been extensively studied and validated in the field. 

We provide a visual representation illustrating WANet's architecture and its key components, which is given in Figure\ref{fig:wanet} . This visual aid will help readers better understand how WANet enhances the scalability of Hamiltonian prediction for large molecular systems. Each architectural element was carefully chosen to address specific computational and physical challenges at scale:

First, the motivation behind WANet’s architecture stems from a critical limitation in existing Hamiltonian prediction methods—their inability to scale to larger molecular systems. When using larger basis sets, such as Def2-TZVP, higher-order irreducible representations are required to accurately capture angular dependencies in molecular orbitals. This poses a severe computational challenge for traditional SE(3)-equivariant methods, including QHNet, which become extremely expensive due to their computational complexity. WANet addresses this bottleneck by introducing SO(2) convolutions, which reduce computational complexity from $O(L^3)$  to $O(L^6)$ . This improvement enables WANet to process high tensor degrees efficiently.  Without $\mathrm{SO}(2)$ convolutions, handling the scale of our PubChemQH dataset and larger systems would be significantly more challenging and computationally expensive.

Second, large molecular systems exhibit fundamentally different physics at various distance scales. As molecular size increases, long-range interactions become more prominent. WANet's Mixture-of-Experts architecture is designed to model this complex physics efficiently. It employs specialized experts for different interaction ranges, capturing both short-range effects (like covalent bonding) and long-range phenomena (such as electrostatics). By sparsifying these experts, WANet achieves a rich representation of molecular interactions while maintaining computational efficiency, making it particularly well-suited for large-scale systems.

Third, WANet's architecture is designed to accurately capture the intrinsic properties of molecular systems, particularly for large molecules. The Hamiltonian matrix, which fully characterizes the quantum state and electron distribution, presents unique challenges in prediction due to complex electron correlation effects as the system size grows. To address this, WANet incorporates the MACE architecture's density trick, enabling efficient computation of many-body interactions without explicit calculation of all terms. This approach is crucial for maintaining accuracy as molecular size increases and electron correlation effects become more pronounced, ensuring WANet's scalability and precision in Hamiltonian prediction for large systems.

\textbf{Pair Construction Layer} \quad The objective of the pair construction layer is to extend the model's capacity to consider non-diagonal node pairs by introducing a tensor product filter. This filter modulates the projection of irreducible representations onto the space of node pair irreducible representations, denoted by \(f_{ts}\). It is important to note that, in contrast to the Node Convolution Layer, which performs graph convolution on a radius graph or KNN graph, the pair construction layer considers all possible node interactions by operating on a complete graph. The mathematical formulation of this layer is given as follows:

\begin{equation}
f_{ts}^{\ell_o} =  \sum_{l_i, l_j}W_{l_{i}, l_{j}, l_{o}}  \left( {x}_{s}^{l_{i}} \otimes {x}_{t}^{l_{j}} \right)^{l_{o}},
\end{equation}

where \( {x}_s^{l_i} \) and \( {x}_t^{l_j} \) are the \( l_i \)-th and \( l_j \)-th irreducible representations of source node \( s \) and target node \( t \), respectively, and \( W_{l_{i}, l_{j}, l_{o}} \) are the learned weights that couple these representations into the output representation \( \ell_o \).

To improve the efficiency of the tensor product, we employed the channel-grouped tensor product, where the channels of the first and second tensors are tied to collectively construct the output channel path\footnote{This is also called ``uuw'' tensor product in \texttt{e3nn} implementation.}. Additionally, to accommodate the symmetry inherent in the Hamiltonian matrix, we implement a symmetric structure in the pair representations. Specifically, the representations for node pairs \( (s, t) \) and \( (t, s) \) must be identical, reflecting the symmetrical nature of physical interactions. This is formulated as:

\[
{f_{ts}^{\ell_o}}' = {f_{st}^{\ell_o}}' = \frac{1}{2} (f_{ts}^{\ell_o} + f_{st}^{\ell_o}),
\]

where \( f_{ts}^{\ell_o} \) and \( f_{st}^{\ell_o} \) denote the initial, unsymmetrized tensor products for the node pairs \( s \) and \( t \) before the application of symmetry. The primed notations \( {f_{ts}^{\ell_o}}' \) and \( {f_{st}^{\ell_o}}' \) represent the final, symmetrized outputs.

\textbf{Graphical Illustration of MoE}\quad Here we provide an additional illustration of the Mixture of the Long-Short-Range Pair experts described in the maintext, which is shown in Figure~\ref{fig:MoE}.
\begin{figure}
    \centering
    \includegraphics[width=0.5\linewidth]{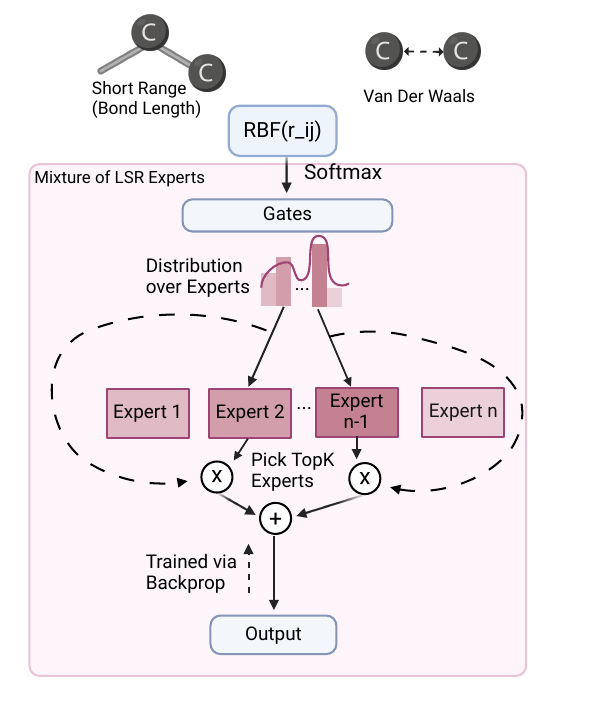}
    \caption{illustration of the Mixture of the Long-Short-Range Pair experts}
    \label{fig:MoE}
\end{figure}

\textbf{Expansion Block}\quad In the construction of final Hamiltonian blocks that encompass full orbital information using pair irreducible representations and many-body irreducible representation, a tensor expansion operation is employed alongside the filtering process. This expansion is  defined by the following relation:

\begin{equation}
\left( \overline{\otimes}_{\ell_o} f^{\ell_o} \right)_{(\ell_i,\ell_j)}^{(m_i,m_j)} = \sum_{m_o=-\ell_o}^{\ell_o} C_{(\ell_i,m_i),(\ell_j,m_j)}^{(\ell_o,m_o)} f_{m_o}^{\ell_o},
\end{equation}

where \( C \) denotes the Clebsch-Gordan coefficients, and \( \overline{\otimes} \) symbolizes the tensor expansion which is the converse operation of tensor product. \( u^{\ell_i} \otimes v^{\ell_j} \) can be expressed as a sum over tensor expansions:

\begin{equation}
u^{\ell_i} \otimes v^{\ell_j} = \sum_{\ell_3} W_{l_{i}, l_{j}, l_{o}}\overline{\otimes} f^{\ell_o},
\end{equation}

subject to the coupling constraints \( |\ell_i - \ell_j| \leq \ell_o \leq \ell_i + \ell_j \).

\textbf{Loss Weighting.} We determined the loss weights through a principled ablation study on a validation set. While keeping $\lambda_1$ and $\lambda_2$ fixed at 1, we varied $\lambda_3$ from 0.5 to 3 to understand the impact of WALoss weighting. We found the method to be robust across a reasonable range of values, with $\lambda_1=\lambda_2=1$, and $\lambda_3 = 2.5$ providing consistently strong performance. 
The table below summarizes the performance metrics for different values of $\lambda_1, \lambda_2$ and $\lambda_3$. 

\begin{table}[ht]
\centering
\caption{Performance metrics for with different $\lambda_1$, $\lambda_2$, and $\lambda_3$ values.}
\resizebox{\textwidth}{!}{
\begin{tabular}{ccc|cccccccc}
\hline
$\lambda_1$ & $\lambda_2$ & $\lambda_3$ & Hamiltonian MAE $\downarrow$ & $\epsilon_\mathrm{HOMO} \text{ MAE} \downarrow$ &  $\epsilon_\mathrm{LUMO} \text{ MAE}\downarrow$ &  $\epsilon_\Delta$ & $\epsilon_\mathrm{occ} \text{ MAE}\downarrow$ & $\epsilon_\mathrm{orb} \text{MAE}\downarrow$ & C$\uparrow$ & System Energy MAE $\downarrow$ \\
\hline
1    & 1    & 3    & 0.5258   & 0.8688   & 1.9116   & 2.0407  & 21.92  & 8.18  & 48.90\%   & 54.77           \\
1    & 1    & 2.5  & 0.4744   & 0.7122   & 0.73     & 1.327   & 18.84  & 7.33  & 48.03\%   & 47.193          \\
1    & 1    & 2    & 0.4918   & 0.7847   & 1.24981  & 1.46951 & 21.15  & 7.7   & 47.72\%   & 59.79           \\
1    & 1    & 1.5  & 0.4586  & 1.50825  & 3.29805  & 3.1123  & 23.57  & 8.56  & 47.29\%   & 64.857          \\
1    & 1    & 1    & 0.4807   & 1.1596   & 3.45767  & 3.2068  & 23.16  & 9.2   & 47.15\%   & 62.61719        \\
1    & 1    & 0.5  & 0.4223  & 2.28836  & 6.83532  & 5.69888 & 28.96  & 11.78 & 45.30\%   & 72.9            \\
\hline
\end{tabular}
}

\end{table}

\newpage
\section{Additional Theory}

\subsection{Proof of Theorem}
\label{app:weyl-davis}

\begin{theorem}
Let $\mathbf{H}, \hat{\mathbf{H}} \in \mathbb{R}^{n \times n}$ be symmetric matrices, and let $\mathbf{S} \in \mathbb{R}^{n \times n}$ be a symmetric positive definite matrix. Consider the generalized eigenvalue problems:
\[
\mathbf{H} \mathbf{C} = \mathbf{S} \mathbf{C} \boldsymbol{\epsilon}, \quad \hat{\mathbf{H}}\, \hat{\mathbf{C}} = \mathbf{S}\, \hat{\mathbf{C}}\, \hat{\boldsymbol{\epsilon}},
\]
where $\boldsymbol{\epsilon}$ and $\hat{\boldsymbol{\epsilon}}$ are diagonal matrices of eigenvalues, and $\mathbf{C}$ and $\hat{\mathbf{C}}$ are the corresponding eigenvector matrices. Define $\Delta \mathbf{H} = \hat{\mathbf{H}} - \mathbf{H}$, and let $\delta$ be the minimum distance between the eigenvalue of interest and the rest of the spectrum of $\mathbf{S}^{-1}\mathbf{H}$. Then, the following bounds hold:

\begin{enumerate}
    \item \textbf{Eigenvalue Differences:}
    \[
    \left|\lambda_i(\hat{\mathbf{H}}, \mathbf{S}) - \lambda_i(\mathbf{H}, \mathbf{S})\right| \leq \frac{\kappa(\mathbf{S})}{\|\mathbf{S}\|_2} \|\Delta \mathbf{H}\|_{\mathrm{F}} \leq \frac{\kappa(\mathbf{S})}{\|\mathbf{S}\|_2} \|\Delta \mathbf{H}\|_{1,1},
    \]
    where $\kappa(\mathbf{S}) = \|\mathbf{S}\|_2 \|\mathbf{S}^{-1}\|_2$ is the condition number of $\mathbf{S}$ with respect to the spectral norm, $\|\cdot\|_{\mathrm{F}}$ denotes the Frobenius norm, and $\|\cdot\|_{1,1}$ denotes the element-wise $l_1$ norm.

    \item \textbf{Eigenspace Angle:}
    \[
    \sin \theta \leq \frac{\kappa(\mathbf{S})}{\|\mathbf{S}\|_2} \cdot \frac{\|\Delta \mathbf{H}\|_{\mathrm{F}}}{\delta} \leq \frac{\kappa(\mathbf{S})}{\|\mathbf{S}\|_2} \cdot \frac{\|\Delta \mathbf{H}\|_{1,1}}{\delta},
    \]
    where $\theta$ is the angle between the eigenspaces corresponding to $\lambda_i(\mathbf{H}, \mathbf{S})$ and $\lambda_i(\hat{\mathbf{H}}, \mathbf{S})$.
\end{enumerate}
\end{theorem}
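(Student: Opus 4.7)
The plan is to reduce the generalized eigenvalue problem to a standard symmetric eigenvalue problem by a congruence transformation, apply the classical Weyl inequality for the eigenvalue bound and the Davis--Kahan $\sin\theta$ theorem for the eigenspace bound, and then track how the norm of the perturbation is amplified by the transformation in terms of $\mathbf{S}$.

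First, since $\mathbf{S}$ is symmetric positive definite, I will introduce $\mathbf{S}^{-1/2}$ and define
\[
\tilde{\mathbf{H}} := \mathbf{S}^{-1/2}\mathbf{H}\mathbf{S}^{-1/2}, \qquad \widetilde{\Delta \mathbf{H}} := \mathbf{S}^{-1/2}(\Delta \mathbf{H})\mathbf{S}^{-1/2}.
\]
A direct computation shows that $(\lambda,\mathbf{c})$ is a generalized eigenpair of $(\mathbf{H},\mathbf{S})$ if and only if $(\lambda,\mathbf{S}^{1/2}\mathbf{c})$ is an ordinary eigenpair of $\tilde{\mathbf{H}}$, and similarly for $\hat{\mathbf{H}}$. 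Thus the generalized eigenvalues of $(\mathbf{H},\mathbf{S})$ and $(\hat{\mathbf{H}},\mathbf{S})$ coincide with the ordinary eigenvalues of the symmetric matrices $\tilde{\mathbf{H}}$ and $\tilde{\mathbf{H}} + \widetilde{\Delta \mathbf{H}}$.

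For the eigenvalue bound, I apply Weyl's inequality to these two symmetric matrices, which yields $|\lambda_i(\hat{\mathbf{H}},\mathbf{S}) - \lambda_i(\mathbf{H},\mathbf{S})| \leq \|\widetilde{\Delta \mathbf{H}}\|_2$. The key step is then the submultiplicative estimate
\[
\|\widetilde{\Delta \mathbf{H}}\|_2 \;\leq\; \|\mathbf{S}^{-1/2}\|_2^{\,2}\,\|\Delta \mathbf{H}\|_2 \;=\; \|\mathbf{S}^{-1}\|_2\,\|\Delta \mathbf{H}\|_2 \;=\; \frac{\kappa(\mathbf{S})}{\|\mathbf{S}\|_2}\,\|\Delta \mathbf{H}\|_2,
\]
using $\|\mathbf{S}^{-1/2}\|_2^{2} = \|\mathbf{S}^{-1}\|_2$ and the identity $\kappa(\mathbf{S}) = \|\mathbf{S}\|_2\|\mathbf{S}^{-1}\|_2$. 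Finally, I close by chaining the standard norm comparisons $\|\Delta \mathbf{H}\|_2 \leq \|\Delta \mathbf{H}\|_{\mathrm{F}} \leq \|\Delta \mathbf{H}\|_{1,1}$, the second of which follows from Cauchy--Schwarz on the entries.

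For the eigenspace bound, I invoke the Davis--Kahan $\sin\theta$ theorem applied to $\tilde{\mathbf{H}}$ and its perturbation $\tilde{\mathbf{H}} + \widetilde{\Delta \mathbf{H}}$, which gives $\sin\theta \leq \|\widetilde{\Delta \mathbf{H}}\|_2/\delta$ provided $\delta$ lower-bounds the spectral gap between the eigenvalue of interest and the remaining spectrum. Combined with the same amplification estimate used above, this yields the claimed bound. The main subtlety I expect is justifying that the spectral gap assumed in the statement (phrased for the pencil $(\mathbf{H},\mathbf{S})$ and its perturbation) matches the spectral gap required by Davis--Kahan for the transformed matrix $\tilde{\mathbf{H}}$; since the generalized and transformed spectra coincide, this is consistent, but care is needed to state it cleanly. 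A secondary, purely cosmetic obstacle is that the theorem's proof path through $\|\cdot\|_{\mathrm{F}}$ is slightly lossy compared with going directly through $\|\cdot\|_2$, so I will state both chains to match the displayed inequalities exactly.
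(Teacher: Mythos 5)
Your proof is correct and, interestingly, more rigorous than the paper's own argument on one technical point. The paper transforms the pencil by forming $\mathbf{A} = \mathbf{S}^{-1}\mathbf{H}$ and $\hat{\mathbf{A}} = \mathbf{S}^{-1}\hat{\mathbf{H}}$, then invokes Weyl's perturbation theorem and Davis--Kahan on $\mathbf{A}$ and $\hat{\mathbf{A}}$. However, both of those classical results require the matrices in question to be symmetric, and $\mathbf{S}^{-1}\mathbf{H}$ is in general \emph{not} symmetric (the product of two symmetric matrices is symmetric only when they commute). Your congruence transform $\tilde{\mathbf{H}} = \mathbf{S}^{-1/2}\mathbf{H}\,\mathbf{S}^{-1/2}$ sidesteps this: since $\mathbf{S}^{-1/2}$ is symmetric, $\tilde{\mathbf{H}}$ is symmetric whenever $\mathbf{H}$ is, so Weyl and Davis--Kahan apply directly. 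Because $\tilde{\mathbf{H}}$ is similar to $\mathbf{S}^{-1}\mathbf{H}$ (conjugate by $\mathbf{S}^{1/2}$), the two transforms share the same spectrum, and the norm amplification $\|\mathbf{S}^{-1/2}\|_2^{2} = \|\mathbf{S}^{-1}\|_2 = \kappa(\mathbf{S})/\|\mathbf{S}\|_2$ recovers exactly the paper's constant, so the bounds coincide. The only minor slip is your attribution of $\|\Delta\mathbf{H}\|_{\mathrm{F}} \leq \|\Delta\mathbf{H}\|_{1,1}$ to Cauchy--Schwarz; it is really the elementary fact that the $\ell^2$ norm of a vector of absolute entries is dominated by its $\ell^1$ norm. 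In short: same final inequalities, same two classical lemmas, but you route through the symmetric $\mathbf{S}^{-1/2}\,\cdot\,\mathbf{S}^{-1/2}$ reduction where the paper routes through the nonsymmetric $\mathbf{S}^{-1}\,\cdot\,$, and your choice is the one that actually satisfies the hypotheses of the lemmas being cited.
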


\begin{proof}
Consider the generalized eigenvalue problems for $\mathbf{H}$ and $\hat{\mathbf{H}}$ with respect to $\mathbf{S}$:
\[
\mathbf{H} \mathbf{C} = \mathbf{S} \mathbf{C} \boldsymbol{\epsilon}, \quad \hat{\mathbf{H}}\, \hat{\mathbf{C}} = \mathbf{S}\, \hat{\mathbf{C}}\, \hat{\boldsymbol{\epsilon}}.
\]
Since $\mathbf{S}$ is symmetric positive definite, it is invertible. We can transform the generalized eigenvalue problems into standard eigenvalue problems by multiplying both sides by $\mathbf{S}^{-1}$:
\[
\mathbf{A} = \mathbf{S}^{-1}\mathbf{H}, \quad \hat{\mathbf{A}} = \mathbf{S}^{-1}\hat{\mathbf{H}} = \mathbf{A} + \mathbf{E},
\]
where $\mathbf{E} = \mathbf{S}^{-1} \Delta \mathbf{H}$.

\medskip

\textbf{Weyl's Perturbation Theorem}

\begin{theorem}[Weyl's Perturbation Theorem]
Let $\mathbf{A}, \hat{\mathbf{A}} \in \mathbb{R}^{n \times n}$ be symmetric matrices with eigenvalues $\lambda_1 \leq \lambda_2 \leq \dots \leq \lambda_n$ and $\hat{\lambda}_1 \leq \hat{\lambda}_2 \leq \dots \leq \hat{\lambda}_n$, respectively. Then, for all $i = 1, \dots, n$,
\[
\left| \hat{\lambda}_i - \lambda_i \right| \leq \|\hat{\mathbf{A}} - \mathbf{A}\|_2.
\]
\end{theorem}

Applying Weyl's Theorem to $\mathbf{A}$ and $\hat{\mathbf{A}}$, we have:
\[
\left| \lambda_i(\hat{\mathbf{A}}) - \lambda_i(\mathbf{A}) \right| \leq \|\mathbf{E}\|_2 = \|\mathbf{S}^{-1} \Delta \mathbf{H}\|_2.
\]

\medskip

\textbf{Bounding the Eigenvalue Differences}

Using the sub-multiplicative property of the spectral norm:
\[
\|\mathbf{S}^{-1} \Delta \mathbf{H}\|_2 \leq \|\mathbf{S}^{-1}\|_2 \|\Delta \mathbf{H}\|_2.
\]
Since the spectral norm is bounded by the Frobenius norm:
\[
\|\Delta \mathbf{H}\|_2 \leq \|\Delta \mathbf{H}\|_{\mathrm{F}}.
\]
Combining these inequalities:
\[
\|\mathbf{S}^{-1} \Delta \mathbf{H}\|_2 \leq \|\mathbf{S}^{-1}\|_2 \|\Delta \mathbf{H}\|_{\mathrm{F}}.
\]
The condition number $\kappa(\mathbf{S})$ is defined as:
\[
\kappa(\mathbf{S}) = \|\mathbf{S}\|_2 \|\mathbf{S}^{-1}\|_2 \implies \|\mathbf{S}^{-1}\|_2 = \frac{\kappa(\mathbf{S})}{\|\mathbf{S}\|_2}.
\]
Substituting back:
\[
\|\mathbf{S}^{-1} \Delta \mathbf{H}\|_2 \leq \frac{\kappa(\mathbf{S})}{\|\mathbf{S}\|_2} \|\Delta \mathbf{H}\|_{\mathrm{F}}.
\]
Therefore:
\[
\left| \lambda_i(\hat{\mathbf{A}}) - \lambda_i(\mathbf{A}) \right| \leq \frac{\kappa(\mathbf{S})}{\|\mathbf{S}\|_2} \|\Delta \mathbf{H}\|_{\mathrm{F}}.
\]
Since the eigenvalues of $\mathbf{A}$ and $\hat{\mathbf{A}}$ correspond to the generalized eigenvalues of $(\mathbf{H}, \mathbf{S})$ and $(\hat{\mathbf{H}}, \mathbf{S})$, respectively, we have:
\[
\left| \lambda_i(\hat{\mathbf{H}}, \mathbf{S}) - \lambda_i(\mathbf{H}, \mathbf{S}) \right| \leq \frac{\kappa(\mathbf{S})}{\|\mathbf{S}\|_2} \|\Delta \mathbf{H}\|_{\mathrm{F}}.
\]
Since $\|\Delta \mathbf{H}\|_{\mathrm{F}} \leq \|\Delta \mathbf{H}\|_{1,1}$, we can further bound:
\[
\left| \lambda_i(\hat{\mathbf{H}}, \mathbf{S}) - \lambda_i(\mathbf{H}, \mathbf{S}) \right| \leq \frac{\kappa(\mathbf{S})}{\|\mathbf{S}\|_2} \|\Delta \mathbf{H}\|_{1,1}.
\]

\medskip

\textbf{Davis-Kahan $\sin \theta$ Theorem}

\begin{theorem}[Davis-Kahan $\sin \theta$ Theorem]
Let $\mathbf{A}, \hat{\mathbf{A}} \in \mathbb{R}^{n \times n}$ be symmetric matrices, and let $\mathcal{U}$ and $\hat{\mathcal{U}}$ be the invariant subspaces of $\mathbf{A}$ and $\hat{\mathbf{A}}$ corresponding to eigenvalues in intervals $\mathcal{I}$ and $\hat{\mathcal{I}}$, respectively. If $\delta = \mathrm{dist}(\mathcal{I}, \hat{\mathcal{I}}^c) > 0$, then
\[
\|\sin \Theta\|_2 \leq \frac{\|\hat{\mathbf{A}} - \mathbf{A}\|_2}{\delta},
\]
where $\Theta$ is the matrix of principal angles between $\mathcal{U}$ and $\hat{\mathcal{U}}$.
\end{theorem}

\medskip

\textbf{Bounding the Eigenspace Angle}

Applying the Davis-Kahan Theorem to our case:
\[
\sin \theta \leq \frac{\|\mathbf{E}\|_2}{\delta} = \frac{\|\mathbf{S}^{-1} \Delta \mathbf{H}\|_2}{\delta}.
\]
Using the previously established bound:
\[
\|\mathbf{S}^{-1} \Delta \mathbf{H}\|_2 \leq \frac{\kappa(\mathbf{S})}{\|\mathbf{S}\|_2} \|\Delta \mathbf{H}\|_{\mathrm{F}},
\]
we obtain:
\[
\sin \theta \leq \frac{\kappa(\mathbf{S})}{\|\mathbf{S}\|_2} \cdot \frac{\|\Delta \mathbf{H}\|_{\mathrm{F}}}{\delta}.
\]
Similarly, since $\|\Delta \mathbf{H}\|_{\mathrm{F}} \leq \|\Delta \mathbf{H}\|_{1,1}$:
\[
\sin \theta \leq \frac{\kappa(\mathbf{S})}{\|\mathbf{S}\|_2} \cdot \frac{\|\Delta \mathbf{H}\|_{1,1}}{\delta}.
\]
This completes the proof.
\end{proof}




\subsection{Perturbation Analysis and Growth of Eigenvalues}
\label{app:Eigen-sensitivity}
\begin{theorem}[Generalized Bai-Yin's law]
Let \(A\) be an \(n \times n\) random matrix with independent entries having mean \(\mu\) and variance \(\sigma^2\). Then, with high probability, the spectral norm of \(A\) is bounded by:
\[
\|A\|_2 \leq |\mu| n + 2\sigma \sqrt{n}.
\]
\end{theorem}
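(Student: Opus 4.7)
The plan is to decompose $A$ into its deterministic mean part and a centered random part, then bound each separately. Write
\[
A = \mu J + B, \qquad J := \mathbf{1}\mathbf{1}^{\top},
\]
where $J \in \mathbb{R}^{n \times n}$ is the all-ones matrix and $B := A - \mu J$ has independent entries with mean $0$ and variance $\sigma^{2}$. By the triangle inequality for the spectral norm,
\[
\|A\|_{2} \;\leq\; |\mu|\,\|J\|_{2} \;+\; \|B\|_{2}.
\]
The first term is deterministic and the second is a centered random matrix, so the two pieces can be treated independently.

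The first step is to compute $\|J\|_{2}$ exactly. Since $J = \mathbf{1}\mathbf{1}^{\top}$ is a rank-one symmetric positive semidefinite matrix, its only nonzero eigenvalue equals $\mathbf{1}^{\top}\mathbf{1} = n$, so $\|J\|_{2} = n$. This yields the contribution $|\mu|\,n$ that appears in the claimed bound and accounts entirely for the linear-in-$n$ growth driven by the mean.

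The second step is to control $\|B\|_{2}$ using the classical Bai--Yin law: for an $n \times n$ matrix with independent, mean-zero, variance-$\sigma^{2}$ entries (under the standard moment hypotheses of Bai--Yin), one has $\|B\|_{2} \leq 2\sigma\sqrt{n}$ with high probability as $n \to \infty$. Invoking this as a black box and combining with the triangle-inequality bound above gives
\[
\|A\|_{2} \;\leq\; |\mu|\,n \;+\; 2\sigma\sqrt{n}
\]
with high probability, as required.

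\textbf{Main obstacle.} The delicate point is the invocation of the Bai--Yin bound on $B$: the classical statement requires the centered entries to be i.i.d.\ with a finite fourth moment (or some comparable tail condition), whereas the theorem as stated only assumes independence and common mean/variance. The cleanest fix is to add the finite fourth-moment (or sub-Gaussian) hypothesis explicitly, or to appeal to a non-asymptotic matrix concentration inequality (e.g.\ a matrix Bernstein or Seginer-type bound) that gives $\|B\|_{2} \lesssim \sigma\sqrt{n}$ up to logarithmic factors under weaker assumptions. Either way, the structural step---isolating the rank-one mean contribution of norm $|\mu|n$ from the centered bulk of norm $O(\sigma\sqrt{n})$---is the essential content of the proof, and the rest is a matter of citing the appropriate concentration result.
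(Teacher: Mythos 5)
Your proposal is correct and follows essentially the same route as the paper: decompose $A = \mu J + B$ with $J$ the all-ones rank-one matrix, compute $\|J\|_2 = n$, invoke Bai--Yin on the centered part $B$, and combine via the triangle inequality. Your remark about the hidden i.i.d./finite-fourth-moment hypothesis is apt; the paper's proof in fact silently upgrades ``independent'' to ``i.i.d.'' when citing Bai--Yin, so that caveat applies to the paper's argument as well.
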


\begin{proof}
Consider the matrix \(A\) decomposed into its mean and fluctuation components:
\[
A = \mu J + B,
\]
where \(J\) is the matrix of all ones and \(B\) is a random matrix with zero-mean entries and variance \(\sigma^2\).

First, we bound the spectral norm of the mean component \( \mu J \). The matrix \(J\) is a rank-1 matrix with all entries equal to 1. Its largest singular value is \(n\), so:
\[
\|\mu J\|_2 = |\mu| \cdot n.
\]

Next, we bound the spectral norm of the fluctuation component \(B\). Since \(B\) has i.i.d. entries with zero mean and variance \(\sigma^2\), using Bai-Yin's law, we get:
\[
\|B\|_2 \leq 2\sigma \sqrt{n},
\]
with high probability.

Combining these bounds using the triangle inequality, we have:
\[
\|A\|_2 \leq \|\mu J\|_2 + \|B\|_2 \leq |\mu| n + 2\sigma \sqrt{n}.
\]

Thus, with high probability, the spectral norm of \(A\) is bounded by:
\[
\|A\|_2 \leq |\mu| n + 2\sigma \sqrt{n}.
\]
\end{proof}
\begin{theorem}[Perturbation Sensitivity Scaling]
\label{thm:perturbation}
Consider the generalized eigenvalue problem \(\mathbf{H} \mathbf{C} = \mathbf{S} \mathbf{C} \boldsymbol{\epsilon}\), where \(\mathbf{H} \in \mathbb{R}^{B \times B}\) is a symmetric Hamiltonian matrix, \(\mathbf{S} \in \mathbb{R}^{B \times B}\) is a positive definite overlap matrix, \(\mathbf{C} \in \mathbb{R}^{B \times k}\) is the eigenvector matrix, and \(\boldsymbol{\epsilon} \in \mathbb{R}^{k \times k}\) is a diagonal matrix of eigenvalues \(\lambda_i(\mathbf{H}, \mathbf{S})\). Let \(\hat{\mathbf{H}} = \mathbf{H} + \Delta \mathbf{H}\) be the perturbed Hamiltonian, with \(\Delta \mathbf{H} \in \mathbb{R}^{B \times B}\) a perturbation matrix whose entries have mean \(\mu\) and variance \(\sigma^2\). Suppose the smallest eigenvalue of \(\mathbf{S}\) satisfies:
\[
\lambda_{\min}(\mathbf{S}) = c + \frac{A}{1 + \left( \frac{B}{N_0} \right)^\alpha},
\]
where \(c > 0\), \(A > 0\), \(\alpha > 0\), and \(N_0 > 0\) are fixed constants. Then, for each eigenvalue, there exists a constant \(K > 0\) such that:
\[
\left| \lambda_i(\hat{\mathbf{H}}, \mathbf{S}) - \lambda_i(\mathbf{H}, \mathbf{S}) \right| \leq K \left( \sigma B^{1/2} + B |\mu| \right),
\]
for all sufficiently large \(B\).
\end{theorem}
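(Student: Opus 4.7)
The plan is to combine the spectral perturbation bound from the first theorem of this section with the Generalized Bai--Yin law stated immediately above, using the saturation assumption on $\lambda_{\min}(\mathbf{S})$ to control the prefactor $\kappa(\mathbf{S})/\|\mathbf{S}\|_2$ uniformly in $B$. Since the generalized eigenvalue problem can be converted to the standard one via $\mathbf{A} = \mathbf{S}^{-1}\mathbf{H}$ and the perturbation is $\mathbf{E} = \mathbf{S}^{-1}\Delta\mathbf{H}$, Weyl's theorem (already applied in the proof of the Theorem in Appendix~\ref{app:weyl-davis}) gives
\[
\left|\lambda_i(\hat{\mathbf{H}},\mathbf{S}) - \lambda_i(\mathbf{H},\mathbf{S})\right| \;\leq\; \|\mathbf{S}^{-1}\|_2 \, \|\Delta\mathbf{H}\|_2 \;=\; \frac{\|\Delta\mathbf{H}\|_2}{\lambda_{\min}(\mathbf{S})}.
\]
This is the starting inequality; the rest is a matter of estimating the numerator and denominator separately as functions of $B$.

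\textbf{Bounding the numerator.} First I would apply the Generalized Bai--Yin law to $\Delta\mathbf{H}$, whose entries have mean $\mu$ and variance $\sigma^2$. Although the theorem is stated for $n \times n$ matrices, it applies verbatim to $\Delta\mathbf{H}\in\mathbb{R}^{B\times B}$ and yields, with high probability,
\[
\|\Delta\mathbf{H}\|_2 \;\leq\; |\mu|\,B + 2\sigma\sqrt{B}.
\]
One mild issue is that $\Delta\mathbf{H}$ must be symmetric to be a meaningful Hamiltonian perturbation; symmetrization only changes constants, so the same $O(|\mu|B + \sigma\sqrt{B})$ scaling holds, which I would note briefly.

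\textbf{Bounding the denominator.} Next I would use the assumed form $\lambda_{\min}(\mathbf{S}) = c + A/(1+(B/N_0)^\alpha)$. Since $c>0$, $A>0$, and the second term is nonnegative for every $B$, we have $\lambda_{\min}(\mathbf{S}) \geq c$ for all $B\geq 1$, giving $1/\lambda_{\min}(\mathbf{S}) \leq 1/c$. This is the crucial point where the saturation hypothesis enters: it turns $\|\mathbf{S}^{-1}\|_2$ into a universal constant independent of $B$, which is what allows the right-hand side of the final bound to depend on $B$ only through $\|\Delta\mathbf{H}\|_2$.

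\textbf{Combining and identifying $K$.} Plugging the two estimates into the Weyl bound gives
\[
\left|\lambda_i(\hat{\mathbf{H}},\mathbf{S}) - \lambda_i(\mathbf{H},\mathbf{S})\right| \;\leq\; \frac{1}{c}\bigl(|\mu|\,B + 2\sigma\sqrt{B}\bigr) \;\leq\; K\bigl(\sigma B^{1/2} + B|\mu|\bigr),
\]
where $K := 2/c$ suffices. The constant is indeed independent of $B$, and the estimate holds for all sufficiently large $B$ so that the probabilistic Bai--Yin bound is in force. The likely obstacle is not any single step but rather presenting the dependence cleanly: the exponent $\alpha$ and the constants $A$, $N_0$ do not appear in $K$ because the saturation lower bound $\lambda_{\min}(\mathbf{S})\geq c$ already absorbs them; one should emphasize that this is precisely why $\alpha$ governs \emph{how quickly} the bound becomes tight rather than its functional form, tying back to the discussion in Appendix~\ref{app:scale-eigen}.
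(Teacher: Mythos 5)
Your proof is correct and reaches the paper's bound, but by a genuinely different route. The paper's own argument is an explicit \emph{first-order perturbation} calculation: it writes the exact identity $\delta\lambda_i = \dfrac{\mathbf{c}_i^\top\Delta\mathbf{H}\,\mathbf{c}_i + \mathbf{c}_i^\top\Delta\mathbf{H}\,\delta\mathbf{c}_i}{1+\mathbf{c}_i^\top\mathbf{S}\,\delta\mathbf{c}_i}$, discards the higher-order terms to obtain $\delta\lambda_i \approx \mathbf{c}_i^\top\Delta\mathbf{H}\,\mathbf{c}_i$, and then uses the $\mathbf{S}$-normalization $\mathbf{c}_i^\top\mathbf{S}\,\mathbf{c}_i = 1$ together with $\mathbf{S}\succeq\lambda_{\min}(\mathbf{S})\mathbf{I}$ to get $\|\mathbf{c}_i\|_2^2 \le 1/\lambda_{\min}(\mathbf{S})$ before invoking Bai--Yin; this is why the statement in the main text carries the qualifier ``under the first-order perturbation approximation.'' You instead invoke the \emph{nonperturbative} Weyl bound already derived in Appendix~\ref{app:weyl-davis}, arriving at $|\delta\lambda_i| \le \|\mathbf{S}^{-1}\|_2\|\Delta\mathbf{H}\|_2 = \|\Delta\mathbf{H}\|_2/\lambda_{\min}(\mathbf{S})$ without dropping any terms. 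The two intermediate bounds are numerically identical, and from there the Bai--Yin estimate of $\|\Delta\mathbf{H}\|_2$ and the saturation lower bound $\lambda_{\min}(\mathbf{S})\ge c$ are shared. What your route buys is rigor: you dispense with the remark about higher-order corrections entirely. What the paper's route buys is a tighter conceptual tie to the first-order interpretation of the WALoss ($\mathbf{c}_i^\top\Delta\mathbf{H}\,\mathbf{c}_i$ is exactly the quantity that loss penalizes), which is presumably why they chose the perturbative derivation.

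One technical caveat worth making explicit in your writeup, which you inherit from citing the Appendix~\ref{app:weyl-davis} proof: $\mathbf{A}=\mathbf{S}^{-1}\mathbf{H}$ is not symmetric, so Weyl's theorem does not apply to it directly. The fix is to work with the symmetric similar matrix $\mathbf{S}^{-1/2}\mathbf{H}\,\mathbf{S}^{-1/2}$, whose eigenvalues equal the generalized eigenvalues, and whose perturbation $\mathbf{S}^{-1/2}\Delta\mathbf{H}\,\mathbf{S}^{-1/2}$ satisfies $\|\mathbf{S}^{-1/2}\Delta\mathbf{H}\,\mathbf{S}^{-1/2}\|_2 \le \|\mathbf{S}^{-1}\|_2\|\Delta\mathbf{H}\|_2$, giving the same bound. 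Your constant $K = 2/c$ and the observation that $A$, $N_0$, $\alpha$ are absorbed by the uniform lower bound $\lambda_{\min}(\mathbf{S})\ge c$ both match the paper's conclusion.
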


\begin{proof}
Consider the unperturbed generalized eigenvalue problem \(\mathbf{H} \mathbf{c}_i = \lambda_i \mathbf{S} \mathbf{c}_i\), where \(\mathbf{H}\) and \(\mathbf{S}\) are symmetric, \(\mathbf{S}\) is positive definite, and \(\mathbf{c}_i^T \mathbf{S} \mathbf{c}_i = 1\). The perturbed system is \(\hat{\mathbf{H}} \hat{\mathbf{c}}_i = \hat{\lambda}_i \mathbf{S} \hat{\mathbf{c}}_i\), with \(\hat{\mathbf{H}} = \mathbf{H} + \Delta \mathbf{H}\), \(\hat{\mathbf{c}}_i = \mathbf{c}_i + \delta \mathbf{c}_i\), and \(\hat{\lambda}_i = \lambda_i + \delta \lambda_i\). Substituting and expanding yields:
\[
(\mathbf{H} + \Delta \mathbf{H}) (\mathbf{c}_i + \delta \mathbf{c}_i) = (\lambda_i + \delta \lambda_i) \mathbf{S} (\mathbf{c}_i + \delta \mathbf{c}_i).
\]
Subtracting the unperturbed equation and projecting onto \(\mathbf{c}_i^T\), noting \(\mathbf{c}_i^T (\mathbf{H} - \lambda_i \mathbf{S}) = 0\), we obtain:
\[
\mathbf{c}_i^T \Delta \mathbf{H} \mathbf{c}_i + \mathbf{c}_i^T \Delta \mathbf{H} \delta \mathbf{c}_i = \delta \lambda_i + \delta \lambda_i \mathbf{c}_i^T \mathbf{S} \delta \mathbf{c}_i.
\]
Solving for the perturbation:
\[
\delta \lambda_i = \frac{\mathbf{c}_i^T \Delta \mathbf{H} \mathbf{c}_i + \mathbf{c}_i^T \Delta \mathbf{H} \delta \mathbf{c}_i}{1 + \mathbf{c}_i^T \mathbf{S} \delta \mathbf{c}_i}.
\]
For small \(\Delta \mathbf{H}\), \(\delta \mathbf{c}_i\) is \(O(\|\Delta \mathbf{H}\|_2)\), so \(\mathbf{c}_i^T \Delta \mathbf{H} \delta \mathbf{c}_i\) and \(\mathbf{c}_i^T \mathbf{S} \delta \mathbf{c}_i\) are higher-order terms.  Here, we only consider the first-order term arises from evaluating the perturbation with the unperturbed eigenvector. Higher-order terms account for eigenvector shifts. We claim that analyzing the higher-order terms gives rise to unnecessary complication and does not alter the nature of catastrophic scaling. Specifically:
\[
\delta \lambda_i \approx \mathbf{c}_i^T \Delta \mathbf{H} \mathbf{c}_i.
\]
Bounding this term:
\[
|\delta \lambda_i| \leq |\mathbf{c}_i^T \Delta \mathbf{H} \mathbf{c}_i| \leq \|\mathbf{c}_i\|_2^2 \|\Delta \mathbf{H}\|_2.
\]
Since \(\mathbf{c}_i^T \mathbf{S} \mathbf{c}_i = 1\) and \(\mathbf{S} \succeq \lambda_{\min}(\mathbf{S}) \mathbf{I}\), we have:
\[
\|\mathbf{c}_i\|_2^2 \leq \frac{1}{\lambda_{\min}(\mathbf{S})} = \frac{1}{c + \frac{A}{1 + \left( \frac{B}{N_0} \right)^\alpha}} = \frac{1 + \left( \frac{B}{N_0} \right)^\alpha}{c \left[ 1 + \left( \frac{B}{N_0} \right)^\alpha \right] + A}.
\]
For \(\Delta \mathbf{H}\) as a random matrix, apply a spectral norm bound (by Bai-Yin law):
\[
\|\Delta \mathbf{H}\|_2 \leq C_1 (\sigma B^{1/2} + B |\mu|),
\]
with constant \(C_1\), holding with high probability. Combining:
\[
|\delta \lambda_i| \leq \frac{1 + \left( \frac{B}{N_0} \right)^\alpha}{c \left[ 1 + \left( \frac{B}{N_0} \right)^\alpha \right] + A} C_1 (\sigma B^{1/2} + B |\mu|).
\]
Thus, the leading-order bound is:
\[
\left| \lambda_i(\hat{\mathbf{H}}, \mathbf{S}) - \lambda_i(\mathbf{H}, \mathbf{S}) \right| = O\left( \frac{1 + \left( \frac{B}{N_0} \right)^\alpha}{c \left[ 1 + \left( \frac{B}{N_0} \right)^\alpha \right] + A} \left( \sigma B^{1/2} + B |\mu| \right) \right) = O \left( \sigma B^{1/2} + B |\mu| \right) .
\]
\end{proof}

\begin{remark}i y
If higher-order terms like \(\mathbf{c}_i^T \Delta \mathbf{H} \delta \mathbf{c}_i\) and the denominator correction are included, the perturbation includes an additional term of order \(O\left( \left( \frac{1 + \left( \frac{B}{N_0} \right)^\alpha}{c \left[ 1 + \left( \frac{B}{N_0} \right)^\alpha \right] + A} \right)^2 (\sigma B^{1/2} + B |\mu|)^2 \right)\), reflecting the quadratic contribution of the perturbation strength.
\end{remark}

\subsection{Scaling Law of the Smallest Eigenvalue of Overlap Matrix}
\label{app:scale-eigen}
\begin{figure}
    \centering
    \includegraphics[width=0.7\linewidth]{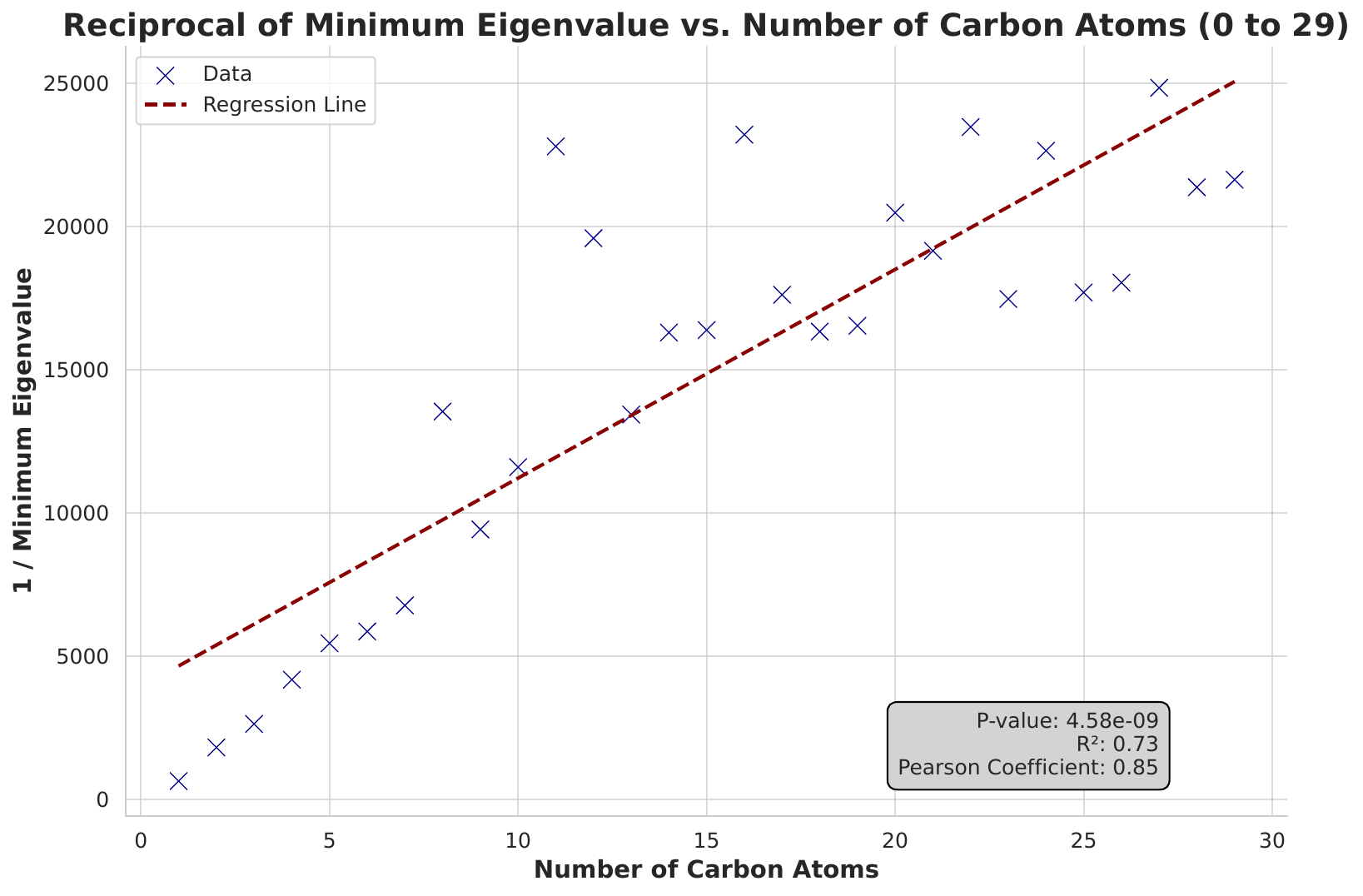}
    \caption{The scaling of the smallest eigenvalue of the overlap matrix in satured hydrocarbons.}
    \label{fig:eigen-carbon}
\end{figure}

Figure \ref{fig:eigen-carbon} shows the relationship between the number of carbon atoms in saturated hydrocarbons and the reciprocal of the smallest eigenvalue of the overlap matrix. The x-axis represents the number of carbon atoms, ranging from 0 to 29. The y-axis represents the reciprocal of the smallest eigenvalue, ranging up to 25,000. Specifically, the regression analysis yields a p-value of \( 4.58 \times 10^{-9} \), an \( R^2 \) value of 0.73, and a Pearson coefficient of 0.85, suggesting a strong linear relationship.

We also compare another scaling law suggested by \citet{spctrum}. As shown in Figure \ref{fig:eigen-carbon2}, we investigated the relationship between \( \frac{1}{\lambda_{\text{min}}} \) and the number of basis functions for various molecules. Our results indicate a clear power-law scaling, which can be expressed as \( y = a \cdot x^b \). The fitted parameters for each molecule are as follows: Eicosane (C20H42): \( y = 0.000107 \cdot x^{2.43} \), Graphene (C24): \( y = 0.00158 \cdot x^{2.83} \), and Diamond (C10): \( y = 0.0178 \cdot x^{4.16} \). This power-law behavior emphasizes the significant impact of the number of basis functions on \( \frac{1}{\lambda_{\text{min}}} \), with different scaling exponents for each molecule.

\subsubsection{\textit{k}-Nearest-Neighbor Overlaps}
Here, we analyzed a system where each basis function overlaps with its \( k \)-nearest neighbors. The overlap matrix \( S \succ 0 \) in this case is a symmetric banded matrix with bandwidth \( 2k + 1 \):

\[
S_{ij} = 
\begin{cases} 
1, & \text{if } i = j, \\
s_{|i-j|}, & \text{if } 0 < |i - j| \leq k, \\
0, & \text{if } |i - j| > k.
\end{cases}
\]

To derive a scaling equation for the smallest eigenvalue, consider the case where \( k = 1 \), which corresponds to the nearest-neighbor model. Other cases can be derived similarly using plane waves and discrete Fourier Transform. The nearest-neighbor model holds physical significance in the context of tight-binding models. For example, in chain-like polyenes, the Huckel model is based on a nearest-neighbor analysis, providing a meaningful framework for discussing this case. The overlap matrix \( S \) is given as:

\[
S =
\begin{bmatrix}
1 & s & 0 & \cdots & 0 \\
s & 1 & s & \ddots & \vdots \\
0 & s & 1 & \ddots & 0 \\
\vdots & \ddots & \ddots & \ddots & s \\
0 & \cdots & 0 & s & 1
\end{bmatrix}
\]

The eigenvalues \( \lambda_n \) of this tridiagonal matrix can be derived using the known solutions for such matrices:

\[
\lambda_n = 1 + 2s \cos \left( \frac{n\pi}{B+1} \right), \quad n = 1, 2, \dots, B.
\]

The smallest eigenvalue corresponds to \( n = B \):

\[
\lambda_{\min} = 1 + 2s \cos \left( \frac{B\pi}{B+1} \right).
\]

Using the trigonometric identity \( \cos(\pi - \theta) = -\cos(\theta) \) and approximating for large \( B \):

\[
\cos \left( \frac{B\pi}{B+1} \right) = -\cos \left( \frac{\pi}{B+1} \right) \approx -\left(1 - \frac{1}{2} \left( \frac{\pi}{B+1} \right)^2 \right).
\]

Substituting back:

\[
\lambda_{\min} \approx 1 - 2s + s \left( \frac{\pi}{B} \right)^2.
\]

Since \( S \succ 0 \), this yields \( 1 - 2s \geq 0 \). Here, we investigated the smallest eigenvalue \( \lambda_{\min} \) as \( B \to \infty \), identifying two distinct cases:

\begin{enumerate}
    \item \textbf{Saturating Case:} When \( s < 0.5 \) the smallest eigenvalue saturates at a finite value \( 1 - 2s \) as \( B \to \infty \).
    \item \textbf{Non-Saturating Case:} For \( s = 0.5 \), \( \lambda_{\min} \) decreases indefinitely with increasing \( B \).
\end{enumerate}

\begin{figure}
    \centering
    \includegraphics[width=1.0\linewidth]{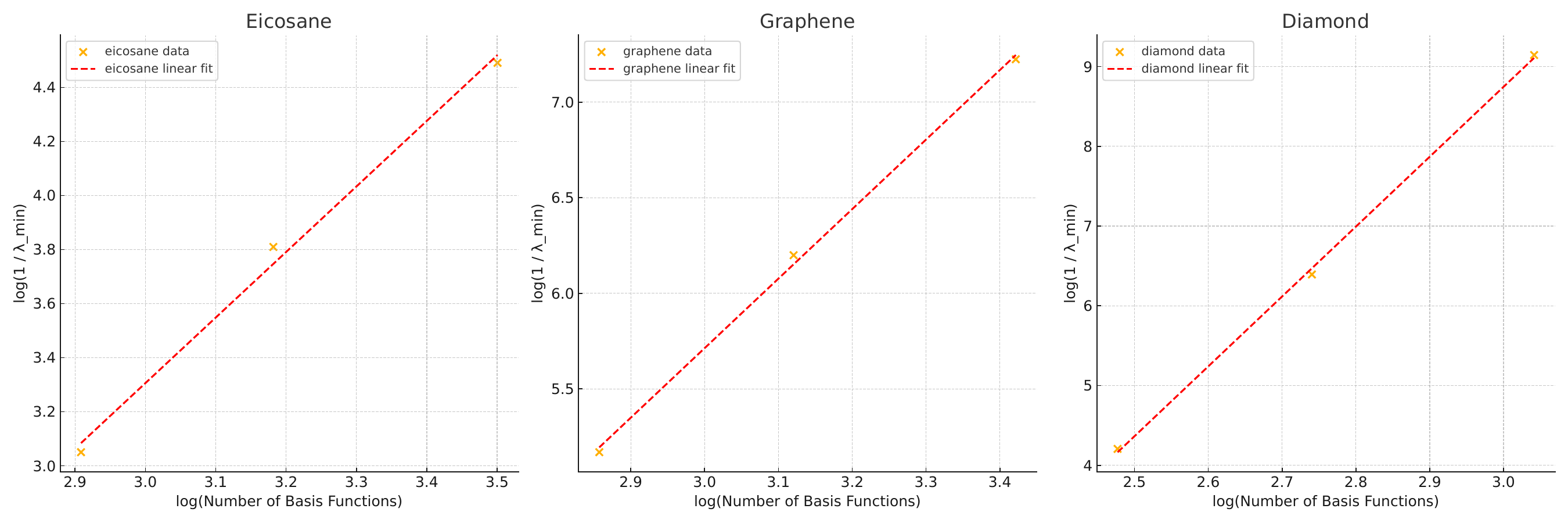}
    \caption{The scaling of the smallest eigenvalue of the overlap matrix in three systems. Data adapted from \cite{spctrum}.}
    \label{fig:eigen-carbon2}
\end{figure}

\section{Training}

\textbf{Predicting the Gap from Initial Guess} \quad In contrast to prior research, our study tackles the significant challenges posed by the scale of the PubChemQC t-zvp dataset. It is very challenging for the model to develop a good numeric starting point. To address this, we have shifted our focus towards a more tractable objective: making predictions based on an initial guess, which is easy to obtain. This adjustment is formalized in our objective function as follows:

\begin{equation}
 \theta^\star = \arg\min_{\theta} \frac{1}{|\mathcal{D}|} \sum_{(\mathcal{M}, \mathbf{H}^\star_\mathcal{M}) \in \mathcal{D}}  \mathrm{dist}\left(\hat{\mathbf{H}}_\theta (\mathcal{M}) , \mathbf{H}^\star_\mathcal{M} -  \mathbf{H}^{(0)}_\mathcal{M} \right),
\end{equation}
where \( |\mathcal{D}| \) denotes the cardinality of dataset \( \mathcal{D} \), and \(\mathrm{dist}(\cdot, \cdot)\) is a predefined distance metric, \( \mathbf{H}^{(0)}_\mathcal{M}\) is the initial guess of the Hamiltonian. We conducted an ablation study, showing that although predictions based on the initial guess significantly improve performance in Hamiltonian prediction, they struggle with predicting physical properties.


\textbf{Total Loss Function} \quad The total loss function used to train our model is a combination of the orbital alignment loss and the mean squared error (MSE) loss between the predicted and true Hamiltonian matrices:

\begin{equation}
\mathcal{L}_{total} = \lambda_1 \left( \frac{1}{n} \sum^n_{i = 1} \| \hat{\mathbf{H}}_i + \mathbf{H}_i^{(0)} - \mathbf{H}_i^\star \|_\mathrm{F}^2 \right) + \lambda_2 \left( \frac{1}{n} \sum^n_{i = 1} \| \hat{\mathbf{H}}_i + \mathbf{H}_i^{(0)}- \mathbf{H}_i^\star \|_\mathrm{F}^1 \right) + \lambda_3 \mathcal{L}_{\mathrm{align}} 
\end{equation}

where $\lambda_1, \lambda_2, \lambda_3$ are hyperparameters that control the relative importance of each  loss component.

\textbf{Experimental Setting} \quad The experimental settings are presented in Table \ref{tab:experimental_settings}. For the PubChemQH dataset, we used an 80/10/10 train/validation/test split, resulting in 40,257 training molecules, 5,032 validation molecules, and 5,032 test molecules. We trained all models of maximum 300,000 steps with a batch size of 8, using early stopping with a patience of 1,000 steps. WANet converges at 278,391 steps, QHNet at 258,267 steps, and PhisNet at 123,170 steps. All models used the Adam optimizer with a learning rate of 0.001 for PubChemQH, along with a polynomial learning rate scheduler with 1,000 warmup steps. We used gradient clipping of 1, and a radius cutoff of 5 \r{A}. For QHNet and PhisNet, we referred to the official implementation for these two models.

\begin{table}[h]
    \centering
    \caption{Hyperparameter settings for the experimental study using the PubChemQH and QH9 datasets.}
    \vspace{2mm}
    \resizebox{\textwidth}{!}{
    \begin{tabular}{lcccccl}
        \hline
        Hyperparameter & PubChemQH (WANet) & QH9 (WANet) & PubChemQH (QHNet) & QH9 (QHNet) & PubChemQH (PhiSNet) & Description \\
        \hline
        Learning Rate & 0.001 & 5e-4 & 0.001 & 5e-4 & 0.001 &\\
        Batch Size & 8 & 32 & 8 & 32 & 8 &  \\
        Scheduler & Polynomial & Polynomial & Polynomial& Polynomial& Polynomial& \\
        LR Warmup Steps & 1,000 & 1,000 & 1,000 & 1,000 & 1,000 & Number of steps to linearly increase the learning rate \\
        Max Steps & 300,000 & 300,000 & 300,000 & 300,000 & 300,000 & Maximum number of training steps \\
        Model Order & 6 & 4 & 6 & 4 & 6 & Maximum degree of the spherical harmonics \\
        Embedding Dimension & 128 & 128 & 128 & 128 & 128 & Dimension of the node embedding \\
        Bottle Hidden Size & N/A & N/A & 32 & 32 & N/A & Size of the hidden layer in the bottleneck \\
        Number of GNN Layers & 5 & 5 & 5 & 5 & 5 & Number of graph neural network layers \\
        Max Radius & 5 & 15 & 5 & 15 & 5 & Maximum distance between nieghboring atoms \\
        Sphere Channels & 128 & 128 & 128 & 128 & 128 & Number of channels in the spherical harmonics \\
        Edge Channels & 128 & 128 & 32 & 32 & 32 & Number of channels for edge features \\
        Drop Path Rate & 0.1 & 0.1 & N/A & N/A & N/A & Probability of dropping a path in the network \\
        Projection Drop & 0.0 & 0.0 & N/A & N/A & N/A & Dropout rate for the projection layer \\
        \hline
    \end{tabular}}
    \label{tab:experimental_settings}
\end{table}

The training details for the HOMO, LUMO, and GAP predictions are presented in Table~\ref{tab:unimol_settings}. 
For a fair comparison, all regression models used identical dataset splits (40,257 training molecules, 5,032 validation molecules, and 5,032 test molecules). We used batch size of 32 and consistent optimizer settings across all models: Adam optimizer with a learning rate of 0.001 and a polynomial learning rate scheduler with 1,000 warmup steps. WANet followed the same training setup described in Table~\ref{tab:experimental_settings}. The complete regression model hyperparameters are detailed in Table~\ref{tab:unimol_settings}. For Equiformer V2, Uni-Mol+, and Uni-Mol2 models, we used their original implementations.

\begin{table}[h]
    \centering
    \caption{The training details for the HOMO, LUMO, and GAP predictions.}
    \vspace{2mm}
    \resizebox{\textwidth}{!}{
    \begin{tabular}{lcccl}
        \hline
        Hyperparameter & Uni-Mol+~\citep{lu2023highly} & Uni-Mol2~\citep{ji2024uni} & Equiformer v2 & Description \\
        \hline
        Learning Rate & 0.001 & 0.001 & 0.001 &  \\
        Batch Size & 32 & 32 & 32 &  \\
        Scheduler & Polynomial & Polynomial & Polynomial & \\
        LR Warmup Steps & 1,000 & 1,000 & 1,000  & Number of steps to linearly increase the learning rate \\
        Max Steps & 300,000 & 300,000 & 300,000 & Maximum number of training steps \\
        Embedding Dimension & 768 & 768 & 128 & Dimension of the input embedding \\
        FFN Embedding Dim & 3072 & 3072 & 512 & Embedding dimension for the feed-forward network \\
        Number of Encoder Layers & 6 & 6 & 5 & Number of layers in the transformer encoder \\
        Drop Path Rate & 0.0 & 0.0 & 0.1 & Probability of dropping a path in the network \\
        Activation Function & GELU & GELU & SiLU & Activation function used in the model \\
        \hline
    \end{tabular}}
    \label{tab:unimol_settings}
\end{table}
\newpage
\section{Discussion}
\subsection{Discussion on System Energy Error}
The mean absolute error (MAE) of ~47 kcal/mol for system energy prediction is notably above the threshold of chemical accuracy (typically ~1 kcal/mol). This highlights the inherent challenges of accurately predicting system energies for large molecular systems. To our knowledge, this work is \textit{the first} to evaluate Hamiltonian predictions on system energy, which presents a non-trivial implementation challenge. Previous studies have typically focused on metrics such as cosine similarity of eigenvectors or MAE of occupied energy levels. By directly assessing system energy, our approach provides a more comprehensive and practical evaluation of Hamiltonian accuracy, which is critical for large-scale molecular simulations.

Our model demonstrates a significant improvement over baseline models in predicting system energies for these large systems, indicating enhanced prediction accuracy. The primary goal of this work is to showcase the scalability and applicability of our approach to large molecular systems, where achieving absolute precision is inherently difficult due to their complexity. Despite this, our results represent a meaningful step toward more accurate and efficient predictions in this challenging domain. The advances we have made underscore the potential of our approach to be further refined and applied across a range of complex molecular systems.

Moreover, other critical molecular properties such as the highest occupied molecular orbital (HOMO), the lowest unoccupied molecular orbital (LUMO), and dipole moments are predicted with reasonable accuracy. These properties, often vital in quantum chemistry workflows, demonstrate the practical utility of our model beyond system energy prediction. The ability to predict multiple important molecular properties with competitive error rates reinforces the broader applicability of our model to a variety of quantum chemistry and materials science applications

\subsection{Additional Conformational Energy Analysis}

In quantum chemistry applications, relative energy differences between conformational states are typically more relevant than absolute energies. To better evaluate our method's practical utility, we conducted an additional conformational energy analysis. For each molecule in our test set, we:

\begin{enumerate}
    \item Sampled 100 conformations.
    \item Computed $\Delta E$ between each conformation and a reference structure using both DFT and WANet.
    \item Calculated the MAE of these energy differences.
\end{enumerate}

For sampling, we employed Gaussian perturbation to introduce noise in the atomic positions. Starting from each molecule's initial  conformation from the test sets, we generated 100 perturbed geometries by adding random Gaussian noise ($\sigma=0.1 \text{\r{A}} $) to the atomic coordinates. The potential energy values for these perturbed conformations were computed using the B3LYP/def2-TZVP level of theory. The resulting energy standard deviation between sampled conformations ranged from 10 to around 100 kcal/mol (with average std =49.84kcal/mol), depending on the molecule's flexibility. For baseline models (included in the Table~\ref{tab:mae_gap}), the PhisNet model withWALoss achieved 9.90 kcal/mol MAE, the QHNet model with WALoss achieved 10.98 kcal/mol MAE, and the WANet model without WALoss achieved 48.92 kcal/mol MAE. QHNet without WALoss yielded MAEs of 50.56 kcal/mol. Interestingly, we observed that the model without WALoss produce meaningless prediction, demonstrating the effectiveness of WALoss. It worth noting that while more complex sampling strategy could be used, those methods are more computationally expensive as they require DFT calculations.

\begin{table}[h]
    \centering
    \caption{Relative energy differences between conformational states.}
    \vspace{2mm}
    \resizebox{0.7\textwidth}{!}{
    \begin{tabular}{lc}
    \toprule
    Model &	MAE of Energy Differences (kcal/mol) \\
        \hline
WANet w/WALoss & 	1.12 \\
PhisNet w/WALoss & 9.90\\
QHNet w/WALoss	 & 10.98\\
WANet	 & 48.92\\
QHNet & 	50.56\\
        \hline
    \end{tabular}
    }
\label{tab:mae_gap}
\end{table}

\subsection{Discussion on SCF acceleration ratio} We report an 18\% reduction in SCF cycles, which is a notable improvement, especially considering that our work targets significantly larger and more complex molecular systems compared to previous studies such as QH9 and QHNet, where SCF reductions of 18–35\% were achieved. Given the increased complexity and size of our systems, this 18\% reduction represents a considerable improvement, reflecting the effectiveness of our model in accelerating convergence for more challenging molecular simulations. This result highlights the scalability of our approach in handling large-scale quantum chemical calculations efficiently.

\end{document}